\title{Efficiency, Feasibility, and Incentive-Awareness\\ in Constrained Online Resource Allocation}
\author{%
    Yan Dai\thanks{Operations Research Center, MIT. Email: \texttt{yandai20@mit.edu}.}
    \and
    Negin Golrezaei\thanks{Sloan School of Management, MIT. Email: \texttt{golrezae@mit.edu}.}
    \and
    Patrick Jaillet\thanks{Department of EECS, MIT. Email: \texttt{jaillet@mit.edu}.}
}
\date{First version: May 2025. This version: July 2026.\footnote{A preliminary version of this work was accepted to the 39th Annual Conference on Neural Information Processing Systems (NeurIPS 2025).
Compared to that version, this version has: expanded the lireature review and positioning of our paper (\Cref{sec:introduction}); developed our methodology into a unified Incentive-Aware Primal-Dual (\algname) framework (\Cref{sec:mechanism}); extensively reorganized and refined the technical exposition (\Cref{sec:mechanism,sec:mechanism dual}), especially using \Cref{sec:vicious cycle} and \Cref{fig:primal-dual,fig:vicious cycle,fig:lazy vs no lazy}; extended our proposed mechanisms to multi-unit, multi-demand resource allocation (\Cref{sec:extensions}); and substantially expanded our numerical simulations, featuring neural Q-learning agents, a broader range of resource allocation agents, and an ablation on the \algname components (\Cref{sec:numerical illustration setup}).}}
\newcommand\scalemath[2]{\scalebox{#1}{\mbox{\ensuremath{\displaystyle #2}}}}
\let\originalmiddle=\middle
\def\middle#1{\mathrel{}\originalmiddle#1\mathrel{}}
\newcommand{\algeq}[2][]{%
  \par%
  \noindent%
  \makebox[\columnwidth]{%
    \hfill #2%
    \if\relax\detokenize{#1}\relax
      \hfill
    \else
      \hfill (\refstepcounter{equation}\theequation)\label{#1}%
    \fi
  }%
  \par%
}
\Crefname{ALG@line}{Line}{Lines}
\renewcommand{\theHALG@line}{\thealgorithm.\arabic{ALG@line}}
\Crefname{assumption}{Assumption}{Assumptions}
\Crefname{claim}{Claim}{Claims}
\Crefname{protocol}{Protocol}{Protocols}
\newtheorem{theorem}{Theorem}
\newtheorem{lemma}[theorem]{Lemma}
\newtheorem{proposition}[theorem]{Proposition}
\newtheorem{corollary}[theorem]{Corollary}
\theoremstyle{definition}
\newtheorem{definition}{Definition}
\newtheorem{assumption}[definition]{Assumption}
\newcommand{\E}{\operatornamewithlimits{\mathbb{E}}}
\newcommand{\argmax}{\operatornamewithlimits{\mathrm{argmax}}}
\newcommand{\argmin}{\operatornamewithlimits{\mathrm{argmin}}}
\newcommand{\Law}{\operatornamewithlimits{\mathrm{PDF}}}
\newcommand{\truth}{\operatorname{\textsc{Truth}}}
\newcommand{\Unif}{\operatorname{Unif}}
\renewcommand{\O}{\operatorname{\mathcal O}}
\newcommand{\Otil}{\operatorname{\tilde{\O}}}
\newcommand{\trans}{\top}
\newcommand{\mA}{\mathcal{A}}
\newcommand{\mF}{\mathcal{F}}
\newcommand{\mT}{\mathcal{T}}
\newcommand{\mH}{\mathcal{H}}
\newcommand{\mE}{\mathcal{E}}
\newcommand{\mC}{\mathcal{C}}
\newcommand{\mV}{\mathcal{V}}
\newcommand{\mB}{\mathfrak{B}}
\newcommand{\mR}{\mathfrak{R}}
\newcommand{\1}{\mathbbm{1}}
\newcommand{\algname}{\texttt{IAPD}\xspace}
\newcommand{\ftrl}{\texttt{FTRL}\xspace}
\newcommand{\oftrl}{\texttt{O-FTRL}\xspace}
\newcommand{\oftrlfp}{\texttt{O-FTRL-FP}\xspace}
\newcommand{\mech}{\texttt{IAPD.}\ftrl}
\newcommand{\mechO}{\texttt{IAPD.}\oftrlfp}
\newcommand{\primalreg}{\textsc{PrimalReg}\xspace}
\newcommand{\dualreg}{\textsc{DualReg}\xspace}
\pgfplotsset{compat=1.18}
\renewcommand{\paragraph}[1]{\vspace{2pt}\noindent\textbf{#1}}
\begin{document}

\maketitle

\begin{abstract}
\noindent We study the dynamic allocation of indivisible resources to strategic agents under long-term constraints, where the planner aims to maximize social welfare, satisfy multiple constraints, and elicit near-truthful reports.
We find standard primal-dual methods fragile in this setting: agents easily manipulate their reports to distort dual variables, sacrificing social efficiency for individual utility. To address this, we propose the Incentive-Aware Primal-Dual (\algname) framework.
On the primal side, we integrate three components to suppress manipulation: a VCG-based payment neutralizes immediate misreporting benefits, while epoch-based lazy updates and random exploration together ensure potential future gains are outweighed by immediate penalties.
On the dual side, to overcome a learning barrier due to lazy updates---which we call the ``price of incentives''---we design a novel optimistic online learning algorithm, \oftrlfp. It utilizes a fixed-point oracle to resolve the circular dependency between optimistic dual variables and the resulting allocations.
Ultimately, our mechanism attains $\tilde{\mathcal O}(\sqrt T)$ social welfare regret, satisfies all long-term constraints, and induces a near-truthful equilibrium. It also smoothly generalizes to multi-unit multi-demand allocation problems. Notably, this $\tilde{\mathcal O}(\sqrt T)$ regret near-matches the non-strategic $\Omega(\sqrt T)$ lower bound, demonstrating that incentive-awareness can be accommodated at nearly no cost.
\end{abstract}

\section{Introduction}\label{sec:introduction}
Many firms and platforms face the challenge of dynamically allocating scarce resources to stakeholders with time-varying needs, subject to long-term constraints. For instance, cloud-based research platforms increasingly rely on GPUs to support AI and scientific workloads, which must be efficiently allocated to fulfill evolving demands while respecting energy and budget constraints \citep{perez2022dynamic,chen2023cloud}; sharing economy platforms need to dynamically match users with resources to maximize revenue subject to availability \citep{benjaafar2020operations}; in healthcare, governments need to distribute ventilators or mobile health units---crucial for delivering care in underserved areas---across jurisdictions with varying needs while constrained by staffing, transportation, and capacities \citep{mehrotra2020model,alban2022resource}.

In reality, such challenges are further exacerbated by agents' strategic behavior: Self-interested agencies or users may misreport their true demand in order to secure more favorable allocations, thus undermining system efficiency. In this paper, we study the problem of dynamically reassigning scarce resources---be they GPUs, rides, or mobile health units---to strategic agents over time. In these settings, a central planner---in charge of the dynamic allocation---must maximize overall social welfare while respecting multi-dimensional long-term constraints like staffing, emissions, or budgets. Henceforth, besides the standard planner's trade-off between \emph{efficiency} (satisfying more needs) and \emph{feasibility} (adhering to all constraints), it is also crucial for the allocation mechanism to be \emph{incentive-aware} regarding agents' strategic manipulation.

In terms of balancing efficiency and feasibility alone---a research problem known as constrained online resource allocation---the \emph{online primal-dual framework} serves as a powerful tool \citep{devanur2009adwords,golrezaei2014real, molinaro2014geometry,balseiro2020dual}. It offers a principled way to handle long-term constraints while adapting to changing demand by alternating between primal and dual components in every round: The dual component maintains a set of dual variables, acting as \emph{shadow prices}, that reflects the tightness of each constraint; based on these dual variables, the primal component then makes allocations that not only generate high efficiency but also respect the constraints. However, this framework assumes that agents' \emph{true valuation} for the resource is known in advance, ignoring any strategic responses from agents.

\begin{figure}[!t]
\centering
\includegraphics[width=\linewidth]{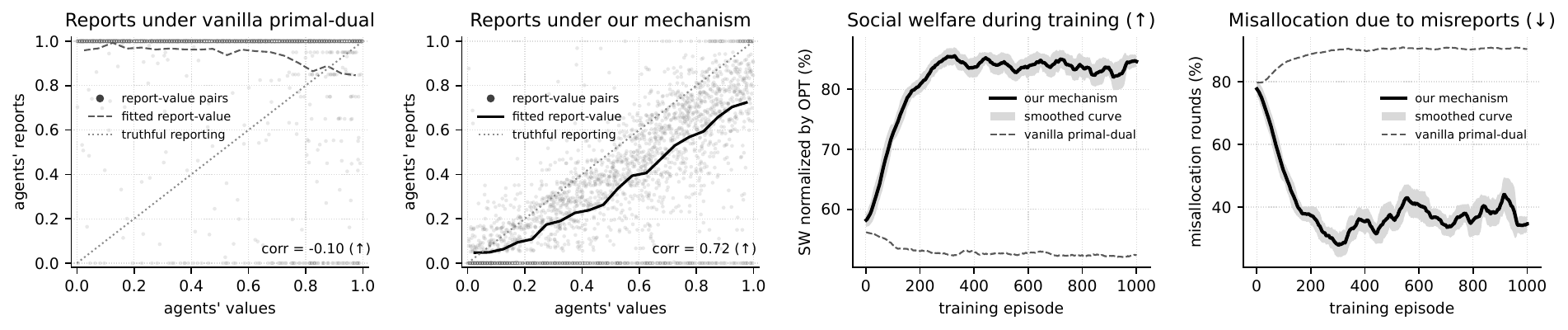}
\caption{Agents' strategic behavior over 1000 training trails in a 1000-round game. Two left panels plot agents' learned reports-versus-values under vanilla non-strategic primal-dual and under \mech, respectively, with the dotted diagonal denoting truthful reporting. The two right panels compare training dynamics of vanilla primal-dual (gray dashed) and \mech (black solid). 
Vanilla primal-dual induces misreporting and misallocation, while \mech elicits near-truthful reports, improves welfare, and reduces value lost from misallocation; 
see \Cref{sec:incentive-aware vs non-strategic,example:affect the future} for details.}
\label{fig:plot main}
\end{figure}

Indeed, as we illustrate in \Cref{fig:plot main}, online primal-dual methods are highly vulnerable to strategic agents: Such methods succeed in non-strategic cases by being highly reactive, i.e., instantly adjusting dual variables (shadow prices for constraints) based on the latest consumption. But this creates a predictable feedback loop that strategic agents can exploit. By inflating or withholding demand, agents easily manipulate future dual variables for their own interest, thus sacrificing social welfare for personal gain. This means that the online primal-dual framework---achieving optimal efficiency-feasibility trade-off with non-strategic agents---leads to severe welfare loss and resource misallocation in strategic cases.
This fragility raises a question:
\begin{center}
\textit{\textbf{(Q). When facing strategic agents who can misreport their true demand,\\is it still possible to optimally allocate resources subject to long-term constraints?}}
\end{center}

In other words, can a central planner attain efficiency, feasibility, and incentive-awareness \emph{simultaneously}? While achieving any two of these three objectives is well-studied, combining all three remains a trilemma. As discussed, standard primal-dual methods master the balance between efficiency and feasibility, but are fragile under strategic manipulation. Conversely, classical mechanism design---most notably the VCG mechanism \citep{vickrey1961counterspeculation,clarke1971multipart,groves1973incentives}---perfectly aligns individual incentives with social efficiency but lacks the capability to enforce long-term constraints. While a few recent studies explored this trilemma, they heavily rely on restrictive assumptions like homogeneous agents or specific constraints (as discussed in \Cref{sec:related work}). A general recipe for efficiency, feasibility, and incentive-awareness remains under-explored.

In this paper, we answer \textit{\textbf{(Q)}} affirmatively. We design a general algorithmic framework (named \textbf{I}ncentive-\textbf{A}ware \textbf{P}rimal-\textbf{D}ual, \algname) that recovers near-optimal social efficiency and adheres to all constraints when facing strategic agents. As we will overview in \Cref{sec:overview results}, this is achieved through two core innovations: \textit{(i)} the \algname framework that robustifies classical methods by \emph{breaking} the exploitable feedback loop between consumptions and duals, and \textit{(ii)} a novel optimistic learning method that \emph{anticipates} future consumptions to yield near-optimal duals, perfectly compensating for the delayed updates required for incentive-awareness.

Remarkably, our mechanism demonstrates that the central planner pays virtually \emph{no cost} for incentive-awareness. Our mechanism achieves $\Otil(\sqrt T)$ social welfare regret (defined as the efficiency gap between our allocations and the offline optima), which near-matches the $\Omega(\sqrt T)$ regret lower bound established for purely \emph{non-strategic} environments. We now overview our model and results in \Cref{sec:overview setup,sec:overview results}, respectively.

\subsection{Overview of Our Model}\label{sec:overview setup}

We consider a dynamic resource allocation game over $T$ rounds between a \emph{planner} and $K$ strategic and self-interested \emph{agents}. In each round, the planner allocates an indivisible resource to one agent (multi-unit multi-demand allocations are studied in \Cref{sec:extensions}), with the objective of maximizing social welfare subject to $d$ long-term \emph{constraints}. To align agents' individual incentives with this global objective, the planner utilizes the economic instrument of \emph{payments} (however, as we elaborate in \Cref{sec:mechanism}, long-term constraints inherently create inter-round dependencies, and standard single-round payment rules like VCG are insufficient).

Specifically, in each round $t=1,2,\ldots,T$, each agent $i=1,2,\ldots,K$ has a private value $v_{t,i}\in [0,1]$ for the resource, which is an independent sample from a fixed distribution $\mV_i$.
For the planner, allocating to agent $i$ incurs a consumption vector of $\bm c_{t,i}\in [0,1]^d$, which is independently sampled from another fixed distribution $\mC_i$.
All the distributions $\mV_i$ and $\mC_i$ are \emph{fixed but unknown} to the planner. In such a prior-free setting, the planner must learn the agents' type distributions on the fly based on their strategic feedback.

The planner announces a \emph{mechanism} $\bm M$---a collection of allocation and payment rules for each round---before the game.
The game proceeds as follows. In each round $t$, each agent $i$ observes their own private value $v_{t,i}$ and all agents' consumptions $(\bm c_{t,j})_{j=1}^K$.
Maximizing their expected discounted sum of value-minus-payment, agent $i$ strategically submits a \emph{report} $u_{t,i}$ that can differ from $v_{t,i}$. This happens simultaneously for all agents.
The planner, observing only the reports $\bm u_t=(u_{t,i})_{i}$ and consumptions $\bm c_t=(\bm c_{t,i})_{i}$, irrevocably allocates the resource to an agent $i_t$ and charges them a payment $p_{t,i_t}$ according to the announced mechanism $\bm M$.
The equilibrium concept we study is the {Perfect Bayesian Equilibrium} (PBE).
With formal setups and definitions detailed in \Cref{sec:setup}, on a high level, a desirable mechanism $\bm M$ should satisfy three properties:
\begin{enumerate}
\item \textbf{Efficiency.} The cumulative social welfare, $\E[\sum_{t=1}^T v_{t,i_t}]$, should be close to the offline optimum defined in \Cref{eq:offline optimal benchmark}. We call this difference the \emph{social welfare regret}, which is formalized in \Cref{eq:regret} of \Cref{sec:setup}.
\item \textbf{Feasibility.} The total consumption $\sum_{t=1}^T \bm c_{t,i_t}$ cannot exceed the pre-determined budget of $T\bm \rho$.
\item \textbf{Incentive-Awareness.} The mechanism should be robust to strategic manipulation. Specifically, it should induce a PBE where agents' reports are near-truthful, i.e., $\max_i\lvert u_{t,i} - v_{t,i}\rvert\le 1/\text{poly}(T)$ for $\Omega(T)$ rounds.
\end{enumerate}

\subsection{Overview of Our Results}\label{sec:overview results}
In this paper, we resolve the trilemma between efficiency, feasibility, and incentive-awareness by proposing the \textbf{I}ncentive-\textbf{A}ware \textbf{P}rimal-\textbf{D}ual (\algname) framework, thus answering \textbf{\textit{(Q)}} affirmatively.
Far from merely applying VCG payments to standard primal-dual methods, our \algname framework (and \algname-based mechanisms) introduces novel algorithmic designs to handle the complex interplay between learning and incentives:
\begin{enumerate}
\item \textbf{Blocking the Feedback Loop via Epoching.}
Standard primal-dual algorithms are vulnerable because they react {instantaneously} to reports, allowing agents to manipulate future prices via current actions. To mitigate this, we introduce \emph{epoch-based lazy updates}, where dual variables (the shadow prices) are frozen for intervals of time. This isolation decouples an agent's current report from immediate price fluctuations, allowing us to design payment rules---specifically, a \emph{dual-adjusted allocation and payment rule}, and a \emph{randomized exploration rule}---that align agents' individual incentives with the social welfare. However, this comes at a cost: Less frequent updates inherently slow down the learning of shadow prices, creating a theoretical {efficiency barrier}. This requires the second algorithmic innovation, detailed below.
\item \textbf{Restoring Efficiency via Predictability.}
To compensate for the efficiency loss due to infrequent updates, we switch from \emph{reacting} to the history to \emph{predicting} the future. Specifically, by exploiting the statistical predictability of demand produced by near-truthful agents, for any potential dual variable, we are able to predict the induced consumptions \emph{before} actually deploying it. While resonating with the optimistic online learning literature, there is a unique circular dependency between dual variables and consumptions: Predicting future consumptions requires knowing the dual being used, but deciding a near-optimal dual needs these predicted consumptions. To this end, we introduce a fixed-point-oracle-based online learning method called \oftrlfp---that can be of independent interest---to restore the near-optimal efficiency.
\end{enumerate}

Equipped with these algorithmic innovations, we develop two mechanisms based on the \algname framework and  establish their theoretical guarantees in terms of efficiency, feasibility, and incentive-awareness. These mechanisms offer a trade-off between algorithmic simplicity and theoretical optimality:

\paragraph{The Baseline Mechanism (\mech).}
In \Cref{thm:main theorem FTRL}, we first deploy the classical \ftrl algorithm as the dual-update subroutine. Although \ftrl is not optimized for the lazy updates and delayed feedback in \algname, we still yield a mechanism \mech inducing a Perfect Bayesian Equilibrium (PBE) $\bm \pi$ satisfying:
\begin{enumerate}
\item \textbf{Efficiency.} The social welfare regret incurred by $\bm \pi$ under \mech is bounded by $\Otil(T^{2/3})$. This implies that the per-round social welfare converges to the constrained optimum as $T\to \infty$.
\item \textbf{Feasibility.} All the long-term constraints are strictly satisfied with probability 1.
\item \textbf{Incentive-Awareness.} Agents only misreport at most $\Otil(T^{-1/3})$ in all but $\O(KT^{2/3})$ rounds.
\end{enumerate}

\paragraph{The Near-Optimal Mechanism (\mechO).}
We identify that the $\Otil(T^{2/3})$ regret is due to a fundamental limit resulted from the slow updates. Since epoching is necessary for incentive-awareness, we call the $\Omega(T^{2/3})$ barrier a \emph{price of incentives} (see more details in \Cref{sec:price of incentives}). To overcome it, we design a novel online learning method, \oftrlfp, which \textit{(i)} exploits the \emph{predictability} of near-truthful agents to accelerate learning, and \textit{(ii)} resolves a \textit{circular dependency} in our dynamic mechanism design problem by introducing fixed-point oracles.
The resulting \mechO mechanism---as shown in \Cref{thm:main theorem O-FTRL-FP}---improves the social welfare regret to $\Otil(\sqrt T)$, while still preserving feasibility and incentive-awareness.

Remarkably, $\Omega(\sqrt T)$ serves as the fundamental regret lower bound for online resource allocation \emph{even in non-strategic settings} \citep[Theorem 1]{arlotto2019uniformly}.
Therefore, our result demonstrates that the planner can achieve incentive-awareness at a \emph{minimal cost}: The regret of \mechO near-matches (up to $\Otil(1)$) that of non-strategic resource allocation algorithms, while additionally being incentive-aware.

\paragraph{Multi-Unit Resource Allocation.}
While the majority of this paper studies the single-unit resource allocation (i.e., the planner has one single indivisible resource in each round), in \Cref{sec:extensions}, we generalize to multi-unit multi-demand resource allocations. In that setting, the planner has multiple identical resources for allocation, and each agent has diminishing marginal values for resources. In \Cref{thm:multi-unit,thm:multi-demand}, we prove that, with a minimal amount of modifications, our mechanisms seamlessly maintain their $\Otil(T^{2/3})$ and $\Otil(\sqrt T)$ regret guarantees while remaining feasible and incentive-aware.

\subsection{Related Literature}\label{sec:related work}
\paragraph{Constrained Online Resource Allocation with Non-Strategic Agents.}
In non-strategic settings, constrained online resource allocation has been extensively studied. With perfectly known value and consumption distributions, adaptive policies can attain constant regret across a spectrum of problems. These range from the multi-secretary problem \citep{arlotto2019uniformly}---a special case with one-dimensional constraints---to multi-dimensional online packing \citep{vera2021online} or network revenue management \citep{sun2020near}---roughly equivalent to our constrained online resource allocation problem---and to more challenging setups like online bin packing \citep{banerjee2025good} or resource allocation with stochastically replenishing budgets \citep{vera2025dynamic}.
However, in our prior-free setup where distributions are unknown, even if agents remain non-strategic, an $\Omega(\sqrt T)$ regret lower bound is unavoidable \citep[Lemma 1]{arlotto2019uniformly}.

With unknown distributions, a rich body of online primal-dual algorithms has been developed to achieve the optimal $\O(\sqrt T)$ regret. \citet{li2023simple} consider improved computational efficiency of primal-dual under the i.i.d. model (all values and consumptions come from a fixed but unknown distributions). Several works study the more challenging secretary model, where an adversary decides all requests that are then randomly shuffled \citep{devanur2009adwords,feldman2010online,agrawal2014dynamic,gupta2016experts}; see the chapters by \citet{gupta2020random} and \citet{devanur2023online}.
The adversarial model \citep{balseiro2020dual} and the single-sample prophet model \citep{ghuge2025single} have also been studied, where $\Omega(T)$ regret is unavoidable and algorithms focus on the competitive ratio.
Another related problem is bandits with knapsacks, where the planner makes allocations without knowing the values or consumptions but rather learns them via post-decision feedback \citep{badanidiyuru2018bandits,sankararaman2018combinatorial}. This problem has been extended to adversarial models as well \citep{castiglioni2022online,immorlica2022adversarial}.

None of these works, however, consider the strategic behavior of agents: In these works, the values and consumptions are truthfully observable, either before or after the allocation decisions. In contrast, our main focus is to be \emph{incentive-aware} regarding agents' strategic manipulation while remaining efficient and feasible.

\paragraph{Online Resource Allocation with Strategic Agents.}
This stream of research focuses on resource allocation to strategic agents, aligning with our objectives of efficiency and incentive-awareness.
In static (single-round) environments, the VCG mechanisms \citep{vickrey1961counterspeculation,clarke1971multipart,groves1973incentives} provide a foundational framework for welfare maximization. Our setup---in addition to long-term constraints---deviates from these classics in two ways: First, we operate under the money-burning setting where payments cannot be redistributed \citep{hartline2008optimal,hoppe2009theory,condorelli2012money}. Second, we do not assume known value distributions, thus falling into the category of robust mechanism design \citep{wilson1987game,bergemann2005robust}. We refer the readers to \citet{parkes2007online} for a survey on online mechanism design.

With unknown value distributions, a closely related area is learning prices in repeated auctions. Various objectives have been studied, e.g., seller's revenue maximization facing strategic buyers \citep{amin2013learning,amin2014repeated,kanoria2014dynamic,cesa2014regret,mohri2014learning}, eliciting buyers' valuations from strategic reports \citep{braverman2018selling,golrezaei2021dynamic,golrezaei2023incentive, golrezaei2023pricing}, or buyers' bidding strategies under budget or return-on-investment (ROI) constraints \citep{balseiro2019learning, deng2023multi, castiglioni2024online,agrawal2024dynamic}. We direct the readers to \citet{nedelec2022learning} for a survey. We, on the other hand, study the different task of maximizing social welfare under \emph{planner's} constraints.

We also discuss a more challenging setting where monetary transfers are completely disallowed. In static setups, incentive-compatibility is generally impossible due to Arrow's impossibility theorem \citep{arrow1950difficulty,gi73,s75}, though positive results exist under specific utility structures \citep{miralles2012cardinal,guo2010strategy,han2011strategy,cole2013positive}.
In repeated settings, many recent works study non-monetary mechanisms for efficient resource allocations. They usually make much stronger assumptions than ours, for example known value distributions \citep{balseiro2019multiagent,gorokh2021monetary,blanchard2024near,galgana2026moneyless}, pre-determined fair share for agents \citep{gorokh2019remarkable,yin2022online,banerjee2023robust,fikioris2025beyond,onyeze2025allocating}, or extra verification power of the planner \citep{ben2014optimal,jalota2024catch,dai2025non}. Additionally, while these works often face no constraints or simplified fairness-based constraints, we study the general multi-dimensional constraints.

\paragraph{Constrained Online Resource Allocation with Strategic Agents.}
To our knowledge, the only attempt to attain efficiency, feasibility, and incentive-awareness in online resource allocation is by \citet{yin2022online}.
The first main difference is that they assume all the agents have \emph{identical} value distributions, which is crucial for their incentive-awareness: by comparing one agents' reports to the opponents', unilateral deviation from truth-telling is easily identifiable, which ensures incentive-awareness even without monetary transfers. Without homogeneity, we need more delicate algorithmic components, including dual-adjusted pricing, epoch-based lazy updates, and randomized exploration rounds to attain incentive-awareness.
Another main difference is the type of constraints. They study a specific type of ``fair share'' constraints, which says given some distribution $\bm p\in \triangle([K])$ across the agents, the number of allocations that each agent $i=1,2,\ldots,K$ receives should be $T p_i$. On the other hand, our multi-dimensional long-term constraint is strictly general.

We note that incentive-aware mechanisms have long been established in the context of multi-choice and matroid secretary problems \citep{kleinberg2005multiple,babaioff2007matroids}. These works operate under the random permutation model---where linear regret is unavoidable and the focus is a constant competitive ratio---and address combinatorial or single-capacity constraints. However, ensuring incentive-awareness while balancing general multi-dimensional constraints to achieve sublinear regret remains a non-trivial challenge.

\paragraph{Multi-Agent Learning.}
When a planner learns for better allocation mechanisms, the agents may also be learning to react (e.g., in our numerical illustration in \Cref{fig:plot main}, we let agents use Q-learning to learn their best reporting strategies under different mechanisms).
There is a rich literature studying the learning dynamics of agents in games. For example, \citet{balseiro2019learning,golrezaei2020no,dai2024refined,berriaud2024spend}, and \citet{galgana2025learning} characterized the convergence to equilibria when multiple agents deploy no-regret online learning algorithms at the same time in reaction to a fixed and known mechanism. While such results are stronger than our ``existence of equilibrium'' results in the sense that agents find such an equilibrium on their own, we remark that the main focus of this work is designing efficient, feasible, and incentive-aware mechanisms for the \emph{planner}, instead of algorithms for the agents.

\section{Problem Formulation}\label{sec:setup}
\paragraph{Notations.}
For integer $n\ge 1$, let $[n] := \{1,2,\ldots,n\}$. For set $\mathcal X$, let $\triangle(\mathcal X)$ be the set of all probability distributions over $\mathcal X$. We use boldface letters (e.g., $\bm v$) to denote vectors and non-bold letters (e.g., $v_i$) for their components.
$\O(\cdot)$ and $\Omega(\cdot)$ hide absolute constants, and $\Otil(\cdot)$ further hides polylogarithmic factors.

We consider the problem of dynamically reallocating an indivisible resource (multi-item cases are discussed in \Cref{sec:extensions}) over $T>0$ rounds from a central planner to $K>0$ strategic and self-interested agents, indexed by $1, 2, \ldots, K$. In each round $t=1,2,\ldots,T$, the planner irrevocably allocates the resource to one of the agents, aiming to maximize social welfare while satisfying $d$ long-term constraints simultaneously.

\subsection{Agents' Values and Costs, Planner's Allocations and Payments}\label{sec:setup values and consumptions}
In each round $t \in [T]$, agent $i \in [K]$ has a private scalar value $v_{t,i} \in [0,1]$ for the resource and a public $d$-dimensional consumption vector $\bm{c}_{t,i} \in [0,1]^d$. Consequently, allocating the resource to agent $i$ in round $t$ yields a value of $v_{t,i}$ to the agent and consumes $c_{t,i,j}$ units along each dimension $j=1,2,\ldots,d$.
We assume that values and consumptions are \emph{independently} generated across agents and rounds, but the underlying distributions are unknown to the planner. Specifically, we assume $v_{t,i}$ and $\bm{c}_{t,i}$ are drawn i.i.d. from some \emph{fixed but unknown} distributions $\mathcal{V}_i \in \triangle([0,1])$ and $\mathcal{C}_i \in \triangle([0,1]^d)$, respectively, for all $t$ and $i$.

Every agent $i\in [K]$, after observing their own private value $v_{t,i}$ and everyone's consumptions $\bm c_t$, strategically submits a report $u_{t,i}\in [0,1]$ to the planner. We defer the generation rule of reports to \Cref{sec:setup policy}.
After observing agents' strategic reports $\bm u_{t}$ and consumption vectors $\bm c_{t}$ (but without access to the true values $\bm v_{t}$), the planner either irrevocably allocates the resource to one of the agents $i_t\in [K]$ or forfeits it for this round, in which case we write $i_t=0$. Conventionally, $v_{t,0}=0$ and $\bm c_{t,0}=\bm 0$ for all $t\in [T]$, i.e., forfeiting the item does not generate any value nor consume anything. Assuming a null action ensures the existence of a feasible solution and is common in the resource allocation literature \citep[see, e.g.,][]{balseiro2020dual}.
After the allocation, the planner decides a payment to be charged from the winner $i_t$, denoted by $p_{t,i_t}\in \mathbb R$. For all remaining agents $i\ne i_t$, the payment $p_{t,i}=0$. The allocation and payment rule is also in \Cref{sec:setup policy}.

\subsection{History, Planner's Mechanism, and Agents' Strategies}\label{sec:setup policy}
At the beginning of round $t \in [T]$, the public history available to the planner includes all historical reports $\bm u_\tau$, consumption vectors $\bm c_\tau$, allocations $i_\tau$, and payment vectors $\bm p_\tau$, where $\tau=1,2,\ldots,t-1$. Formally, we denote the public history available to the planner in round $t\in [T]$ by $\mathcal{H}_{t,0} := \{(\bm{u}_\tau, \bm{c}_\tau, i_\tau, \bm p_\tau)\}_{\tau < t}$.

Each agent $i \in [K]$ additionally has access to their own historical values, namely $v_{\tau,i}$ where $\tau=1,2,\ldots,t-1$. We also assume that all agents know each other's value and consumption distribution. This assumption, as is standard in robust mechanism design \citep{wilson1987game,bergemann2005robust}, not only serves as a technical necessity to define Perfect Bayesian Equilibria (PBE), but also captures a practical challenge of information asymmetry: Market participants (agents) often possess localized knowledge about the industry and their peers, whereas the platform (planner) usually acts as an outsider and lacks prior information on agents.
Formally, the private history available to agent $i$ in round $t$ is $\mathcal{H}_{t,i} := \mathcal{H}_{t,0} \cup \{v_{\tau,i}\}_{\tau < t} \cup \{(\mathcal{V}_j, \mathcal{C}_j)\}_{j \in [K]}$.

\begin{algorithm}[t!]
\floatname{algorithm}{Protocol}
\caption{Dynamic Resource Allocation Game}
\label[protocol]{alg:protocol}
\begin{algorithmic}[1]
\Require {Number of rounds $T$, number of agents $K$, value distributions $\{ \mathcal{V}_i \}_{i \in [K]}$, cost distributions $\{ \mathcal{C}_i \}_{i \in [K]}$, planner's mechanism $\bm{M} = (M_t)_{t \in [T]}$, and agents' joint strategy $\bm{\pi} = (\pi_{t,i})_{t \in [T],\, i \in [K]}$.}

\State Initialize public history $\mathcal{H}_{1,0} \gets \varnothing$ and private histories $\mathcal{H}_{1,i} \gets \{(\mathcal{V}_j, \mathcal{C}_j)\}_{j \in [K]}$, $\forall i\in [K]$.

\For{each round $t = 1, 2, \dots, T$}
\State Each agent $i$ observes value $v_{t,i} \sim \mathcal{V}_i$ and consumption $\bm{c}_{t,i} \sim \mathcal{C}_i$.
\State Each agent $i$ submits report $u_{t,i} \sim \pi_{t,i}(v_{t,i}, \bm{c}_t; \mathcal{H}_{t,i})$.
\State Planner applies mechanism $(i_t, \bm p_t) \sim M_t(\bm{u}_t, \bm{c}_t; \mathcal{H}_{t,0})$.
\State Update public history: add $(\bm{u}_t, \bm{c}_t, i_t, p_{t,i_t})$ to $\mathcal{H}_{t+1,0}$.
\State Each agent $i$ updates private history: add $v_{t,i}$ and $(\bm{u}_t, \bm{c}_t, i_t, p_{t,i_t})$ to $\mathcal{H}_{t+1,i}$.
\EndFor
\end{algorithmic}
\end{algorithm}

In each round $t \in [T]$, the planner determines the allocation and payment $(i_t,\bm p_t)$ based on agents' current-round reports $\bm{u}_t$, consumption vectors $\bm{c}_t$, public history $\mathcal{H}_{t,0}$, and possibly some internal randomness used to break ties or randomize decisions. We write $i_t=0$ and $\bm p_t=\bm 0$ when the allocation is forfeited.
We denote the mapping from $(\bm u_t,\bm c_t,\mH_{t,0})$ to $(i_t,\bm p_t)$ by $M_t$, which may be randomized. We write $(i_t,\bm p_t)\sim M_t(\bm u_t,\bm c_t;\mH_{t,0})$ be a realization of the randomized allocation and payment given reports, consumptions, and history. We call the collection of all allocation and payment rules $\bm{M} = (M_t)_{t \in [T]}$ the planner's \textit{mechanism}.
We assume the planner has full power of commitment to any mechanism $\bm M$. Since it is always to the best interest of the planner to announce the commited mechanism \citep[see, e.g.,][]{borgers2015introduction}, we assume $\bm M$ is public information.

For each agent $i \in [K]$ and round $t\in [T]$, their report $u_{t,i}$ is determined based on their private value $v_{t,i}$, all agents' consumptions $\bm{c}_t$, their private history $\mathcal{H}_{t,i}$, and some randomness. Similar to the planner's case, we write $u_{t,i}\sim \pi_{t,i}(v_{t,i},\bm c_t;\mH_{t,i})$ as a realization of the randomized report.
All $\pi_{t,i}$'s collectively form agent $i$'s \emph{strategy} $\bm{\pi}_i := (\pi_{t,i})_{t \in [T]}$, and all agents' strategies together constitute a \emph{joint strategy} $\bm{\pi} := (\bm{\pi}_i)_{i \in [K]}$.
We summarize the overall interaction process between the planner and agents in \Cref{alg:protocol}. 

\subsection{Agents' Utility and Perfect Bayesian Equilibrium}\label{sec:setup agents}
To model agents'  behavior in a dynamic environment, we adopt the $\gamma$-impatient agent model by \citet{amin2013learning} and \citet{golrezaei2021dynamic,golrezaei2023incentive}, which captures the idea that agents often prioritize immediate rewards over long-term gains---due to bounded rationality, uncertainty about future rounds, or limited planning horizons---while the planner is more patiently optimizing the long-run social welfare.
\begin{assumption}[{$\gamma$-Impatient Agents}]\label{def:impatient}
For some fixed but unknown constant $\gamma\in (0,1)$, every agent $i\in [K]$ is \emph{$\gamma$-impatient} in the sense that they maximize their $\gamma$-discounted utility (defined in \Cref{eq:impatient}). Specifically, under agents' joint strategy $\bm \pi$ and planner mechanism $\bm M$, the $\gamma$-discounted utility of any agent $i\in [K]$ is
\begin{equation}\begin{aligned}\label{eq:impatient}
V_i^{\gamma}(\bm \pi;\bm M):=\E_{\text{\Cref{alg:protocol}}}\left [\sum_{t=1}^T \gamma^t (v_{t,i}-p_{t,i}) \1[i_t=i]\right ].
\end{aligned}\end{equation}
\end{assumption}

The parameter $\gamma$ essentially controls the ``farsightedness'' of agents: 
When $\gamma$ decreases, agents myopically optimize their immediate utility in each round without looking into the future; when $\gamma$ increases, agents instead maximize their undiscounted utility over the remainder of the game, as typical in extensive-form games. As a remark, we do \emph{not} require the constant $\gamma\in (0,1)$ to be known to the planner.

Having specified agents' utilities, we now define the Perfect Bayesian Equilibrium (PBE) of agents:
\begin{definition}[Perfect Bayesian Equilibrium]\label{def:PBE}
Fix any mechanism $\bm M$. Agents' joint strategy $\bm \pi$ is a Perfect Bayesian  Equilibrium (PBE) under $\bm M$, if any single agent's unilateral deviation from $\bm \pi$ does not increase their own $\gamma$-discounted utility $V_i^\gamma$. Formally, a joint strategy $\bm \pi$ is a PBE under mechanism $\bm M$ if
\begin{equation*}\begin{aligned}
&V_i^\gamma(\bm \pi;\bm M)\ge V_i^\gamma (\bm \pi_i'\circ \bm \pi_{-i};\bm M),\quad \forall i\in [K],\forall\bm \pi_i',
\end{aligned}\end{equation*}
where $\bm \pi_i'\circ \bm \pi_{-i}$ denotes the unilaterally deviated strategy: $\bm \pi_i'\circ \bm \pi_{-i}:=(\bm \pi_1,\ldots,\bm \pi_{i-1},\bm \pi_i',\bm \pi_{i+1},\ldots,\bm \pi_K)$.
\end{definition}

\subsection{Planner's Regret and Design Objectives}\label{sec:setup planner}
The benevolent planner, in contrast to agents who are self-interested and short-sighted, is patiently optimizing the cumulative \emph{social welfare} (expected total value yielded from allocations $\E[\sum_{t=1}^T v_{t,i_t}]$) while satisfying the $d$ long-term constraints $\frac{1}{T} \sum_{t=1}^T \bm{c}_{t,i_t} \le \bm{\rho}$.
We compare the planner's allocations against the \emph{offline optimal benchmark}, which performs a hindsight optimization using agents’ true values and consumptions:
\begin{equation}\begin{aligned}
\{i_t^\ast\}_{t \in [T]} := &\argmax_{i_1,\ldots,i_T \in \{0\} \cup [K]} \sum_{t=1}^T v_{t,i_t} \qquad \text{subject to} \quad \frac{1}{T} \sum_{t=1}^T \bm{c}_{t,i_t} \le \bm{\rho}. \label{eq:offline optimal benchmark}
\end{aligned}\end{equation}

We remark that the $\{i_t^\ast\}_{t\in [T]}$ benchmark in \Cref{eq:offline optimal benchmark} depends on the full sequence of values $\{\bm{v}_t\}_{t\in [T]}$ and consumptions $\{\bm{c}_t\}_{t\in [T]}$ that are \textit{(i)} not realized until the end of the game, and \textit{(ii)} not observable to the planner even in hindsight. This distinguishes it from typical online learning benchmarks, which commit to a fixed policy before the game (see \citealt{balseiro2020dual} for a related discussion).
Since it relies on information unavailable at decision time, it cannot be matched exactly by an online mechanism. Instead, we evaluate a mechanism $\bm M$ by its \emph{social welfare regret} (or \emph{regret} in short) w.r.t. the offline optimal benchmark $\{i_t^\ast\}_{t\in [T]}$:
\begin{equation}\label{eq:regret}
\mR_T(\bm{\pi}, \bm{M}) := \E_{\text{\Cref{alg:protocol}}} \left[ \sum_{t=1}^T \left(v_{t,i_t^\ast} - v_{t,i_t}\right) \right],
\end{equation}
where the expectation is taken over the randomness in the planner's mechanism $\bm M$, agent strategies $\bm \pi$, and the realizations of values and consumptions.
We also evaluate $\bm M$ by its cumulative \emph{constraint violations}:
\begin{equation}\label{eq:constr violation}
\mB_T(\bm{\pi}, \bm{M}) := \E_{\text{\Cref{alg:protocol}}} \Bigg[ \Big\lVert \Big( \sum_{t=1}^T (\bm{c}_{t,i_t} - \bm{\rho}) \Big)_+ \Big\rVert_1 \Bigg],
\end{equation}
where $(\cdot)_+$ is the coordinate-wise maximum with zero, i.e., $\bm x_+:=[\max(x_i,0)]_{i\in [d]}$.

We are now able to formalize the three design objectives informally mentioned in \Cref{sec:overview setup}: we aim to design a mechanism $\bm M$ for the planner, such that there exists a joint strategy $\bm \pi$ of agents ensuring:
\begin{enumerate}
\item \textbf{Efficiency.} The social welfare regret $\mR_T(\bm \pi,\bm M)$ should be as small as possible, ideally $\Otil(\sqrt T)$.
\item \textbf{Feasibility.} The constraint violation $\mB_T(\bm \pi,\bm M)$ should be zero, i.e., all constraints must be feasible.
\item \textbf{Incentive-Awareness.} This $\bm \pi$ should be a PBE under $\bm M$, as defined in \Cref{def:PBE}.
\end{enumerate}

Compared to the non-strategic resource allocation setting studied by \citet{balseiro2020dual}, the third objective of \emph{incentive-awareness} is new. Specifically, in their setting, agents \emph{always} truthfully report $u_{t,i} = v_{t,i}$ (denoted by $\bm \pi=\truth$). Under this truth-telling assumption, their mechanism $\bm M^{\truth}$ achieved efficiency $\mR_T(\truth, \bm M^{\truth}) = \Otil(\sqrt T)$ and feasibility $\mB_T(\truth,\bm M^{\truth})=0$.
However, as we numerically demonstrated in \Cref{fig:plot main}, should agents be strategic they will deviate from $\truth$ to an \emph{over-reporting} strategy. In other words, $\truth$ fails to be a PBE under their mechanism $\bm M^{\truth}$, i.e., $\bm M^{\truth}$ is \emph{not} incentive-aware.

In contrast, based on our proposed Incentive-Aware Primal-Dual (\algname) framework, our mechanism \mechO (see \Cref{sec:mechanism}) attains the three objectives \emph{at the same time}: it induces a \emph{PBE strategy} $\bm{\pi}$ such that no agent benefits from unilateral deviation---thus \emph{incentive-aware}---under which we still achieve \emph{efficiency} $\mR_T(\bm \pi,\mechO)=\Otil(\sqrt T)$ and \emph{feasibility} $\mB_T(\bm \pi,\mechO)=0$.

We conclude this section with a smoothness assumption on the (unknown) consumption distributions. The idea is to ensure that projected consumptions---i.e., linear combinations of the consumption vector---do not place excessive probability mass on any single value. This smoothness condition prevents pathological behaviors where a small change in an agent's report could drastically alter outcomes due to spiky distributions. Such assumptions are common in strategic settings to ensure robustness to perturbations, including in bilateral trades \citep{cesa2024regret}, first-price auctions \citep{cesa2024role}, second-price auctions with reserves \citep{golrezaei2021dynamic}, and smoothed revenue maximization \citep{durvasula2023smoothed}.

\begin{assumption}[Smooth Consumptions]\label{assump:smooth consumptions}
For any agent $i \in [K]$, the consumption distribution $\mathcal{C}_i$ satisfies the following \emph{smoothness assumption}: for all $\bm{\lambda} \in \bm{\Lambda} := \{ \bm{\lambda} \in \mathbb{R}^d \mid \lambda_j \in [0, \rho_j^{-1}] \}$, the density function of the random variable $\bm{\lambda}^\trans \bm{c}_i$ (i.e., projected consumption) is uniformly bounded by a universal constant $\epsilon_c>0$.
\end{assumption}

Similar to the $\gamma$ in \Cref{def:impatient}, we do not assume the constant $\epsilon_c$ to be known to the planner.

\section{Primal Design: Incentive-Aware Primal-Dual Framework}\label{sec:mechanism}
In this section, we present our Incentive-Aware Primal-Dual (\algname) framework. It is a generic algorithmic framework that makes efficient allocations while screening strategic agents; we detail the design in \Cref{sec:mechanism primal}.
In addition, the \algname framework requires a dual-update subroutine that sets dual variables based on the consumptions and budgets; we defer the design of such dual-update subroutines to \Cref{sec:mechanism dual}.
Before introducing \algname, we begin by dissecting the failure of standard primal-dual when facing strategic agents.

\subsection{Challenge in Strategic Settings: Direct and Indirect Impact of Misreports}\label{sec:vicious cycle}

\begin{figure}[htb]
\centering
\begin{tikzpicture}[
auto,
every node/.style={scale=0.7},
block/.style={rectangle, draw, fill=blue!10, text width=5cm, minimum height=1.5cm, align=center, rounded corners},
arrow/.style={->, thick}
]
\node[block, fill=black!5!white] (primal) {
\textbf{Primal} (Make Allocations) \\
{$i_t\gets \argmax_i (v_{t,i}-\bm \lambda_t^\trans \bm c_{t,i})$}
};
\node[block, fill=black!5!white, right=4cm of primal] (dual) {
\textbf{Dual} (Track Consumptions) \\
{$\bm \lambda_{t+1}\gets \text{DualUpd}(\bm\lambda_t,\bm c_{t,i_t})$}
};

\draw[arrow] (dual.175) to[bend right=10] node[midway, above] {Dual $\bm{\lambda}_t\in \mathbb R_{\ge 0}^d$} (primal.5);
\draw[arrow] (primal.355) to[bend right=10] node[midway, below] {Allocation $i_t\in [K]$} (dual.185);

\draw[->, thick] ([xshift=-3.5cm]primal.north) -- ([xshift=-3.5cm]primal.south) node[midway, right=0.3cm, rotate=270, anchor=center, align=center] {$t=1,2,\ldots,T$};
\end{tikzpicture}
\caption{Standard primal-dual framework for efficiency and feasibility.}
\label{fig:primal-dual}
\end{figure}

In non-strategic settings, the primal-dual framework is the prominent paradigm for online resource allocation. We illustrate its main idea in \Cref{fig:primal-dual}: Throughout the game, the framework maintains a dual variable $\bm \lambda_t\in \mathbb R_{\ge 0}^d$ that serves as a shadow price for cumulative consumptions. That is, for each consumption dimension $j\in [d]$, $\lambda_{t,j}$ increases if the budget is depleted, and decreases otherwise. It alternates between two components:
\begin{itemize}
\item \textbf{Primal Component.} Allocate to the agent maximizing the \emph{dual-adjusted value} $v_{t,i} - \bm \lambda_t^\trans \bm c_{t,i}$. This dual-adjustment component essentially penalizes agents who consume heavily on those scarce dimensions.
\item \textbf{Dual Component.} Find the new dual variable $\bm \lambda_{t+1}$ based on the realized consumption $\bm c_{t,i_t}$, aiming to ensure long-term feasibility without degrading the efficiency. The update rule $\bm \lambda_{t+1}\gets \text{DualUpd}(\bm \lambda_t,\bm c_{t,i_t})$ can be instantiated by various online learning algorithms (for instance, Online Gradient Descent).
\end{itemize}

\begin{figure}[htb]
\centering
\begin{tikzpicture}[every node/.style={scale=0.7, text width=5cm}, node/.style={rectangle, rounded corners, draw=black, fill=black!5!white, minimum height=1cm, text centered}, arrow/.style={->, draw=black!70!white, line width=0.8pt}]
\node[node] (manipulate) {Misreport $u_{t,i}\ne v_{t,i}$};
\node[node, below=0.7cm of manipulate] (alter) {Alter allocation $i_t$};
\draw[arrow] (manipulate) to node[auto, font=\bfseries] {Direct Impact} (alter);
\node[node, right=0.8cm of manipulate] (distort) {Distort dual $\bm \lambda_{t+1}$};
\draw[arrow] (manipulate) to (distort);
\node[node, right=0.8cm of distort] (consequence) {Affect future allocation $i_{t+1}$};
\draw[arrow] (distort) to (consequence);
\draw[arrow] (consequence.190) to[out=-130, in=-30, looseness=0.4] node[midway, below, font=\bfseries] {\color{black}Indirect Impact} (manipulate.350);
\end{tikzpicture}
\caption{Direct and indirect impact of a misreport in round $t$.}
\label{fig:vicious cycle}
\end{figure}

This design balances efficiency and feasibility when reports are truthful, ensuring $\Otil(\sqrt T)$ social welfare regret while adhering to constraints \citep[see, e.g., Theorem 1 of][]{balseiro2020dual}. However, when agents are strategic, the primal-dual feedback loop---i.e., the dual $\bm \lambda_{t+1}$ reacting immediately to the consumption $\bm c_{t,i_t}$---becomes a vulnerability. As shown in \Cref{fig:vicious cycle}, a strategic misreport $u_{t,i} \ne v_{t,i}$ creates \emph{two} impacts:
\begin{enumerate}
\item \textbf{Direct Impact.} It alters the current allocation $i_t$ by changing the $\argmax$. This is a classical mechanism design problem that can typically be resolved via appropriately designed allocation and payment rules.
\item \textbf{Indirect Impact.} The altered allocation $i_t$ leads to a distinct consumption $\bm c_{t,i_t}$, which in turn distorts the dual update $\bm \lambda_{t+1}$. This affects all future allocations and payments, and \emph{cannot} be isolated ex-post.
\end{enumerate}

Strategic agents can exploit this ``indirect impact'' to manipulate future prices in their favor---for example, by under-consuming in the current round, an agent can artificially lower the shadow prices for later rounds. Indeed, merely equipping the standard primal-dual method with VCG-style payments---which only addresses the direct impact---is insufficient for incentive-awareness; we direct the readers to \Cref{sec:numerical illustration setup} for a concrete example. This necessitates our \algname framework, which resolves \emph{both} the direct and indirect impacts.

\subsection{Algorithmic Design: Tame Incentives via Payments, Epoching, and Exploration}\label{sec:mechanism primal}

\begin{algorithm}[t!]
\caption{\textbf{I}ncentive-\textbf{A}ware \textbf{P}rimal-\textbf{D}ual (\algname) Framework}
\label{alg:mech}
\begin{algorithmic}[1]
\Require{Number of rounds $T$, number of agents $K$, number of constraints $d$, per-round budget $\bm{\rho} \in [0,1]^d$, an epoching scheme $\{\mathcal{E}_\ell\}_{\ell=1}^L$ such that $[T]=\bigcup_{\ell=1}^L \mE_\ell$, and a dual-update subroutine $\mA$.}
\Ensure Allocations and payments $(i_1,\bm p_1), (i_2, \bm p_2), \ldots, (i_T,\bm p_T)$, where $i_t = 0$ denotes no allocation.

\For{epoch $\ell = 1, 2, \ldots, L$}\Comment{Epoch-Based Lazy Updates}
\State Set epoch-$\ell$ dual variable $\bm{\lambda}_\ell \in \bm \Lambda$ via a \textit{dual-update subroutine} $\mA$, presented in \Cref{sec:mechanism dual}. \label{line:dual update}\label{line:call subroutine}\label{line:epoching}

\For{each round $t \in \mathcal{E}_\ell$}
\State Each agent $i \in [K]$ observes their value $v_{t,i} \sim \mathcal{V}_i$ and consumption vector $\bm{c}_t \sim \mathcal{C}$.
\State Agent reports $u_{t,i} \in [0,1]$ according to \Cref{alg:protocol}. $\bm{u}_t$ and $\bm{c}_t$ are revealed to the planner.

\If{round $t$ is selected for exploration (w.p. $1/|\mathcal{E}_\ell|$)} \Comment{Randomized Exploration}
\State Sample tentative agent and payment $(i,p)$ as
\algeq{$\displaystyle i\sim \Unif([K]),\quad p\sim \Unif([0,1]).$}
\State If $u_{t,i} \ge p$, set $i_t = i$ and $p_{t,i_t} = p$. Otherwise, set $i_t = 0$ (i.e., forfeit the resource).\label{line:exploration}
\Else \Comment{Dual-Adjusted Payments}
\State Compute dual-adjusted report for all agents as
\algeq{$\displaystyle \tilde{u}_{t,i} := u_{t,i} - \bm{\lambda}_\ell^\trans \bm{c}_{t,i},\quad \forall i\in [K].$}
\State Decide allocation and payment $(i_t,\bm p_t)$ for this round as \label{line:end of epoch game}
\algeq[line:primal allocation]{$\displaystyle  i_t = \argmax_{i=1,2,\ldots,K\vphantom{[K]}} \tilde{u}_{t,i};\quad p_{t,i_t} = \bm{\lambda}_t^\trans \bm{c}_{t,i_t} + \max_{j \ne i_t} \tilde{u}_{t,j};\quad p_{t,i}=0,\forall i\ne i_t.$}
\EndIf

\State If $\sum_{\tau < t} \bm{c}_{\tau, i_\tau} + \bm{c}_{t, i_t} \not\le T \bm{\rho}$, reject the allocation by setting $i_t = 0$ and $\bm p_t=\bm 0$.\label{line:safety}
\EndFor
\EndFor
\end{algorithmic}
\end{algorithm}

Our \textbf{I}ncentive-\textbf{A}ware \textbf{P}rimal-\textbf{D}ual (\algname) framework is presented in \Cref{alg:mech}. It addresses both the direct and indirect incentive effects through three algorithmic modifications: dual-adjusted payments to incentivize truthful reporting, epoch-based lazy updates to stabilize the evolution of the dual variables, and randomized exploration to deter misreporting. At a high level, \algname consists of a primal update, which selects an action in an incentive-aware manner given the current dual variable, and a dual update, which adjusts the shadow prices of consumptions over time. We leave the dual update as a flexible subroutine $\mA$ (see \Cref{line:call subroutine} of \Cref{alg:mech}) and later present two concrete approaches in \Cref{sec:mechanism dual}.

As in standard primal-dual methods, a $d$-dimensional dual variable $\bm \lambda_t\in \mathbb R_{\ge 0}^d$ is maintained as a vector of shadow prices for resource consumption. Given a dual variable $\bm \lambda_t$ decided by the subroutine $\mA$, standard primal-dual would directly maximize the dual-adjusted value of each action; see the left panel of \Cref{fig:primal-dual}. In contrast, \algname performs this maximization in an incentive-aware manner.
For notational simplicity, let
\begin{equation*}
\tilde v_{t,i}:=v_{t,i}-\bm \lambda_t^\trans \bm c_{t,i},~
\tilde u_{t,i}:=u_{t,i}-\bm \lambda_t^\trans \bm c_{t,i},\quad \forall i\in [K],
\end{equation*}
which we refer to as the dual-adjusted value and dual-adjusted report of agent $i$, respectively. We also define the dual-adjusted optimal action in round $t\in [T]$ as the maximizer of the dual-adjusted value, namely
\begin{equation}\label{eq:tilde i definition main}
\tilde i_t^\ast:=\argmax_{i\in [K]} \tilde v_{t,i}
=\argmax_{i\in [K]} \left(v_{t,i}-\bm \lambda_t^\trans \bm c_{t,i}\right).
\end{equation}

\subsubsection{Dual-Adjusted Allocations and Payments.}\label{sec:dual-adjusted payments}
We first address the direct incentives via payments. Given the objective of optimizing the dual-adjusted value, aside from exploration rounds, the planner should allocate the resource to the agent with the maximal \textit{dual-adjusted report} $\tilde u_{t,i}:=u_{t,i}-\bm \lambda_t^\trans \bm c_{t,i}$. To incentivize truthfulness, we design a payment rule inspired by the VCG mechanism \citep{vickrey1961counterspeculation,clarke1971multipart,groves1973incentives} and boosted second-price auctions \citep{golrezaei2021boosted}. The winner $i_t$ pays their consumption cost $\bm \lambda_t^\trans \bm c_{t,i_t}$ plus the second-highest dual-adjusted report $\max_{j\ne i_t} \tilde u_{t,j}$, as shown in \Cref{line:primal allocation} of \Cref{alg:mech}.

In \Cref{thm:informal IntraEpoch} from \Cref{sec:sketch PrimalAlloc}, we show that if agents are \emph{myopic}---that is, if they choose their reports to maximize only their current-round utilities---then this dual-adjusted payment rule makes truth-telling a dominant strategy within each round. This neutralizes the direct impact illustrated in \Cref{fig:vicious cycle}.

\subsubsection{Epoch-Based Lazy Updates.}\label{sec:epoch-based lazy updates}
We now turn to the indirect impact.
While misreporting does not yield immediate benefits thanks to \Cref{sec:dual-adjusted payments}, it may bring long-term gains via the indirect impact on future dual variables and, consequently, future allocations. Such an indirect channel may incentivize a non-myopic agent to misreport. We therefore weaken the immediate link between current actions and future duals: We divide the game horizon $[T]$ into $L$ epochs $\mE_1 \cup \cdots \cup \mE_L$, and update the dual variables based on the aggregate information of a full epoch (rather than instantaneously after every round).

Specifically, during each epoch $\mE_\ell$, we hold a constant dual variable $\bm \lambda_t\equiv \bm \lambda_\ell$. The update of $\bm \lambda_\ell$ occurs only at the boundary of two epochs. This ``lazy update'' creates a buffer: To influence the next dual $\bm \lambda_{\ell+1}$, an agent must sustain misreports over the current epoch. For $\gamma$-impatient agents (\Cref{def:impatient}), the immediate utility loss from sustained misreporting---detailed in \Cref{sec:random exploration}---is less attractive than discounted future benefits. Aggregating reports and consumptions over an epoch helps limit the long-term effect of misreports.

\subsubsection{Randomized Exploration.}\label{sec:random exploration}
So far, dual-adjusted payments ensure that misreporting is not directly profitable, while epoch-based lazy updates substantially limit its future influence. However, an agent may still attempt to manipulate if misreporting carries potential upside without meaningful downside. We introduce randomized exploration to create such a downside by making deviations from truthful reporting costly.

Specifically, for each round $t$ in epoch $\mE_\ell$, with a small probability, the planner initiates an exploration round, offering a random price $p\sim \Unif([0,1])$ to a randomly selected agent $i\sim \Unif([K])$ (see \Cref{line:exploration}).
Consequently, if an agent $i$ is being explored when over-reporting $u_{t,i}>v_{t,i}$, then they may pay a price higher than their true value; on the other hand, under-reporting $u_{t,i}<v_{t,i}$ means possibly losing a profitable allocation (see \Cref{thm:informal InterEpoch}).
In summary, this structure imposes an immediate utility loss when the report deviates from the true value, thereby further discouraging strategic behavior.
While randomized pricing has been explored in repeated second-price auctions \citep{amin2013learning, golrezaei2021dynamic,golrezaei2023incentive}, to our knowledge, this work is the first to leverage it for incentive-aware primal-dual learning.

\subsection{\algname Guarantee: Incentive-Awareness and Primal Efficiency}\label{sec:IAPD main theorem}
Through the components in \Cref{sec:mechanism primal}, our \algname framework ensures two properties: Agents make near-truthful reports; consequently, allocations nearly maximizes dual-adjusted values $\tilde v_{t,i}=v_{t,i}-\bm \lambda_t^\trans \bm c_{t,i}$.
The first property is precisely the incentive-awareness we defined in \Cref{sec:setup}, whereas the second is more subtle: 

The efficiency metric---the social welfare regret $\mR_T$ in \Cref{eq:regret}---compares the planner's allocation $i_t$ to the offline optimal allocation $i_t^\ast$, but the allocation rule of \algname only ensures that $i_t$ is close to the dual-adjusted optimal $\tilde i_t^\ast$ induced by the dual variable $\bm \lambda_t$; recall \Cref{eq:tilde i definition main,line:primal allocation}.
Therefore, the \algname framework alone cannot attain the efficiency defined in \Cref{sec:setup}. Instead, it only optimizes the \emph{primal efficiency} by minimizing the discrepancy between $i_t$ and $\tilde i_t^\ast$. The other part of efficiency, namely the \emph{dual efficiency} measuring the difference between $\tilde i_t^\ast$ and $i_t^\ast$, relies on the dual-update subroutine $\mA$, which we present in \Cref{sec:mechanism dual}.

A separate source of allocation inefficiency arises from the hard constraints on consumptions.
To satisfy constraints $\frac 1T\sum_{t=1}^T \bm c_{t,i_t}\le \bm \rho$ strictly, \Cref{line:safety} rejects an allocation $i_t$ once some constraint will be violated. Fortunately, this can also be handled by the dual-update subroutine $\mA$: Intuitively, properly set dual variables ensure that budgets almost never deplete.
To formalize, we decompose the regret $\mR_T$ in \Cref{eq:regret} as:
\begin{equation}\begin{aligned}
\mR_T&\le \mathbb{E}\Bigg [\underbrace{\sum_{t=1}^{\mT_v} (v_{t,\tilde i_t^\ast}-v_{t,i_t})}_{\primalreg}+\underbrace{\sum_{t=1}^T v_{t,i_t^\ast}-\sum_{t=1}^{\mT_v} v_{t,\tilde i_t^\ast}}_{\dualreg}\Bigg ], \label{eq:regret decomposition informal}
\end{aligned}\end{equation}
where $\mT_v:=\min\{t\in [T]\mid \sum_{{\tau}=1}^{t} \bm c_{{\tau},i_{\tau}} + \bm 1\not \le T\bm \rho\}\cup \{T+1\}$ is a stopping time (random variable), representing  the first round where  \Cref{line:safety} of the algorithm rejects an allocation.
The \algname framework then---in addition to incentive-awareness---controls the \primalreg term in \Cref{eq:regret decomposition informal}. We formalize this in \Cref{thm:informal IAPD main theorem}.
\begin{theorem}[\algname Framework]\label{thm:informal IAPD main theorem}\label{thm:informal InterEpoch}
Under the \algname framework in \Cref{alg:mech}, regardless of the dual-update subroutine $\mA$ and epoching scheme $[T]=\bigcup_{\ell=1}^L \mE_\ell$, there exists a PBE $\bm \pi$ (\Cref{def:PBE}) such that:
\begin{itemize}
\item \textbf{Incentive-Awareness.} For any epoch $\ell \in [L]$ with epoch length $\lvert \mE_\ell\rvert$, for any agent $i\in [K]$, with probability $1-\frac{1}{\lvert \mE_\ell\rvert}$, the number of $t\in \mE_\ell$'s where $|u_{t,i} - v_{t,i}| \ge \frac{1}{|\mE_\ell|}$ is no more than $\log_{\gamma^{-1}}\bigl (1+\O(K\lvert \mE_\ell\rvert^4)\bigr )=\Otil(1)$.
\item \textbf{Primal Efficiency.}
For any epoch $\ell\in [L]$, with probability $1-\frac{2}{\lvert \mE_\ell\rvert}$, aside from ``large misreport'' rounds in the previous bullet point, the number of $t\in \mE_\ell$'s where $i_t\ne \tilde i_t^\ast$ is bounded by $\O(K^2\log \lvert \mE_\ell\rvert)=\Otil(K^2)$.
\end{itemize}
Consequently, the \primalreg term in \Cref{eq:regret decomposition informal} satisfies $\E[\primalreg]=\E[\sum_{t=1}^T (v_{t,\tilde i_t^\ast}-v_{t,i_t})]\le \Otil(LK^2)$.
\end{theorem}

With the formal proof postponed to \Cref{sec:appendix PrimalAlloc}, we highlight the key proof techniques in \Cref{sec:sketch PrimalAlloc}.

\subsection{Proof Sketch of \Cref{thm:informal IAPD main theorem}}\label{sec:sketch PrimalAlloc}
We now provide a proof sketch for \Cref{thm:informal IAPD main theorem}. The argument proceeds in three steps. First, we isolate the direct impact of misreports by considering a nearsighted benchmark, in which agents optimize only their current-epoch utilities. In this benchmark, the dual variables are fixed throughout the epoch, and we show that dual-adjusted payments induce truthful reporting. Second, we return to the original farsighted setting and show that large deviations from truth-telling are unattractive: Epoch-based lazy updates delay the future benefits of manipulation, while randomized exploration imposes an immediate expected penalty on misreports. Third, we translate near-truthful reporting into a primal efficiency guarantee by showing that, except in rare events where two agents have close dual-adjusted values, the allocation induced by the reported dual-adjusted values coincides with the dual-adjusted optimal allocation.

\paragraph{Step 1: A nearsighted benchmark for the direct impact.}
To show that dual-adjusted pricing resolves the direct impact of misreports, we consider a hypothetical nearsighted game. Instead of optimizing their expected $\gamma$-discounted utility over the remainder of the game, agents optimize their current-epoch discounted utility (and ignore any indirect impact of their reports on future-epoch dual variables). Formally, for an agent $i\in [K]$ in round $t\in \mE_\ell$, we compare the following two objectives when deciding their round-$t$ report $u_t$:
\begin{equation}\begin{aligned}
\text{farsighted:} &\max_{u_t\in [0,1]} \mathbb{E}\Bigg[\sum_{\tau \ge t} \gamma^{\tau} (v_{\tau,i} - p_{\tau,i}) \1[i_\tau = i]\Bigg]; \\
\text{nearsighted:} &\max_{u_t\in [0,1]} \mathbb{E}\Bigg[\sum_{\substack{\tau\ge t\\ \tau \in \mE_\ell}} \gamma^{\tau} (v_{\tau,i} - p_{\tau,i}) \1[i_\tau = i]\Bigg].\label{eq:two agent models}
\end{aligned}\end{equation}
The nearsighted benchmark is used only as an intermediate device for our analysis: It removes the indirect effect through future dual updates and allows us to focus on the current-epoch allocation incentives.

Fix an epoch $\ell\in [L]$ and a dual variable $\bm \lambda_\ell$. Since the dual variable is held fixed within the epoch, the nearsighted benchmark induces an $\lvert \mE_\ell\rvert$-round resource allocation game. In this game, each agent maximizes their $\gamma$-discounted value-minus-payment, whereas the planner seeks to maximize the undiscounted dual-adjusted social welfare.
Thus, in addition to the usual tension between self-interested agents and a welfare-maximizing planner, there is also a mismatch induced by the dual adjustment and the agents' discounting. The dual-adjusted allocation and payment rule in \Cref{line:primal allocation} is designed precisely to address this mismatch.

\begin{theorem}[Direct Impact; \Cref{thm:IntraEpoch guarantee}]\label{thm:informal IntraEpoch}
For any epoch $\ell\in [L]$ and any fixed dual variable $\bm \lambda_\ell$, in the nearsighted game defined in \Cref{eq:two agent models}, the allocation and payment rule in \Cref{line:primal allocation} induces a truthful PBE.
\end{theorem}

\begin{proof}[Proof Sketch]
When agents are myopic and focus only on current-round utility, truth-telling is a dominant strategy (i.e., optimal regardless of the other agents' reports). Moreover, within an epoch, the allocation and payment rule in \Cref{line:primal allocation} does not depend on the history of reports or allocations in that epoch; the only history dependence is through the fixed dual variable $\bm \lambda_\ell$, which was determined before the epoch began. Therefore, if all other agents report truthfully---which is also history-independent---in the nearsighted game, agent $i$ also finds truthful reporting optimal in every round of the epoch. This establishes the truthful PBE.
\end{proof}

\paragraph{Step 2: Bounding large misreports by farsighted agents.}
We now return to the original farsighted setting, where $\gamma$-impatient agents (\Cref{def:impatient}) optimize over the remainder of the game. By \Cref{thm:informal IntraEpoch}, dual-adjusted payments remove the direct current-round benefit of misreporting. It remains to control the indirect incentive to misreport in order to influence future dual variables.

The two remaining components of \algname address this indirect incentive. Epoch-based lazy updates delay the effect of any misreport on future dual variables, while randomized exploration creates an immediate expected penalty when reports deviate from values. We compare these two forces below.

Consider any round $t\in \mE_\ell$ and any agent $i\in [K]$. Because of randomized exploration, with probability $\frac{1}{K\lvert \mE_\ell\rvert}$, the planner offers agent $i$ a random price $p\sim \Unif[0,1]$. If the agent over-reports, they risk paying a price above their true value. If the agent under-reports, they risk losing a profitable allocation. Integrating over the random price lowers bound the expected immediate penalty from a misreport: $\E[\text{Penalty}_{t,i}] \ge
\frac{(u_{t,i}-v_{t,i})^2}{2K\lvert \mE_\ell\rvert}$.

On the other hand, because the dual variable is updated only at epoch boundaries, a misreport in round $t\in \mE_\ell$ can affect allocations only in future epochs. Hence, any potential benefit is delayed and discounted. In particular, the gain from misreporting in round $t$ is upper bounded by $\E[\text{Gain}_{t,i}]
\le
\frac{\gamma^{\text{remaining rounds in }\mE
_{\ell}}}{1-\gamma}$.

Thus, the immediate penalty is quadratic in the magnitude of the misreport, whereas the future gain decays exponentially with the number of remaining rounds in the epoch. Comparing these two quantities shows that a rational agent will not choose a ``large misreport,'' in the sense that $\lvert u_{t,i}-v_{t,i}\rvert \ge \frac{1}{\lvert \mE_\ell\rvert}$, except possibly when the number of remaining rounds in the epoch is of order $\O(\log \lvert \mE_\ell\rvert)=\Otil(1)$. Formalizing this comparison, as in \Cref{lem:large misreport}, yields the incentive-awareness guarantee in \Cref{thm:informal IAPD main theorem}.

\paragraph{Step 3: From near-truthfulness to primal efficiency.}
It remains to show that near-truthful reporting implies rare misallocations. Consider a round $t\in \mE_\ell$ where no agent makes a large misreport, i.e., $\lvert u_{t,i}-v_{t,i}\rvert \le \frac{1}{\lvert \mE_\ell\rvert}$ for all $i\in [K]$.
Suppose that a misallocation occurs in the sense that $i_t\ne \tilde i_t^\ast$ (defined in \Cref{eq:tilde i definition main}). We know that \textit{(i)} from the definition of $\tilde i_t^\ast$ in \Cref{eq:tilde i definition main}, the dual-adjusted value of $\tilde i_t^\ast$ is at least that of $i_t$, i.e., $\tilde v_{t,\tilde i_t^\ast}\ge \tilde v_{t,i_t}$; and \textit{(ii)} from the definition of $i_t$ in \Cref{line:primal allocation}, the dual-adjusted report of $i_t$ is at least that of $\tilde i_t^\ast$, i.e., $\tilde u_{t,i_t}\ge \tilde u_{t,\tilde i_t^\ast}$. Since reports are within $\frac{1}{\lvert \mE_\ell\rvert}$ of true values for every agent, we must have $\lvert \tilde v_{t,i_t}-\tilde v_{t,\tilde i_t^\ast}\rvert
\le \frac{2}{\lvert \mE_\ell\rvert}$.
Therefore, a misallocation (when there is no large misreport) can only occur when the top two dual-adjusted values are close to each other. But this happens rarely under \Cref{assump:smooth consumptions}: Due to smooth consumption distributions,
\begin{equation}\label{eq:rank flip probability single-unit}
\Pr\left \{\min_{1\le i<j\le K}\lvert \tilde v_{t,i}-\tilde v_{t,j}\rvert\le \frac{2}{\lvert \mE_\ell\rvert}\right \}\le \frac{2\epsilon_c K^2}{\lvert \mE_\ell\rvert}.
\end{equation}

Azuma-Hoeffding inequality implies w.h.p. that the event in \Cref{eq:rank flip probability single-unit} occurs at most $\O(K^2\log \lvert \mE_\ell\rvert)=\Otil(K^2)$ times in epoch $\mE_\ell$ (\Cref{lem:misallocation}). Thus, aside from the large-misreport rounds controlled in Step 2 (and the exploration rounds), \algname allocation coincides with the dual-adjusted optimum $\tilde i_t^\ast$ in all but $\Otil(K^2)$ rounds.

Combining the bounds on large-misreport rounds, small-gap misallocation rounds, and exploration rounds, we conclude that in each epoch $\mE_\ell$, the number of rounds in which $i_t\ne \tilde i_t^\ast$ is at most $\Otil(K^2)$. Summing across the $L$ epochs yields
$
\E[\primalreg]=\Otil(LK^2).
$
This proves the primal-efficiency component of \Cref{thm:informal IAPD main theorem}.

To conclude, \Cref{thm:informal IAPD main theorem} shows that the combination of dual-adjusted payments, epoch-based lazy updates, and randomized exploration ensures two properties. First, there exists a PBE $\bm \pi$ under which agents make only $\Otil(1)$ large misreports per epoch. Second, the allocation chosen by \algname differs from the dual-adjusted optimum $\tilde i_t^\ast$ in only $\Otil(K^2)$ rounds per epoch. We now turn to the dual-update subroutine $\mA$ in \algname.

\section{Dual-Update Subroutines: Learning via Predictability and Fixed Points}\label{sec:mechanism dual}
In this section, we present two dual-update subroutines that one can use as the $\mA$ in \Cref{line:call subroutine} of \Cref{alg:mech}. Both subroutines build upon the reduction from \dualreg to online learning regret, but adopt different online learning methods.
Specifically, since \algname (\Cref{alg:mech}) ensures incentive-awareness and primal efficiency (recall \Cref{thm:informal IAPD main theorem}), we now focus on feasibility---ensuring $\frac 1T \sum_{t=1}^T \bm c_{t,i_t}\le \bm \rho$---and dual efficiency---minimizing the gap between $\tilde i_t^\ast$ (\Cref{eq:tilde i definition main}) and the offline optimum $i_t^\ast$ (\Cref{eq:offline optimal benchmark}).
Following standard online resource allocation analysis \citep{devanur2023online}, both objectives---as captured by \dualreg in \Cref{eq:regret decomposition informal}---reduce to the \emph{online learning regret} of $\bm \lambda_1,\bm \lambda_2,\ldots,\bm \lambda_L$ w.r.t. some hindsight optimal dual variable $\bm \lambda^\ast$:
\begin{theorem}[\dualreg as Online Learning Regret]\label{lem:DualUpd to regret informal}
Fix an epoching scheme $[T]=\bigcup_{\ell=1}^L \mE_\ell$.
Let $\bm \Lambda=\prod_{j=1}^d[0,\rho_j^{-1}]:=\{\bm \lambda\in \mathbb R_{\ge 0}^d\mid \lambda_j\le \rho_j^{-1},\forall j\in [d]\}$ be a box region in $\mathbb R_{\ge 0}^d$. Let $\mT_v:=\min\{t\in [T]\mid \sum_{{\tau}=1}^{t} \bm c_{{\tau},i_{\tau}} + \bm 1\not \le T\bm \rho\}\cup \{T+1\}$ be the (random) first round where the feasibility can possibly be violated. Let $\mathcal L_v$ be the (random) epoch that $\mT_v$ belongs to. Then the \dualreg defined in \Cref{eq:regret decomposition informal} has an expectation bounded by:
\begin{equation}\begin{aligned}\label{eq:online learning regret informal}
\E[\dualreg] = \E\left[\sum_{t=1}^T v_{t,i_t^\ast} - \sum_{t=1}^{\mT_v} v_{t,\tilde i_t^\ast} \right]\lesssim \E\left[\sup_{\bm \lambda^\ast \in \bm \Lambda} \sum_{\ell=1}^{\mathcal L_v} \sum_{t\in \mE_\ell} (\bm \rho - \bm c_{t,i_t})^\trans (\bm \lambda_\ell - \bm \lambda^\ast) \right]:=\mR_L^{\text{OL}},
\end{aligned}\end{equation}
where we recall that the dual-adjusted optimum $\tilde i_t^\ast  = \argmax_i (v_{t,i}-\bm \lambda_t^\trans \bm c_{t,i})$ is defined in \Cref{eq:tilde i definition main}, and the offline optimum $\{i_t^\ast\}_{t\in [T]}=\argmax_{i_1,i_2,\ldots,i_T} \sum_{t=1}^T v_{t,i_t}$ s.t. $\frac 1T \sum_{t=1}^T \bm c_{t,i_t}\le \bm \rho$ is defined in \Cref{eq:offline optimal benchmark}.
\end{theorem}
The proof mostly follows the standard primal-dual analysis in online resource allocation, except for epoching and strategic manipulation that are unique to our setting. We defer the formal analysis to \Cref{sec:DualUpd to regret}.

The RHS of \Cref{eq:online learning regret informal}, namely $\mR_L^{\text{OL}}$, is called the \emph{online learning regret}. This is because when defining the ``loss function'' for epoch $\ell$---which is a mapping from any dual $\bm \lambda\in \bm \Lambda$ to a real number---as a linear function
\begin{equation}\label{eq:online learning loss informal}
F_\ell(\bm{\lambda}) := \left(\sum_{t \in \mE_\ell} (\bm{\rho} - \bm{c}_{t, i_t})\right)^\trans \bm{\lambda},\quad \forall \ell \in [L],\bm \lambda\in \bm \Lambda,
\end{equation}
then we have $\mR_L^{\text{OL}}=\E[\sup_{\bm \lambda^\ast\in \bm \Lambda}\sum_{\ell=1}^{\mathcal L_v} (F_\ell(\bm \lambda_\ell)-F_\ell(\bm \lambda^\ast))]$, i.e., the expected difference between the total loss suffered by our duals $\{\bm \lambda_\ell\}_{\ell\in [L]}$ and a fixed hindsight optimal dual $\bm \lambda^\ast\in \bm \Lambda$.
The key challenge of minimizing $\mR_L^{\text{OL}}$ is that the loss function $F_\ell$ in \Cref{eq:online learning loss informal} is \emph{unknown} when deciding the dual $\bm \lambda_\ell$; however, the planner is still required to minimize $\sum_{\ell\in [L]} F_\ell(\bm \lambda_\ell)$. We demonstrate how this is made possible in subsequent sections.

\subsection{Baseline: Follow-the-Regularized-Leader and Price of Incentives}

\begin{algorithm}[t!]
\caption{Dual-Update Subroutine using Follow-the-Regularized-Leader (\ftrl)}\label{alg:lambda FTRL}
\begin{algorithmic}[1]
\Require Current epoch number $\ell$, learning rate $\eta_\ell>0$, a strongly convex regularizer $\Psi\colon \mathbb R_{\ge 0}^d\to \mathbb R$.
\State Define the dual decision region $\bm{\Lambda} := \prod_{j=1}^d [0,\rho_j^{-1}]$ according to \Cref{lem:DualUpd to regret informal}.\label{line:define dual region}
\State Solve the following optimization problem over $\bm \lambda_\ell\in \bm \Lambda$ and return $\bm \lambda_\ell$:
\algeq[eq:lambda FTRL]{$\displaystyle \bm{\lambda}_\ell = \argmin_{\bm{\lambda} \in \bm{\Lambda}} \Biggl(\sum_{\ell' < \ell} \sum_{\tau \in \mE_{\ell'}} (\bm{\rho} - \bm{c}_{\tau, i_\tau})^\trans \bm{\lambda} + \frac{1}{\eta_\ell} \Psi(\bm{\lambda})\Biggr).$}
\end{algorithmic}
\end{algorithm}

We present the first dual-update subroutine $\mA$, which applies a classical online learning algorithm---\textbf{F}ollow-\textbf{t}he-\textbf{R}egularized-\textbf{L}eader algorithm (\ftrl, \citealt{abernethy2008competing})---to decide the dual variables according to \Cref{eq:online learning loss informal}. Intuitively, \ftrl balances two objectives: fitting the historical data (by minimizing cumulative past losses) and maintaining stability in subsequent decisions (via a strictly convex regularization function $\Psi$). The pseudo-code of \ftrl is in \Cref{alg:lambda FTRL}, and it  enjoys the following online learning regret guarantee:
\begin{theorem}[\ftrl Subroutine]\label{thm:DualUpd guarantee FTRL informal}
For any epoching scheme $[T]=\bigcup_{\ell=1}^L \mE_\ell$, when setting $\Psi(\bm \lambda)=\frac 12 \lVert \bm \lambda\rVert_2^2$ as the Euclidean norm and $\eta_\ell=\Theta\Bigl(1\Big /\sqrt{\sum_{\ell'=1}^\ell \lvert \mE_{\ell'}\rvert^2} \Bigr )$, \ftrl in \Cref{alg:lambda FTRL} ensures $\mR_L^{\text{OL}}=\O\Bigl( \sqrt{\sum_{\ell=1}^L \lvert \mE_\ell\rvert^2} \Bigr )$.
\end{theorem}
\begin{proof}[Proof Sketch]
The formal proof can be found in \Cref{sec:DualUpd guarantee FTRL}.
Roughly, the online learning regret of \ftrl depends on the sum of (squared) gradient norms of loss functions \citep[Corollary 7.7]{orabona2019modern}:
\begin{equation}\label{eq:FTRL online learning regret}
\mR_L^{\text{OL}}\le \O(\eta_L^{-1})+\sum_{\ell=1}^L \eta_\ell \E\left [\lVert \nabla F_\ell\rVert_2^2\right ].
\end{equation}
(In the most general form, the gradient $\nabla F_\ell$ is evaluated at the chosen decision $\bm \lambda_\ell$. Nevertheless, our loss function $F_\ell$---defined in \Cref{eq:online learning loss informal}---is linear, whose gradient $\nabla F_\ell = \sum_{t \in \mE_\ell} (\bm \rho - \bm c_{t,i_t})$ is the same everywhere.)

In the worst case, all consumptions $\{\bm c_{t,i_t}\}_{t\in \mE_\ell}$ accumulate in similar directions, thus resulting in a gradient norm of order $\lVert \nabla F_\ell\rVert_2^2=\Theta(|\mE_\ell|^2)$.
The configuration of $\eta_\ell$ ensures that the first term in \Cref{eq:FTRL online learning regret} is $\O(\eta_L^{-1})=\O\Bigl (\sqrt{\sum_{\ell=1}^L \lvert \mE_\ell\rvert^2}\Bigr )$, and the second term is bounded as $\sum_{\ell=1}^L \Bigl (\lvert \mE_\ell\rvert^2\Big / \sqrt{\sum_{\ell'=1}^\ell \lvert \mE_{\ell'}\rvert^2}\Bigr )=\O\Bigl (\sqrt{\sum_{\ell=1}^L \lvert \mE_\ell\rvert^2}\Bigr )$.
\end{proof}

Using \ftrl as the dual-update subroutine $\mA$ in the \algname framework (\Cref{line:call subroutine} in \Cref{alg:mech}), we obtain a mechanism \mech. It ensures efficiency, feasibility, and incentive-awareness at the same time:
\begin{corollary}[\mech Mechanism]\label{thm:main theorem FTRL}
When setting $L=T^{2/3}$ and  $\lvert \mE_1\rvert=\lvert \mE_2\rvert=\cdots \lvert \mE_L\rvert = T^{1/3}$, the
\mech mechanism---plugging \ftrl (\Cref{alg:lambda FTRL}) into \algname (\Cref{alg:mech}) as the dual-update subroutine $\mA$ and configuring according to \Cref{thm:DualUpd guarantee FTRL informal}---induces a PBE $\bm \pi$ (\Cref{def:PBE}) such that:
\begin{enumerate}
\item \textbf{Efficiency.} The social welfare regret of $\bm \pi$ under \mech is sublinear in $T$, hence the planner's allocations converge to the offline optimum. Formally, we have $\mR_T(\bm \pi, \mech) = \Otil\big ((K^2+\sqrt d) T^{2/3}\big )$;
\item \textbf{Feasibility.} All constraints are satisfied with probability 1, i.e., $\mB_T(\bm \pi, \mech) = 0$; and
\item \textbf{Incentive-Awareness.} The expected number of rounds where $\max_{i\in [K]}\lvert u_{t,i}-v_{t,i}\rvert\ge T^{-1/3}$ is $\Otil(K T^{2/3})$.
\end{enumerate}
\end{corollary}

\label{sec:price of incentives}
The formal proof is in \Cref{sec:main theorem FTRL}. We will discuss the computational complexities of our mechanisms in \Cref{sec:numerical illustration setup}.
While \mech is the first to attain efficiency, feasibility, and incentive-awareness at the same time, its efficiency bound of $\mR_T=\Otil(T^{2/3})$ stands in stark contrast to the $\Theta(\sqrt{T})$ minimax rate achievable in non-strategic resource allocation \citep{li2023simple,arlotto2019uniformly}.
This sub-optimality gap is \emph{not} an artifact of our analysis, but a fundamental consequence of the epoch-based lazy updates.

Indeed, we identify an $\Omega(T^{2/3})$ ``price of incentives'' barrier:
As shown in \Cref{fig:lazy vs no lazy}, the freeze of dual variables for prolonged epochs---necessary to tame strategic behavior---fundamentally changes the hardness of learning. We present the following hardness result, adopted from \citet[Proposition 7]{chen2020minimax}:
\begin{proposition}[``Price of Incentives'']
Fix the number of rounds $T$ and number of epochs $L$. There is a family of linear loss functions, $f_t(\bm \lambda)=\bm g_t^\trans \bm \lambda$ where $\bm g_1,\bm g_2,\ldots,\bm g_T\in [0,1]^d$ (cf. the $F_\ell(\bm \lambda)$ in \Cref{eq:online learning loss informal}), such that any algorithm switching decision variables for $L$ times must suffer $\Omega(T/\sqrt{2L})$ online learning regret.
\end{proposition}
Without additional properties, for \emph{any} dual-update subroutine, the social welfare regret $\mR_T=\E[\primalreg+\dualreg]$ must be $\Omega\left(\min_{L} \bigl(L+(T/\sqrt{2L})\bigr)\right) = \Omega(T^{2/3})$.
Since epoching is essential to deter agents' incentives, we call this $\Omega(T^{2/3})$ barrier a ``price of incentives,'' which \mech mechanism cannot break.

\begin{figure}[t!]
\centering

\begin{tikzpicture}[
    transform shape,
    action/.style={rectangle, fill=black!5!white, thick, align=center, minimum width=2cm, minimum height=0.7cm},
    loss/.style={rectangle, fill=black!15!white, thick, align=center, minimum width=2cm, minimum height=0.7cm},
    pred/.style={diamond, aspect=1.5, draw=green!60!black, fill=green!20, thick, minimum size=7mm},
    flow/.style={-latex, thick, gray!80!black},
    limitation/.style={color=red!80!black, font=\small\itshape, align=center, text width=3cm}
]
\node[loss] (L11) at (-5, 0) {$\bm c_{{t-1},i_{t-1}}$};
\node[action] (A11) at (-2.5, 0) {$\bm \lambda_t$};
\draw[flow] (L11) -- (A11);
\node[loss] (L12) at (0,0) {$\bm c_{t,i_t}$};
\draw[flow] (A11) -- (L12);
\node[action] (A12) at (2.5,0) {$\bm \lambda_{t+1}$};
\draw[flow] (L12) -- (A12);
\node[text width=5cm, align=right] (T1) at (5.5,0) {\small\textbf{without Epoching: $\Theta(T^{1/2})$}};

\draw[dashed, gray] (1.25, 0.5) -- (1.25, -1.5);
\node[above] at (1.25, 0.6) {\small{\color{black!50!white}Start of Epoch $\mE_{\ell+1}$}};
\node[text width=5cm, align=right] (T2) at (5.5,-1) {\small\textbf{with Epoching: $\Omega(T^{2/3})$}};

\node[loss, text width=6.7cm] (L2) at (-2.5, -1) {$\bm c_{t-\lvert \mE_{\ell}\rvert,i_{t-\lvert \mE_{\ell}\rvert}},\ldots,\bm c_{t-1,i_{t-1}},\bm c_{t,i_t}$};
\node[action] (A2) at (2.5, -1) {$\bm \lambda_{\ell+1}$};
\draw[flow] (L2) -- (A2);
\end{tikzpicture}
\caption{``Price of Incentives'': Online Learning Regret with and without Epoching}
\label{fig:lazy vs no lazy}
\end{figure}

\subsection{Breaking the Barrier: Optimistic FTRL and Circular Dependency}\label{sec:preditability}
Fortunately, the $\Omega(T^{2/3})$ barrier only applies when loss functions are {arbitrary} linear functions, whereas our loss function $F_\ell$ (in \Cref{eq:online learning loss informal}) admits a rich structure: Its gradient, namely $\sum_{t\in \mE_\ell} (\bm \rho-\bm c_{t,i_t})$, is determined by the epoch-$\ell$ allocations $\{i_t\}_{t\in \mE_\ell}$.
Such allocations---thanks to the \algname guarantee in \Cref{thm:informal IAPD main theorem}---almost always equal the dual-adjusted optima $\tilde i_t^\ast=\argmax_i(v_{t,i}-\bm \lambda_\ell^\trans \bm c_{t,i})$, which are i.i.d. conditional on the epoch-$\ell$ dual variable $\bm \lambda_\ell$ (because both $\bm v_t\sim \mV$ and $\bm c_t\sim \mC$ are i.i.d. for all $t$).
We present the following observation:

\begin{proposition}[Loss Structure]\label{claim:almost iid}
Fix an epoch $\ell\in [L]$ and dual $\bm \lambda_\ell \in \bm \Lambda$. The gradient of the loss function $F_\ell$ in \Cref{eq:online learning loss informal}, namely $\nabla F_\ell=\sum_{t \in \mE_\ell} (\bm \rho - \bm c_{t,i_t})$, is $\Otil(K)$-close (in $L_2$-norm) to the sum of $\lvert \mE_\ell\rvert$ i.i.d. random variables $\sum_{t\in \mE_\ell}(\bm \rho-\bm c_{t,\tilde i_t^\ast})$, where  $\tilde i_t^\ast=\argmax_i(v_{t,i}-\bm \lambda_\ell^\trans \bm c_{t,i})$ is defined in \Cref{eq:tilde i definition main}. Formally, we have
\begin{equation*}
\left \lVert \nabla F_\ell - \sum_{t \in \mE_\ell} (\bm \rho - \bm c_{t,\tilde i_t^\ast})\right \rVert_2=\Otil(K).
\end{equation*}
\end{proposition}
\Cref{claim:almost iid} is a combination of \Cref{lem:stability term 1,lem:stability term 2,lem:stability term 3}, all presented in \Cref{sec:appendix DualUpd}.
Should we be able to predict this i.i.d. sum $\sum_{t\in \mE_\ell}(\bm \rho-\bm c_{t,\tilde i_t^\ast})$ in advance, we could boost the performance via \textbf{O}ptimistic \textbf{FTRL} (\oftrl, \citealt{rakhlin2013online}): Suppose that some predicted gradient $\tilde{\bm g}_\ell$ is available at the beginning of each epoch $\ell$, \oftrl ensures online learning regret \citep[Theorem 7.39]{orabona2019modern}:
\begin{equation}\label{eq:O-FTRL online learning regret}
\mR_L^{\text{OL}}\le \O(\eta_L^{-1})+\sum_{\ell=1}^L \eta_\ell \E\left [\lVert \nabla F_\ell-\tilde{\bm g}_\ell\rVert_2^2\right ].
\end{equation}
Compared to the $\E[\lVert \nabla F_\ell\rVert_2^2]=\Theta(\lvert \mE_\ell\rvert^2)$ term in \Cref{eq:FTRL online learning regret}, the $\E[\lVert \nabla F_\ell - \tilde{\bm g}_\ell \rVert_2^2]$ term in \Cref{eq:O-FTRL-FP online learning regret} can be much smaller: Let $\bm g_\ell:=\sum_{t\in \mE_\ell}(\bm \rho-\bm c_{t,\tilde i_t^\ast})$. Observe that $\E[\lVert \nabla F_\ell - \tilde{\bm g}_\ell \rVert_2^2]\lesssim \E\left [\left \lVert \nabla F_\ell - \bm g_\ell\right \rVert_2^2\right ]+\E\left [\lVert \bm g_\ell-\E[\bm g_\ell]\rVert_2^2\right ]+\lVert \bm g_\ell-\tilde{\bm g}_\ell\rVert_2^2$.
The first term is no more than $\Otil(K^2)$ due to \Cref{claim:almost iid}; the second term is of order $\O(\lvert \mE_\ell\rvert)$ due to the i.i.d. nature of $\{\tilde i_t^\ast\}_{t\in \mE_\ell}$; and the third term---as we soon justify---can also be of order $\Otil(\lvert \mE_\ell\rvert)$.
Consequently, $\E[\lVert \nabla F_\ell - \tilde{\bm g}_\ell \rVert_2^2]=\O(\lvert \mE_\ell\rvert)+\Otil(K^2)$. Properly setting $\{\eta_\ell\}_{\ell\in [L]}$ thus gives $\Otil_T(\sqrt T)$ regret.

The next step is crafting a prediction $\tilde{\bm g}_\ell$ that is close to $\bm g_\ell=\sum_{t\in \mE_\ell}(\bm \rho-\bm c_{t,\tilde i_t^\ast})$. Again thanks to \Cref{thm:informal IAPD main theorem}, agents' (most) historical reports are close to their true values. Therefore, suppose that the to-be-decided dual variable $\bm \lambda_\ell$ is \emph{known} (which is unrealistic but helps understand the construction), we can define $\tilde{\bm g}_\ell$ as:
\begin{itemize}
\item For each historical (i.e., in previous epochs) round $\tau$, let $\tilde i_\tau(\bm \lambda_\ell):=\argmax_i (u_{\tau,i}-\bm \lambda_\ell^\trans \bm c_{\tau,i})$ where $\bm u_{\tau},\bm c_{\tau}$ are the reports and consumptions in round $\tau$. We call $\tilde i_\tau(\bm \lambda_\ell)$ a \emph{historical counterfactual allocation}, because it is the allocation as-if we were using the new dual variable $\bm \lambda_\ell$ in the historical round $\tau$.
\item Since $\bm u_\tau\approx \bm v_{\tau}$, all $\tilde i_\tau(\bm \lambda_\ell)$'s are almost identically distributed as $\tilde i_t^\ast=\argmax_i (v_{t,i}-\bm \lambda_\ell^\trans \bm c_{t,i})$. Meanwhile, the induced consumptions $\bm c_{\tau,\tilde i_\tau(\bm \lambda_\ell)}$'s are also almost identically distributed as $\bm c_{t,\tilde i_t^\ast}$. Hence, averaging these \emph{historical counterfactual consumptions} gives a good prediction of the gradient $\nabla F_\ell\approx \sum_{t\in \mE_\ell} (\bm \rho-\bm c_{t,\tilde i_t^\ast})$.
\end{itemize}
We formalize this prediction---denoted by $\tilde{\bm g}_\ell(\bm \lambda_\ell)$---in \Cref{eq:lambda O-FTRL-FP}. It only remains to lift the assumption that $\bm \lambda_\ell$ is known before crafting $\tilde{\bm g}_\ell$: This is unrealistic because \oftrl precisely uses the predicted gradient $\tilde{\bm g}_\ell$ to decide the dual variable $\bm \lambda_\ell$, hence this induces a \emph{circular dependency} that \oftrl cannot accommodate.

To fix this, we propose an online learning algorithm called \oftrl with \textbf{F}ixed \textbf{P}oints (\oftrlfp). The pseudo-code is given in \Cref{alg:lambda O-FTRL-FP}. With a more technical exposition postponed to \Cref{sec:circular dependency}, on a high level, our \oftrlfp algorithm formulates each dual update---searching for the $(\bm \lambda_\ell, \tilde{\bm g}_\ell)$ pair---as a fixed-point problem to avoid the circular dependency. We first state the main guarantee of \Cref{alg:lambda O-FTRL-FP}:
\begin{theorem}[\oftrlfp Subroutine]\label{thm:DualUpd guarantee O-FTRL-FP informal}
For any epoching scheme $[T]=\bigcup_{\ell=1}^L \mE_\ell$, when setting $\Psi(\bm \lambda)=\frac 12 \lVert \bm \lambda\rVert_2^2$ and $\eta_\ell=\Theta\Bigl(1\Big /\sqrt{\sum_{\ell'=1}^\ell \lvert \mE_{\ell'}\rvert} \Bigr )$, \oftrlfp in \Cref{alg:lambda O-FTRL-FP} ensures $\mR_L^{\text{OL}}=\O\Bigl(\sqrt{\sum_{\ell=1}^L \lvert \mE_{\ell}\rvert} \Bigr )=\O(\sqrt T)$.
\end{theorem}
With the formal proof deferred to \Cref{sec:appendix DualUpd}, we sketch the key technical steps in \Cref{sec:sketch O-FTRL-FP}.
Plugging the \oftrlfp subroutine (\Cref{alg:lambda O-FTRL-FP}) into \algname framework (\Cref{alg:mech}), we obtain \mechO mechanism. It enjoys a much sharper regret bound compared to \mech (cf. \Cref{thm:main theorem FTRL}):
\begin{corollary}[\mechO Mechanism]\label{thm:main theorem O-FTRL-FP} When setting $L=\sqrt T$ and  $\lvert \mE_1\rvert=\lvert \mE_2\rvert=\cdots \lvert \mE_L\rvert = \sqrt T$, the
\mechO mechanism---obtained by plugging \oftrlfp (\Cref{alg:lambda O-FTRL-FP}) into \algname (\Cref{alg:mech}) and configuring according to \Cref{thm:DualUpd guarantee O-FTRL-FP informal}---induces a PBE $\bm \pi$ (\Cref{def:PBE}) such that:
\begin{enumerate}
\item \textbf{Efficiency:} The regret bound is near-optimal in $T$, i.e., $\mR_T(\bm \pi, \mechO) = \Otil\big ((d+K^2) \sqrt T\big )$;
\item \textbf{Feasibility.} All constraints are satisfied with probability 1, i.e., $\mB_T(\bm \pi, \mechO) = 0$; and
\item \textbf{Incentive-Awareness.} The number of rounds where $\max_{i\in [K]}\lvert u_{t,i}-v_{t,i}\rvert\ge 1/\sqrt t$ is at most $\Otil(K \sqrt T)$.
\end{enumerate}
\end{corollary}

The $\Otil(\sqrt T)$ regret attained by \mechO is near-optimal (up to polylog) in $T$: Even in non-strategic resource allocation, an $\Omega(\sqrt T)$ lower bound is unavoidable \citep[Lemma 1]{arlotto2019uniformly}. Therefore, our result in \Cref{thm:main theorem O-FTRL-FP} successfully bypasses the $\Omega(T^{2/3})$ price of incentives: By exploiting the predictability due to incentive-awareness and new online learning algorithms, one can recover the near-optimal $\Otil(\sqrt{T})$ regret in the presence of strategic agents---thus virtually paying no price of incentives.

\begin{algorithm}[t!]
\caption{Dual-Update Subroutine using Optimistic FTRL with Fixed Points (\oftrlfp)}\label{alg:lambda O-FTRL-FP}
\begin{algorithmic}[1]
\Require Epoch $\ell$, learning rate $\eta_\ell>0$, regularizer $\Psi\colon \mathbb R_{\ge 0}^d\to \mathbb R$, and historical reports and consumptions.
\State $\bm \lambda_1=\bm 0$. For $\ell\ge 2$, solve the following (approximate) fixed point for $(\bm \lambda_\ell,\tilde{\bm g}_\ell)\in \bm \Lambda\times \mathbb R^d$ and return $\bm \lambda_\ell$:
\algeq[eq:lambda O-FTRL-FP]{$\displaystyle \begin{aligned}
\bm{\lambda}_\ell
\approx &\argmin_{\bm{\lambda} \in \bm{\Lambda}}
\Biggl (
\sum_{\ell' < \ell} \sum_{\tau \in \mE_{\ell'}}
(\bm{\rho} - \bm{c}_{\tau, i_\tau})^\trans \bm{\lambda}
+ \tilde{\bm g}_\ell(\bm{\lambda}_\ell)^\trans \bm{\lambda}
+ \frac{1}{\eta_\ell} \Psi(\bm{\lambda})
\Biggr),\\
\text{where }
&\tilde{\bm g}_\ell(\bm{\lambda}_\ell)
:=
\frac{\lvert \mE_\ell\rvert}
{\sum_{\ell'<\ell} \lvert \mE_{\ell'}\rvert}
\sum_{\ell'<\ell}
\sum_{\tau \in \mE_{\ell'}}
\left(
\bm \rho-\bm c_{\tau,\tilde i_\tau(\bm \lambda_\ell)}
\right),\\
&\tilde i_\tau(\bm \lambda_\ell)
:=
\argmax_{i \in \{0\} \cup [K]}
\left(
u_{\tau,i} - \bm \lambda_\ell^\trans \bm{c}_{\tau,i}
\right),
\quad
\forall \ell'<\ell,\ \tau\in \mE_{\ell'}.\end{aligned}$}
\end{algorithmic}
\end{algorithm}

\subsection{Tackling Endogeneity and Discontinuity: Optimistic FTRL with Fixed Points}\label{sec:circular dependency}
We first detail our \oftrlfp algorithm, which tackles \emph{two} key challenges: endogeneity and discontinuity.

\subsubsection{Endogeneity.}
As sketched in \Cref{sec:preditability}, we face a circular dependency between deciding $\bm \lambda_\ell$ and predicting $\tilde{\bm g}_\ell$. We call this challenge \emph{endogeneity} of predictions. We highlight that this issue is not specifically due to our prediction procedure, which calculates historical counterfactual allocations and consumptions:
\begin{enumerate}
\item \textbf{Prediction requires Allocation.} In order to predict $\nabla F_\ell\approx \sum_{t\in \mE_\ell} (\bm \rho - \bm c_{t, \tilde i_t^\ast})$ accurately (see \Cref{claim:almost iid}), we need to approximate the distribution of dual-adjusted allocations $\tilde i_t^\ast:=\argmax_i (v_{t,i}-\bm \lambda_\ell^\trans \bm c_{t,i})$, $\forall t\in \mE_\ell$.
\item \textbf{Allocation requires Decision.} By definition, the allocations $\tilde i_t^\ast$ depend on the to-be-decided dual $\bm \lambda_\ell$.
\item \textbf{Decision requires Prediction.} However, the idea of \oftrl is exactly to decide $\bm \lambda_\ell$ as the minimizer of a function that includes the prediction $\tilde{\bm g}_\ell$. Thus, in \oftrl, deciding $\bm \lambda_\ell$ requires a good $\tilde{\bm g}_\ell$ in advance.
\end{enumerate}

This endogeneity challenge is unique to our setting: In standard online learning, predictions are typically exogenous and only depend on the history, as in the case faced by \citet{rakhlin2013online}.
Instead of first predicting $\tilde{\bm g}_\ell$ and then deciding $\bm \lambda_\ell$ based on $\tilde{\bm g}_\ell$, we find a fixed point such that $(\bm \lambda_\ell,\tilde{\bm g}_\ell)$ is self-consistent: The dual $\bm \lambda_\ell$ minimizes the regularized cumulative historical loss plus the predicted loss $\tilde{\bm g}_\ell(\bm \lambda_\ell)^\trans \bm \lambda_\ell$ (cf. \Cref{eq:lambda FTRL}), while the prediction $\tilde{\bm g}_\ell(\bm \lambda_\ell)$ is the historical counterfactual consumption that would be induced by the dual $\bm \lambda_\ell$. This allows the prediction $\tilde{\bm g}_\ell$ to be a function of the decision $\bm \lambda_\ell$, but yet another issue arises.

\subsubsection{Discontinuity.}
Since $\tilde{\bm g}_\ell(\bm \lambda_\ell)$ averages historical counterfactual consumptions---which involves discrete argmax operations---it is discontinuous.
Therefore, \Cref{eq:lambda O-FTRL-FP} may not admit an exact fixed point; instead, we can only aim for an approximate solution.
As another key feature of our \oftrlfp algorithm, we prove that the following weaker condition on $\tilde{\bm g}_\ell(\bm \lambda_\ell)$ suffices to ensure approximate fixed points, which in turn give an online learning regret bound (see \Cref{eq:O-FTRL-FP online learning regret}) similar to that of \oftrl (see \Cref{eq:O-FTRL online learning regret}).
\begin{definition}[$(\epsilon, L)$-Gradient-Stability]\label{def:gradient stability}
A predicted gradient function $\tilde{\bm g}_\ell(\bm \lambda_\ell)$ is said to be $(\epsilon_\ell, L_\ell)$-gradient-stable, if for any two $\bm \lambda_\ell^1,\bm \lambda_\ell^2 \in \bm \Lambda$ with $\lVert \bm \lambda_\ell^1-\bm \lambda_\ell^2 \rVert_2 < \epsilon_r$, we always have $\lVert \tilde{\bm g}_\ell(\bm \lambda_\ell^1)-\tilde{\bm g}_\ell(\bm \lambda_\ell^2)\rVert_2\le L_\ell$.
\end{definition}

Once \Cref{def:gradient stability} holds, we prove that our \oftrlfp ensures the following online learning regret (a generalized version of \oftrlfp, as well as its performance guarantee, is in \Cref{sec:O-FTRL lemma}; we expect it to be of independent interest to online learning problems with endogenous and discontinuous predictions):
\begin{equation}\begin{aligned}\label{eq:O-FTRL-FP online learning regret}
\mR_L^{\text{OL}}&\le \O(\eta_L^{-1})+\sum_{\ell=1}^L \eta_\ell \E\left [\lVert \nabla F_\ell-\tilde{\bm g}_\ell(\bm \lambda_\ell)\rVert_2^2\right ]+\sum_{\ell=1}^L \O\bigl (\eta_\ell L_\ell+\eta_\ell^2 L_\ell^2\bigr ).
\end{aligned}\end{equation}
Compared to the \oftrl bound in \Cref{eq:O-FTRL online learning regret}, the regret overhead of \oftrlfp---due to the endogeneity and discontinuity in predictions---is $\O(\eta_\ell L_\ell+\eta_\ell^2 L_\ell^2)$. In our case, this term is also of order $\Otil(\sqrt T)$.
Another drawback of \oftrlfp, however, is the computational cost: Both \ftrl and \oftrl solves convex optimization problems, while our \oftrlfp requires an approximate fixed-point oracle. While a computationally efficient implementation remains open, we present a practical approximation in \Cref{sec:computational cost of O-FTRL-FP}.

\subsection{Proof Sketch of \Cref{thm:DualUpd guarantee O-FTRL-FP informal}}\label{sec:sketch O-FTRL-FP}
\paragraph{Step 1: Ensuring gradient stability.} We first show the gradient estimator $\tilde{\bm g}_\ell(\bm \lambda_\ell)$ in \Cref{eq:lambda O-FTRL-FP} satisfies the $(\epsilon_\ell, L_\ell)$-gradient stability condition in \Cref{def:gradient stability}, which is non-trivial due to the discrete argmax operations therein.
In \Cref{lem:approximate continuity of predictions}, we bridge this gap by exploiting the \textit{smooth consumption} property in \Cref{assump:smooth consumptions}: We prove that for sufficiently small perturbations around $\bm \lambda_\ell$, the probability of having a different historical counterfactual allocation $\tilde i_\tau(\bm \lambda_\ell)$ is small, thereby bounding the variation in the predicted gradient $\tilde{\bm g}_\ell(\bm \lambda_\ell)$. Consequently, the $\O(\eta_\ell L_\ell+\eta_\ell^2 L_\ell^2)$ term in \Cref{eq:O-FTRL-FP online learning regret}---capturing the discontinuity of $\tilde{\bm g}_\ell(\bm \lambda_\ell)$---is small.

\paragraph{Step 2: Resolving statistical dependency.} We now bound the prediction error $\E[\lVert \nabla F_\ell - \tilde{\bm g}_\ell(\bm \lambda_\ell) \rVert_2^2]$ in \Cref{eq:O-FTRL-FP online learning regret}.
Unlike statistical learning where samples are i.i.d., our ``samples''---historical reports---are not: The dual $\bm \lambda_\ell$ depends on the historical reports, which were generated by agents reacting to previous duals.
This breaks standard concentration inequalities. To overcome this, we develop a \textit{uniform convergence} in \Cref{lem:stability term 2}: We show that, under any possible dual $\bm \lambda \in \bm \Lambda$, the empirical average of historical counterfactual consumptions converge to its expectation uniformly, ensuring a small prediction error even for history-dependent $\bm \lambda_\ell$'s.

\paragraph{Step 3: Handling bias from strategic corruption.} Finally, our prediction $\tilde{\bm g}_\ell$ relies on potentially untruthful historical reports $u_{\tau,i}$ (where $\tau$ is from a previous epoch $\ell'<\ell$). This introduces a strategic bias into both the current gradient estimation and the historical data.
We control these two biases in \Cref{lem:stability term 1,lem:stability term 3} by leveraging the incentive guarantees of \algname (\Cref{thm:informal InterEpoch}): Although agents are strategic, the mechanism ensures that large deviations from truth-telling are rare. We rigorously propagate these ``rare'' deviation bounds through the learning dynamics to show that their cumulative impact on the regret is negligible.

We now arrive at the desired conclusion: Instead of \ftrl whose regret scales as $\Otil\Big (\sqrt{\sum_{\ell=1}^L \lvert \mE_\ell\rvert^2}\Big )=\Otil(T^{2/3})$---due to the magnitude term in \Cref{eq:FTRL online learning regret}---by successfully exploiting optimistic predictions that are both endogenous and discontinuous, our \oftrlfp replaces the $\O(\lvert \mE_\ell\rvert^2)$ magnitude with an $\O(\lvert \mE_\ell\rvert)$ variance (cf. \Cref{eq:O-FTRL-FP online learning regret}). Balancing $\O(\eta_L^{-1})$ with $\sum_{\ell=1}^L \eta_\ell \O(\lvert \mE_\ell\rvert)$ thus gives $\mR_L^{\text{OL}}=\Otil\Bigl (\sqrt{\sum_{\ell=1}^L \lvert \mE_\ell\rvert}\Bigr)=\Otil(\sqrt T)$.

\section{Extension: Multi-Unit Resource Allocation with Multi-Unit Demands}\label{sec:extensions}\label{sec:multi-unit}
Thus far, we focus on the dynamic allocation of a single resource over time. In practical settings like cloud computing platforms or ride-shares, the planner often possesses multiple identical units of resources to fulfill the demand of many users. In this section, we show that our \algname framework and mechanisms seamlessly extend to such multi-unit resource allocation settings; we also allow agents to have multi-unit demands.

\paragraph{Setup.}
In each round $t\in [T]$, the planner has $N$ (which is fixed and known) identical resources available. Each agent $i\in [K]$ has a private {marginal value vector} $\bm v_{t,i}\in [0,1]^N$ in each round $t\in [T]$, such that being allocated $I_i\in [N]$ resources generates a total value of $\sum_{n=1}^{I_i} v_{t,i,n}$. The vector $\bm v_{t,i}$ is an i.i.d. sample from a fixed but unknown $N$-dimensional distribution $\mV_i$ such that the marginal values are diminishing (i.e., $v_{t,i,1}\ge \cdots\ge v_{t,i,N}$) almost surely \citep[see, e.g.,][]{engelbrecht1998multi}. Allocating one unit of resource to agent $i$ incurs a consumption of $\bm c_{t,i}\in [0,1]^d$, which is i.i.d. from a $\mC_i$ satisfying \Cref{assump:smooth consumptions}.

Therefore, in each round $t$, the planner decides an allocation vector $I_t$ from $\mathcal I:=\{I\in \mathbb N^K\mid \sum_{i=1}^K I_i\le N\}$. This generates a social value of $v_{t,I_t}:=\sum_{i=1}^K \sum_{n=1}^{I_{t,i}} v_{t,i,n}$, and the total consumption to the planner is $\bm c_{t,I_t}:=\sum_{i=1}^K I_{t,i}\cdot \bm c_{t,i}$.
We still write the planner's per-round budget as $\bm \rho$, but now $\bm\rho\in [0,N]^d$. The social welfare regret $\mR_T$, analogous to \Cref{eq:regret}, is $\mR_T(\bm \pi,\bm M):=\E_{\bm \pi,\bm M} \left [\sum_{t=1}^T (v_{t,I_t^\ast}-v_{t,I_t})\right ]$ where $\{I_t^\ast\}_{t\in [T]}$ is the offline optimal benchmark defined as $\{I_t^\ast\}_{t\in [T]}=\argmax_{I_1,I_2,\ldots,I_T\in \mathcal I} \sum_{t=1}^T v_{t,I_t}$ s.t. $\frac 1{T}\sum_{t=1}^T \bm c_{t,I_t} \le \bm \rho$.

\paragraph{Multi-Unit \algname Framework.}
In the single-unit \algname framework (\Cref{alg:mech}), we used the dual-adjusted second-price auction for allocation and payment. Specifically, in \Cref{line:primal allocation} of \Cref{alg:mech}, we set
$
i_t = \argmax_{i\in [K]} \tilde{u}_{t,i}$ and $p_{t,i_t} = \bm{\lambda}_t^\trans \bm{c}_{t,i_t} + \max_{j \ne i_t} \tilde{u}_{t,j}
$ (recall $\tilde u_{t,i}:=u_{t,i}-\bm \lambda_t^\trans \bm c_{t,i}$ is the dual-adjusted report).
In the multi-unit \algname framework, given the reports from agents, namely $\bm u_{t,i}=(u_{t,i,1},u_{t,i,2},\ldots,u_{t,i,N})$, we analogously decide the allocation $I_t\in \mathcal I$ to maximize the (reported) dual-adjusted value:
\begin{equation}\label{eq:allocation multi-unit multi-demand}
I_t=\argmax_{I\in \mathcal I} \left (u_{t,I}-\bm \lambda_t^\trans \bm c_{t,I}\right ),\quad u_{t,I}:=\sum_{i=1}^K \sum_{n=1}^{I_i} u_{t,i,n}.
\end{equation}
Since marginal values are diminishing, without loss of generality assume the reports satisfy $u_{t,i,n}\ge u_{t,i,n+1}$ for all $1\le n<N$ (if not, the planner can always redefine monotonized reports as $u_{t,i,n}':=\min_{n'=1}^n u_{t,i,n'}$).
\Cref{eq:allocation multi-unit multi-demand} is therefore exactly solvable by greedy: let $\tilde u_{t,i,n}:=u_{t,i,n}-\bm \lambda_t^\trans \bm c_{t,i}$, then we have $\tilde u_{t,i,n}\ge \tilde u_{t,i,n+1}$ for all $1\le n<N$. Therefore, the top-$N$ among all $\{\tilde u_{t,i,n}\}_{i\in [K],n\in [N]}$'s naturally correspond to a valid $I_t\in \mathcal I$.

The payment component is slightly more complicated.
With multi-demand, uniform-price payment rules (e.g., second- or $(N+1)$-th price auction) fail to be incentive-compatible because an agent can win the items and---at the same time---alter their payment \citep{ausubel2014demand}. We equip \algname with VCG pricing:
\begin{equation}\label{eq:payment multi-unit multi-demand}
\begin{aligned}p_{t,i}&=\bm \lambda_t^\trans (I_{t,i} \bm c_{t,i})+\max_{J\in \mathcal I,J_i=0} \left (u_{t,J}-\bm \lambda_t^\trans \bm c_{t,J}\right )-\sum_{j\ne i} \left (\sum_{n=1}^{I_{t,j}} u_{t,j,n}-\bm \lambda_t^\trans \bigl( I_{t,j}\cdot \bm c_{t,j}\bigr)\right ),~\forall i\in [K].\end{aligned}
\end{equation}
In words, the payment from agent $i$ consists of two parts: the first part translates dual-adjusted values to actual ones via consumptions, and the second part is others' optimal dual-adjusted welfare when agent $i$ is absent.
Similar to \Cref{eq:allocation multi-unit multi-demand}, the $\max_J$ term in \Cref{eq:payment multi-unit multi-demand} is also solvable by a greedy over all $\{\tilde u_{t,j,n}\}_{j\ne i,n\in [N]}$'s.

The randomized exploration component (\Cref{line:exploration}) is changed to the following: in each exploration round $t$ (happening uniformly w.p. $\frac{1}{\lvert \mE_\ell\rvert}$), we sample an agent $i\sim \Unif[K]$, a random index $n\sim \Unif[N]$, and a uniform price $p\sim \Unif[0,1]$. We allocate one resource (and forfeit the remaining) to agent $i$ should their $n$-th report, namely $u_{t,i,n}$, be at least $p$, where the payment is $p_{t,i}=p$. Otherwise, we forfeit all $N$ items.
(We only use one resource for randomized exploration in order to keep the \algname analysis in \Cref{sec:sketch PrimalAlloc} mostly unchanged. Using multiple resources for exploration only gives logarithmic improvements in the regret.)

\paragraph{Dual-Update Subroutines.}
We also give multi-unit analogs of the dual-update subroutines in \Cref{sec:mechanism dual}. For the \ftrl-based \Cref{alg:lambda FTRL}, we change the per-round consumption vector from $\bm c_{\tau,i_\tau}$ to $\bm c_{\tau,I_\tau}$:
\begin{equation}\label{eq:lambda FTRL multi-unit}
\bm{\lambda}_\ell = \argmin_{\bm{\lambda} \in \bm{\Lambda}} \Biggl(\sum_{\ell' < \ell} \sum_{\tau \in \mE_{\ell'}} \left (\bm{\rho} - \bm{c}_{\tau, I_\tau}\right )^\trans \bm{\lambda} + \frac{1}{\eta_\ell} \Psi(\bm{\lambda})\Biggr).
\end{equation}
This is because the \dualreg to online learning reduction in \Cref{lem:DualUpd to regret informal}---which only uses the linear programming formulation of $\{I_t^\ast\}_{t\in [T]}$---still holds in multi-unit setups.
Plugging \Cref{eq:lambda FTRL multi-unit} into the multi-unit \algname in \Cref{eq:allocation multi-unit multi-demand,eq:payment multi-unit multi-demand} gives the multi-unit \mech mechanism; its guarantee is in \Cref{thm:multi-unit}.

For the \oftrlfp-based \Cref{alg:lambda O-FTRL-FP}, we now solve for an approximate fixed point of the following system. Here, in addition to changing the per-round consumption from $\bm c_{\tau,i_\tau}$ to $\bm c_{\tau,I_\tau}$, we also changed the historical counterfactual allocation from $\tilde i_\tau(\bm \lambda_\ell)$ to $\tilde I_\tau(\bm \lambda_\ell)$ due to the difference between \Cref{line:primal allocation,eq:allocation multi-unit multi-demand}:
\begin{equation}\label{eq:lambda O-FTRL-FP multi-unit}
\begin{aligned}
\bm{\lambda}_\ell
\approx &\argmin_{\bm{\lambda} \in \bm{\Lambda}} \Biggl (\sum_{\ell' < \ell} \sum_{\tau \in \mE_{\ell'}} (\bm{\rho} - \bm{c}_{\tau, I_\tau})^\trans \bm{\lambda} + \tilde{\bm g}_\ell(\bm \lambda_\ell)^\trans \bm \lambda +  \frac{1}{\eta_\ell} \Psi(\bm{\lambda})\Biggr ),\\
\text{where }
\tilde{\bm g}_\ell(\bm \lambda_\ell)&:=\frac{\lvert \mE_\ell\rvert}{\sum_{\ell'<\ell} \lvert \mE_{\ell'}\rvert} \sum_{\ell'<\ell,\tau \in \mE_{\ell'}} \left (\bm \rho-\bm c_{\tau,\tilde I_\tau(\bm \lambda_\ell)}\right ),\\
\tilde I_\tau(\bm \lambda_\ell) &:= \argmax_{I\in \mathcal I} \bigl(u_{\tau,I}-\bm \lambda_\ell^\trans \bm c_{\tau,I} \bigr ),\quad \forall \ell'<\ell,\ \tau\in \mE_{\ell'}.\end{aligned}
\end{equation}

Similar to \Cref{eq:allocation multi-unit multi-demand,eq:payment multi-unit multi-demand}, $\tilde I_\tau(\bm \lambda_\ell)$ is also equivalent to the top-$N$ of $\{u_{\tau,i,n}-\bm \lambda_\ell^\trans \bm c_{\tau,i,n}\}_{i\in [K],n\in [N]}$.

\paragraph{Guarantees.}
We give the following theorem, whose proof is given in \Cref{sec:appendix extensions} and sketched below.
\begin{theorem}[Multi-Unit \mech and \mechO]\label{thm:multi-demand}
The multi-unit \mech mechanism ensures $\mR_T=\Otil\big ((K^2 N^2+\sqrt d N) T^{2/3}\big )$, and multi-unit \mechO ensures $\mR_T=\Otil\big ((K^2 N^2+d N) \sqrt T\big )$. Both mechanisms remain feasible and incentive-aware as in \Cref{thm:main theorem FTRL,thm:main theorem O-FTRL-FP}.
\end{theorem}

As a special case, if each agent only demands one resource but has no use for more than one (i.e., $v_{t,i,n}=0$ for all $n\ge 2$), the allocation in \Cref{eq:allocation multi-unit multi-demand}, payment in \Cref{eq:payment multi-unit multi-demand}, and \oftrlfp in \Cref{eq:lambda O-FTRL-FP multi-unit} now only involves top-$N$ of $K$ values. This not only speeds up computation, but also gives the following refined regret bound:
\begin{theorem}[Multi-Unit \mech and \mechO with Unit Demand]\label{thm:multi-unit}
With single-unit demand, the multi-unit \mech mechanism ensures $\mR_T=\Otil\big ((K^2+\sqrt d N) T^{2/3}\big )$, and multi-unit \mechO ensures $\mR_T\big ((K^2+d N) \sqrt T\big )$. Both mechanisms remain feasible and incentive-aware.
\end{theorem}
\begin{proof}[Proof Sketch of \Cref{thm:multi-demand,thm:multi-unit}]
Since the allocation and payment rule in \Cref{eq:allocation multi-unit multi-demand,eq:payment multi-unit multi-demand} comes from VCG, should agents be myopic, they would find truth-telling dominant; that is, \Cref{thm:informal IntraEpoch} still holds. Therefore, for $\gamma$-impatient agents in \Cref{def:impatient}, due to lazy updates and randomized exploration, unilateral deviation from truth-telling is unprofitable unless the current epoch is about to end. Hence there exists an almost-truthful PBE, and consequently the allocations attain primal efficiency. This recovers \Cref{thm:informal IAPD main theorem}.

On the dual side, the online learning loss in \Cref{lem:DualUpd to regret informal} becomes $F_\ell(\bm \lambda)=\big (\sum_{t\in \mE_\ell} (\bm \rho-\sum_{i_t\in I_t}\bm c_{t,i_t})\big )^\trans \bm \lambda$.
The multi-unit \mech bound comes from \Cref{eq:FTRL online learning regret}, but noticing that $\lVert \nabla F_\ell\rVert_2^2$ can now be of order $\O(N^2 \lvert \mE_\ell\rvert^2)$ due to the inflated magnitudes of $\bm \rho$ and $\bm c_{t,I_t}$. For \mechO, as sketched in \Cref{sec:circular dependency}, we need to ensure $(\epsilon,L)$-gradient stability (\Cref{def:gradient stability}); this requires that under small perturbations in $\bm \lambda_\ell$, with high probability $\tilde I_\ell(\bm \lambda_\ell)$ remains unchanged.
Although the candidate allocation actions, namely $I_t\in \mathcal I$, are exponentially more than that in single-unit cases, due to the top-$N$ structure in \Cref{eq:lambda O-FTRL-FP multi-unit}, we still manage to adopt an argument similar to that of \Cref{thm:DualUpd guarantee O-FTRL-FP informal}. See \Cref{sec:appendix extensions} for the full arguments. 
\end{proof}

\section{Numerical Studies}\label{sec:numerical illustration setup}
We evaluate our mechanisms against strategic agents across several resource-allocation settings.
In all experiments, we simulate strategic agents (who are $\gamma=0.9$-impatient; \Cref{def:impatient}) via continuous Q-learning \citep{watkins1992q}: In round $t$, agent $i$ observes state $s_{t,i}:=(v_{t,i},\frac tT,\bm \lambda_t)$ and decides their report $u_{t,i}\in [0,1]$ $\epsilon$-greedily over a ``Q-function'' $Q_i(s_{t,i},u_{t,i})$, defined as the expected $\gamma$-discounted utility of reporting $u_{t,i}$ at $s_{t,i}$.
For each resource-allocation setting considered below, we repeat the same $T=1000$-round game for $1000$ training episodes, during which agents update their Q-functions (a small neural network; see \Cref{sec:appendix numerical}). All metrics are averaged over $5$ independent trials to measure the standard error.

We conduct three experiments: comparison of \mech with non-strategic benchmarks, the role of \algname components towards mitigating the indirect impact, and the trade-off between \ftrl and \oftrlfp.

\subsection{Incentive-Aware Resource Allocation versus Non-Strategic Ones}\label{sec:incentive-aware vs non-strategic}

\paragraph{Setup.}
We consider five synthetic settings. Online packing is the base setup, and others modify one component of it: the feasible allocation actions, the consumption structure, the correlation between values and consumptions, or the units of resources that can be allocated in each round. This allows us to evaluate the mechanisms across distinct sources of difficulty. Full parameter specifications are provided in \Cref{sec:appendix numerical}.

\begin{enumerate}
\item ``Online packing'' is adapted from the online packing LP literature \citep{kesselheim2018primal}. In each round, $K=5$ agents have random values and consume random amounts of resources along all $d=5$ dimensions. As each allocation simultaneously consumes on several dimensions and the per-round budget $\bm \rho$ is relatively tight, the planner must track scarcity across multiple coupled constraints via dual updates.

\item ``Random availability'' is motivated by stochastic online matching, where the set of feasible allocations changes randomly over time \citep{feldman2009online,manshadi2012online}. 
Specifically, each agent is associated with a distinct resource dimension and is independently available in each round. An available agent has a positive value and consumes one unit of its corresponding resource, whereas an unavailable agent has zero value or consumption. Relative to online packing, this removes dense multi-dimensional consumptions and instead introduces random variation in which allocations are feasible. It tests whether the mechanism can respond effectively when allocation opportunities change unpredictably over time.

\item ``Fairness constraint'' changes the consumption structure to $d=2$: One agent consumes one dimension, while all other agents consume the other. Thus persistently allocating to one side depletes its corresponding capacity, raises its dual price, and pushes future allocations to the other side. We use this stylized structure to make the intertemporal role of the dual variables particularly transparent and to analyze the indirect impact in \Cref{fig:vicious cycle}: Strategic reports can distort future duals via current misreports; see \Cref{example:affect the future}.

\item ``Correlated value-cost'' retains the online-packing structure but introduces positive correlation between values and consumptions (i.e., higher-value requests consume more), motivated by the stochastic resource-consumption setting \citep{jiang2020online}. While our theoretical model assumes independent $v_{t,i}$ and $\bm c_{t,i}$, as a robustness check, this setting tests the mechanism when independence assumptions are violated.

\item ``Multi-unit demand'' moves beyond the single-unit resource allocation model. Up to three identical units can be allocated in each round, and each agent has diminishing marginal values across units, following the multi-unit valuation model of \citet{robu2013online}, but we omit the stochastic arrivals and departures therein to remain consistent with our fixed-horizon setting. We apply the multi-unit multi-demand extension of \mech defined in \Cref{eq:allocation multi-unit multi-demand,eq:payment multi-unit multi-demand,eq:lambda FTRL multi-unit}. This setting evaluates whether the mechanisms continue to perform well when both allocations and agents' demands are multi-unit, as analyzed in \Cref{sec:multi-unit}.
\end{enumerate}

\paragraph{Mechanisms.}
We benchmark our incentive-aware mechanism \mech (\Cref{thm:main theorem FTRL}) against two state-of-the-art non-strategic resource-allocation algorithms: \texttt{DMD} \citep{balseiro2020dual} and \texttt{Simple} \citep{li2023simple}. Both algorithms' primal parts allocate to the agent maximizing dual-adjusted report $u_{t,i}-\bm \lambda_t^\trans \bm c_{t,i}$, but have different dual-update subroutines (recall \Cref{fig:primal-dual}). These non-strategic algorithms do not require or utilize payments, so their revenues are naturally zero. We present their pseudo-codes in \Cref{sec:appendix numerical}.

\paragraph{Metrics.}
We report five metrics for each mechanism. The first two evaluates the planner: ``{SW}'' is the social welfare $\sum_t v_{t,i_t}$ normalized by the offline optimum $\sum_t v_{t,i_t^\ast}$, and ``{Rev}'' is the revenue $\sum_t p_{t,i_t}$ normalized by that under the offline optimum (that is, $\sum_t p_{t,i_t^\ast}$). The third one is for agents' utilities: ``{MinUtil}'' is the minimum ratio of an agent's utility when all agents are strategic, to the same agent's utility when all agents are truthful (both under the same mechanism). Our last two metrics measure the incentive-awareness and primal efficiency (see the two bullet points of \Cref{thm:informal IAPD main theorem}): ``{RankCorr}'' is the Spearman rank correlation between agents' values and reports, and ``{PrimalAlloc}'' is the fraction of rounds where there's no misallocation due to misreports, namely $\frac 1T \sum_t \1[i_t=\tilde i_t^\ast]$ with the dual-adjusted optimum $\tilde i_t^\ast=\argmax_i (v_{t,i}-\bm \lambda_t^\trans \bm c_{t,i})$ in \Cref{eq:tilde i definition main}.

\begin{table}[t]
\centering
\small
\caption{\mech versus non-strategic algorithms. Bold are best or within 1\%. Stderr in parentheses.}
\label{table:incentive-aware vs non-strategic}
\resizebox{\textwidth}{!}{%
\begin{tabular}{llccccc}
\hline
Setting & Mechanism & SW $\uparrow$ & Rev $\uparrow$ & MinUtil $\uparrow$ & RankCorr $\uparrow$ & PrimalAlloc $\uparrow$ \\
\hline
\multirow{3}{*}{online packing} & \texttt{DMD} \citep{balseiro2020dual} & 67.8\% {\scriptsize ($\pm$0.0)} & 0.0\% {\scriptsize ($\pm$0.0)} & 61.2\% {\scriptsize ($\pm$0.3)} & 0.058 {\scriptsize ($\pm$0.001)} & 42.7\% {\scriptsize ($\pm$0.0)} \\
\cdashline{2-7}
 & \texttt{Simple} \citep{li2023simple} & 67.0\% {\scriptsize ($\pm$0.0)} & 0.0\% {\scriptsize ($\pm$0.0)} & 51.7\% {\scriptsize ($\pm$0.4)} & -0.004 {\scriptsize ($\pm$0.001)} & 40.6\% {\scriptsize ($\pm$0.0)} \\
\cdashline{2-7}
 & \cellcolor{black!7}\mech (ours) & \cellcolor{black!7} \textbf{85.0\% {\scriptsize ($\pm$0.1)}} & \cellcolor{black!7} \textbf{64.8\% {\scriptsize ($\pm$0.1)}} & \cellcolor{black!7} \textbf{91.5\% {\scriptsize ($\pm$0.6)}} & \cellcolor{black!7} \textbf{0.754 {\scriptsize ($\pm$0.002)}} & \cellcolor{black!7} \textbf{68.4\% {\scriptsize ($\pm$0.1)}} \\
\hline
\multirow{3}{*}{random availability} & \texttt{DMD} \citep{balseiro2020dual} & 58.6\% {\scriptsize ($\pm$0.1)} & 0.0\% {\scriptsize ($\pm$0.0)} & 12.2\% {\scriptsize ($\pm$0.2)} & 0.253 {\scriptsize ($\pm$0.002)} & 25.9\% {\scriptsize ($\pm$0.1)} \\
\cdashline{2-7}
 & \texttt{Simple} \citep{li2023simple} & 59.0\% {\scriptsize ($\pm$0.1)} & 0.0\% {\scriptsize ($\pm$0.0)} & 17.6\% {\scriptsize ($\pm$0.2)} & 0.231 {\scriptsize ($\pm$0.001)} & 26.6\% {\scriptsize ($\pm$0.1)} \\
\cdashline{2-7}
 & \cellcolor{black!7}\mech (ours) & \cellcolor{black!7} \textbf{85.5\% {\scriptsize ($\pm$0.1)}} & \cellcolor{black!7} \textbf{68.7\% {\scriptsize ($\pm$0.1)}} & \cellcolor{black!7} \textbf{66.9\% {\scriptsize ($\pm$0.6)}} & \cellcolor{black!7} \textbf{0.808 {\scriptsize ($\pm$0.002)}} & \cellcolor{black!7} \textbf{64.4\% {\scriptsize ($\pm$0.2)}} \\
\hline
\multirow{3}{*}{fairness constraint} & \texttt{DMD} \citep{balseiro2020dual} & 80.5\% {\scriptsize ($\pm$0.0)} & 0.0\% {\scriptsize ($\pm$0.0)} & 5.4\% {\scriptsize ($\pm$0.2)} & 0.008 {\scriptsize ($\pm$0.001)} & 35.3\% {\scriptsize ($\pm$0.0)} \\
\cdashline{2-7}
 & \texttt{Simple} \citep{li2023simple} & 82.0\% {\scriptsize ($\pm$0.0)} & 0.0\% {\scriptsize ($\pm$0.0)} & 4.7\% {\scriptsize ($\pm$0.2)} & 0.037 {\scriptsize ($\pm$0.001)} & 33.3\% {\scriptsize ($\pm$0.0)} \\
\cdashline{2-7}
 & \cellcolor{black!7}\mech (ours) & \cellcolor{black!7} \textbf{89.3\% {\scriptsize ($\pm$0.0)}} & \cellcolor{black!7} \textbf{71.6\% {\scriptsize ($\pm$0.1)}} & \cellcolor{black!7} \textbf{48.6\% {\scriptsize ($\pm$0.4)}} & \cellcolor{black!7} \textbf{0.341 {\scriptsize ($\pm$0.002)}} & \cellcolor{black!7} \textbf{53.5\% {\scriptsize ($\pm$0.1)}} \\
\hline
\multirow{3}{*}{correlated value-cost} & \texttt{DMD} \citep{balseiro2020dual} & 53.0\% {\scriptsize ($\pm$0.0)} & 0.0\% {\scriptsize ($\pm$0.0)} & 46.6\% {\scriptsize ($\pm$0.2)} & -0.096 {\scriptsize ($\pm$0.001)} & 11.3\% {\scriptsize ($\pm$0.0)} \\
\cdashline{2-7}
 & \texttt{Simple} \citep{li2023simple} & 52.1\% {\scriptsize ($\pm$0.0)} & 0.0\% {\scriptsize ($\pm$0.0)} & \textbf{52.6\% {\scriptsize ($\pm$0.1)}} & -0.090 {\scriptsize ($\pm$0.002)} & 7.5\% {\scriptsize ($\pm$0.0)} \\
\cdashline{2-7}
 & \cellcolor{black!7}\mech (ours) & \cellcolor{black!7} \textbf{81.3\% {\scriptsize ($\pm$0.1)}} & \cellcolor{black!7} \textbf{72.9\% {\scriptsize ($\pm$0.1)}} & \cellcolor{black!7} 46.1\% {\scriptsize ($\pm$1.2)} & \cellcolor{black!7} \textbf{0.691 {\scriptsize ($\pm$0.002)}} & \cellcolor{black!7} \textbf{60.7\% {\scriptsize ($\pm$0.2)}} \\
\hline
\multirow{3}{*}{multi-unit demand} & \texttt{DMD} \citep{balseiro2020dual} & 69.0\% {\scriptsize ($\pm$0.0)} & 0.0\% {\scriptsize ($\pm$0.0)} & 64.0\% {\scriptsize ($\pm$0.4)} & 0.402 {\scriptsize ($\pm$0.001)} & 11.2\% {\scriptsize ($\pm$0.0)} \\
\cdashline{2-7}
 & \texttt{Simple} \citep{li2023simple} & 69.8\% {\scriptsize ($\pm$0.0)} & 0.0\% {\scriptsize ($\pm$0.0)} & 60.4\% {\scriptsize ($\pm$0.4)} & 0.357 {\scriptsize ($\pm$0.001)} & 11.6\% {\scriptsize ($\pm$0.0)} \\
\cdashline{2-7}
 & \cellcolor{black!7}\mech (ours) & \cellcolor{black!7} \textbf{88.9\% {\scriptsize ($\pm$0.1)}} & \cellcolor{black!7} \textbf{66.6\% {\scriptsize ($\pm$0.1)}} & \cellcolor{black!7} \textbf{90.2\% {\scriptsize ($\pm$0.5)}} & \cellcolor{black!7} \textbf{0.699 {\scriptsize ($\pm$0.001)}} & \cellcolor{black!7} \textbf{47.6\% {\scriptsize ($\pm$0.2)}} \\
\hline
\end{tabular}}
\end{table}

\paragraph{Results.}
\Cref{table:incentive-aware vs non-strategic} shows that our incentive-aware resource allocation mechanism, \mech, substantially outperforms non-strategic primal-dual benchmarks. Across all five settings, \mech is the only mechanism that consistently attains high SW, Rev, RankCorr, and PrimalAlloc (MinUtil is also the highest except for one setup). For example, in online packing, relative to the best non-strategic benchmark, \mech improves SW from $67.8\%$ to $85.0\%$, Rev from $0.0\%$ to $64.8\%$, RankCorr from $0.058$ to $0.754$, and PrimalAlloc from $42.7\%$ to $68.4\%$. The same pattern persists in the other settings. The visualization in \Cref{fig:plot main} uses the correlated value-cost setup: It plots value-report scatters, SW across episodes, and misallocation due to misreports rate ($1-\text{PrimalAlloc}$) under \texttt{DMD} \citep{balseiro2020dual} and \mech.

\subsection{Role of Indirect Impact: Epoch-Based Lazy Updates and Randomized Exploration}
\label{example:affect the future}

\Cref{table:incentive-aware vs non-strategic} compares \mech with allocation algorithms that were designed for non-strategic reports. We now exhibit the more specific issue identified in \Cref{sec:mechanism primal}: Even with dual-adjusted payments---which deters the direct benefit of misreporting---strategic agents may still utilize the \emph{indirect impact} in \Cref{fig:vicious cycle} to influence future duals and allocations.
We consider the fairness-constraint setting, where one single agent consumes dimension 1 and all others consume dimension 2. Thus if an agent strategically under-reports in a round $t\in \mE_\ell$, the allocation $i_t$ immediately shifts toward the other side; the planner hence observes an imbalanced consumption $\sum_{t\in \mE_\ell} \bm c_{t,i_t}$. Therefore, $\bm \lambda_{\ell+1}$ penalizes the other side, which makes future allocations favorable to the misreporting agent. The ablation below tests the impact of such indirect manipulation.

\paragraph{Mechanisms.}
In \Cref{fig:fairness_dual_manipulation}, we additionally implement two mechanisms: ``{PayOnly}'' is primal-dual equipped with dual-adjusted payments (\Cref{sec:dual-adjusted payments}) but not epoch-based lazy updates (\Cref{sec:epoch-based lazy updates}) or randomized exploration (\Cref{sec:random exploration}). ``{NoExplore}'' additionally keeps epoch-based lazy updates (but not exploration).

\paragraph{Results.}
The comparison in \Cref{fig:fairness_dual_manipulation} reveals a trade-off: PayOnly achieves the highest SW (92.6\%), followed by NoExplore (90.7\%) and \mech (89.3\%). We argue this is non-surprising: Compared to PayOnly, \mech adds lazy updates---slowing down the adjustment of dual variables---and randomized exploration that deliberately departing from the welfare-maximizing allocation.
But these additional components are pivotal to tame strategic behavior. Indeed, \mech is better on the remaining report- and allocation-related metrics: PayOnly has a much lower RankCorr ($0.304$ versus $0.341$) and PrimalAlloc ($45.9\%$ versus $53.5\%$). Adding lazy updates alone removes much of the allocation-inefficiency: NoExplore reaches almost the same PrimalAlloc as \mech but has a lower RankCorr ($0.320$), i.e., agents remain less truthful. In other words, lazy updates limit agents' ability to manipulate future dual variables and allocations, while randomized exploration strengthens incentives for truthful reporting; both agree well with our \Cref{sec:mechanism primal}.

\begin{figure}[t]
\centering
\begin{tikzpicture}
\begin{groupplot}[
    group style={group size=3 by 1,horizontal sep=1.5cm},
    width=0.23\textwidth,
    height=0.24\textwidth,
    ymajorgrids=true,
    grid style={black!10},
    axis line style={black!60},
    tick style={black!60},
    tick label style={black,font=\tiny},
    label style={black},
    title style={black},
    every axis plot/.append style={draw=black},
]
\nextgroupplot[
    title={SW/OPT $\uparrow$},
    symbolic x coords={PaidPD,NoExplor,IAPD},
    xtick={PaidPD,NoExplor,IAPD},
    xticklabels={PayOnly,NoExplore,\mech},
    x tick label style={font=\tiny,rotate=28,anchor=east},
    ymin=0,
    ymax=109.2608,
    yticklabel={\pgfmathprintnumber{\tick}\%},
    nodes near coords,
    every node near coord/.append style={font=\scriptsize,text=black},
    point meta=explicit symbolic,
]
\addplot[mark=none,ybar,bar width=7pt,draw=black,fill=black!15,error bars/.cd,y dir=both,y explicit] coordinates {(PaidPD,92.5514) +- (0,0.0425) [\textcolor{black}{$92.6\%$}]};
\addplot[mark=none,ybar,bar width=7pt,draw=black,fill=black!40,error bars/.cd,y dir=both,y explicit] coordinates {(NoExplor,90.6626) +- (0,0.0377) [\textcolor{black}{$90.7\%$}]};
\addplot[mark=none,ybar,bar width=7pt,draw=black,fill=black!65,postaction={pattern=north east lines},error bars/.cd,y dir=both,y explicit] coordinates {(IAPD,89.2701) +- (0,0.0347) [\textcolor{black}{$89.3\%$}]};

\nextgroupplot[
    title={RankCorr $\uparrow$},
    symbolic x coords={PaidPD,NoExplor,IAPD},
    xtick={PaidPD,NoExplor,IAPD},
    xticklabels={PayOnly,NoExplore,\mech},
    x tick label style={font=\tiny,rotate=28,anchor=east},
    ymin=0,
    ymax=0.4054,
    nodes near coords,
    every node near coord/.append style={font=\scriptsize,text=black},
    point meta=explicit symbolic,
]
\addplot[mark=none,ybar,bar width=7pt,draw=black,fill=black!15,error bars/.cd,y dir=both,y explicit] coordinates {(PaidPD,0.3037) +- (0,0.0023) [\textcolor{black}{$0.304$}]};
\addplot[mark=none,ybar,bar width=7pt,draw=black,fill=black!40,error bars/.cd,y dir=both,y explicit] coordinates {(NoExplor,0.3203) +- (0,0.0019) [\textcolor{black}{$0.320$}]};
\addplot[mark=none,ybar,bar width=7pt,draw=black,fill=black!65,postaction={pattern=north east lines},error bars/.cd,y dir=both,y explicit] coordinates {(IAPD,0.3414) +- (0,0.0021) [\textcolor{black}{$0.341$}]};

\nextgroupplot[
    title={PrimalAlloc $\uparrow$},
    symbolic x coords={PaidPD,NoExplor,IAPD},
    xtick={PaidPD,NoExplor,IAPD},
    xticklabels={PayOnly,NoExplore,\mech},
    x tick label style={font=\tiny,rotate=28,anchor=east},
    ymin=0,
    ymax=63.3247,
    yticklabel={\pgfmathprintnumber{\tick}\%},
    nodes near coords,
    every node near coord/.append style={font=\scriptsize,text=black},
    point meta=explicit symbolic,
]
\addplot[mark=none,ybar,bar width=7pt,draw=black,fill=black!15,error bars/.cd,y dir=both,y explicit] coordinates {(PaidPD,45.8643) +- (0,0.1671) [\textcolor{black}{$45.9\%$}]};
\addplot[mark=none,ybar,bar width=7pt,draw=black,fill=black!40,error bars/.cd,y dir=both,y explicit] coordinates {(NoExplor,53.4603) +- (0,0.1594) [\textcolor{black}{$53.5\%$}]};
\addplot[mark=none,ybar,bar width=7pt,draw=black,fill=black!65,postaction={pattern=north east lines},error bars/.cd,y dir=both,y explicit] coordinates {(IAPD,53.5467) +- (0,0.1183) [\textcolor{black}{$53.5\%$}]};
\end{groupplot}
\end{tikzpicture}
\caption{Strategic manipulation in fairness constraint ($T=1000$, 1000 episodes). ``PayOnly'' removes both epoching and exploration from \mech, and ``NoExplore'' only removes exploration. Error bars are stderr.}
\label{fig:fairness_dual_manipulation}\vspace{-10pt}
\end{figure}

\subsection{Trade-Off between Efficiency and Computational Costs in FTRL and O-FTRL-FP}\label{sec:computational cost of O-FTRL-FP}

We now compare \mech to \mechO.
The latter has a much sharper regret bound (from $\Otil(T^{2/3})$ to $\Otil(\sqrt T)$), but is computationally more expensive: Each \ftrl update aggregates historical consumptions and performs a closed-form projection; in contrast, \oftrlfp needs to calculate how a dual $\bm \lambda_\ell$ would change historical counterfactual allocations, and solve an induced (approximate) fixed-point problem.
We provide an damped iterative approximation of the fixed-point problem \Cref{eq:lambda O-FTRL-FP} in \Cref{sec:appendix numerical}: Starting from the previous dual $\bm \lambda_{\ell-1}$, it performs fixed-point iterations of \Cref{eq:lambda O-FTRL-FP}, until either the fixed-point residual (the $L_2$ difference between the LHS and the RHS) is small enough or the number of iterations reaches 100.

\paragraph{Setup.}
Using the damped iterative approximation of \oftrlfp, we implement an approximated version of \mechO and compare it against \mech in \Cref{tab:oftrlfp_comparison}. In addition to the main performance metrics of SW, Rev, RankCorr, and PrimalAlloc (see \Cref{sec:incentive-aware vs non-strategic}), we additionally report computational metrics: ``FPResidual'' is the fixed-point residual of the found dual $\bm \lambda_\ell$; ``FPIteration'' is the average number of fixed-point iterates conducted; and ``Runtime'' is the per-update running time of the dual-update subroutine.

\paragraph{Results.}
From \Cref{tab:oftrlfp_comparison}, we observe that \mechO improves SW over \mech in all settings (which one exception discussed later). This agrees with our theoretical analysis that \oftrlfp attains lower online learning regret $\mR_L^{\text{OL}}$, and consequently, lower social welfare regret $\mR_T$ (cf. \Cref{thm:main theorem FTRL,thm:main theorem O-FTRL-FP}).
On the other hand, both mechanisms exhibit similar RankCorr and PrimalAlloc; this is because the primal component that is in charge of incentive-awareness and \primalreg, i.e., the \algname framework, is shared.

The only setting where \oftrlfp performs worse than \ftrl (in terms of SW) is the fairness constraint setting. The reason can be seen from the FPResidual column: After an average of 70.1 fixed-point iterations, the fixed-point residual remains as large as 0.059, substantially larger than its counterpart in other settings.
This is due to the structure of fairness constraint: Since there are only two dimensions and their consumptions are usually balanced, a small change in the dimension-2 dual variable $\bm \lambda_\ell$ can switch many historical counterfactual allocations from one side to the other, making the predicted gradient $\tilde{\bm g}_\ell(\bm\lambda_\ell)$ very sensitive to $\bm\lambda_\ell$.
Indeed, following the $(\epsilon,L)$-gradient-stability notion in \Cref{def:gradient stability}, we perturb each epoch's dual variable $\bm\lambda_\ell$ in a uniformly random direction with $\|\Delta\bm\lambda\|_2=0.05$, and measure the resulting change in the predicted gradient, i.e., $\lVert \tilde{\bm g}_\ell(\bm\lambda_\ell+\Delta \bm \lambda)-\tilde{\bm g}_\ell(\bm\lambda_\ell)\rVert_2$. The average change is $4.73$ in fairness constraint, compared with $3.34$ in correlated value-cost, $1.11$ in online packing, and $1.18$ in random availability.
Thus, the same structure that helped isolate the indirect impact in \Cref{example:affect the future} also makes fixed-point approximations harder.

Overall, \Cref{tab:oftrlfp_comparison} confirms another trade-off: when good approximate fixed points are found, \oftrlfp improves performance over \ftrl; but this comes with extra computational costs and is sometimes hard.

\begin{table}[t]
\centering
\caption{\mech versus approximate \mechO. Bold are best or within 1\%. Stderr in parentheses.}
\label{tab:oftrlfp_comparison}
\resizebox{\textwidth}{!}{%
\begin{tabular}{llccccccrl}
\hline
Setting & Dual-Update & SW $\uparrow$ & Rev $\uparrow$ & RankCorr $\uparrow$ & PrimalAlloc $\uparrow$ & FPResidual $\downarrow$ & FPIteration & \multicolumn{2}{c}{Runtime} \\
\hline
\multirow{2}{*}{online packing} & \ftrl & 85.0\% {\scriptsize ($\pm$0.09)} & 64.8\% {\scriptsize ($\pm$0.08)} & \textbf{0.754 {\scriptsize ($\pm$0.002)}} & \textbf{68.4\% {\scriptsize ($\pm$0.11)}} & \multicolumn{2}{c}{ } & 0.85 & ms \\
\cdashline{2-10}
 & \cellcolor{black!7}\oftrlfp & \cellcolor{black!7}\textbf{87.4\% {\scriptsize ($\pm$0.09)}} & \cellcolor{black!7}\textbf{69.1\% {\scriptsize ($\pm$0.07)}} & \cellcolor{black!7}\textbf{0.751 {\scriptsize ($\pm$0.002)}} & \cellcolor{black!7}\textbf{68.3\% {\scriptsize ($\pm$0.11)}} & \cellcolor{black!7}0.006 {\scriptsize ($\pm$0.000)} & \cellcolor{black!7}68.7  {\scriptsize ($\pm$0.1)} & \cellcolor{black!7}116.5 & \cellcolor{black!7}ms \\
\hline
\multirow{2}{*}{random availability} & \ftrl & 85.5\% {\scriptsize ($\pm$0.12)} & 68.7\% {\scriptsize ($\pm$0.06)} & \textbf{0.808 {\scriptsize ($\pm$0.002)}} & \textbf{64.4\% {\scriptsize ($\pm$0.16)}} & \multicolumn{2}{c}{ } & 0.95 & ms \\
\cdashline{2-10}
 & \cellcolor{black!7}\oftrlfp & \cellcolor{black!7}\textbf{89.5\% {\scriptsize ($\pm$0.13)}} & \cellcolor{black!7}\textbf{73.0\% {\scriptsize ($\pm$0.06)}} & \cellcolor{black!7}\textbf{0.803 {\scriptsize ($\pm$0.002)}} & \cellcolor{black!7}\textbf{64.8\% {\scriptsize ($\pm$0.16)}} & \cellcolor{black!7}0.001 {\scriptsize ($\pm$0.000)} & \cellcolor{black!7}5.5  {\scriptsize ($\pm$0.1)} & \cellcolor{black!7}31.1 & \cellcolor{black!7}ms \\
\hline
\multirow{2}{*}{fairness constraint} & \ftrl & \textbf{89.3\% {\scriptsize ($\pm$0.03)}} & \textbf{71.6\% {\scriptsize ($\pm$0.07)}} & \textbf{0.341 {\scriptsize ($\pm$0.002)}} & \textbf{53.5\% {\scriptsize ($\pm$0.12)}} & \multicolumn{2}{c}{ } & 0.95 & ms \\
\cdashline{2-10}
 & \cellcolor{black!7}\oftrlfp & \cellcolor{black!7}85.6\% {\scriptsize ($\pm$0.07)} & \cellcolor{black!7}62.9\% {\scriptsize ($\pm$0.22)} & \cellcolor{black!7}0.198 {\scriptsize ($\pm$0.003)} & \cellcolor{black!7}\textbf{53.8\% {\scriptsize ($\pm$0.14)}} & \cellcolor{black!7}0.059 {\scriptsize ($\pm$0.001)} & \cellcolor{black!7}70.1  {\scriptsize ($\pm$0.1)} & \cellcolor{black!7}135.8 & \cellcolor{black!7}ms \\
\hline
\multirow{2}{*}{correlated value-cost} & \ftrl & 81.3\% {\scriptsize ($\pm$0.09)} & 72.9\% {\scriptsize ($\pm$0.13)} & 0.691 {\scriptsize ($\pm$0.002)} & \textbf{60.7\% {\scriptsize ($\pm$0.15)}} & \multicolumn{2}{c}{ } & 0.87 & ms \\
\cdashline{2-10}
 & \cellcolor{black!7}\oftrlfp & \cellcolor{black!7}\textbf{85.0\% {\scriptsize ($\pm$0.11)}} & \cellcolor{black!7}\textbf{77.7\% {\scriptsize ($\pm$0.13)}} & \cellcolor{black!7}\textbf{0.706 {\scriptsize ($\pm$0.002)}} & \cellcolor{black!7}55.7\% {\scriptsize ($\pm$0.15)} & \cellcolor{black!7}0.006 {\scriptsize ($\pm$0.000)} & \cellcolor{black!7}72.6  {\scriptsize ($\pm$0.1)} & \cellcolor{black!7}172.6 & \cellcolor{black!7}ms \\
\hline
\end{tabular}}
\end{table}

\section{Conclusion and Future Directions}
We investigate the dynamic allocation of resources to strategic agents under long-term constraints. Standard primal-dual, though effective in non-strategic settings, is vulnerable to strategic manipulation. We design a novel \algname framework, using payments, epoching, and exploration to attain incentive-awareness and primal efficiency.
When equipped with \ftrl, the resulting mechanism attains sub-linear $\Otil(T^{2/3})$ regret, satisfies all constraints, and admits a near-truthful PBE. This is the first mechanism that can attain efficiency, feasibility, and incentive-awareness at the same time in general constrained online resource allocation problems.

Identifying an $\Omega(T^{2/3})$ barrier as a price of incentives, we design a novel online learning algorithm, \oftrlfp. It not only exploits the predictability provided by incentive-awareness, but also formulates a fixed-point problem to accommodate endogenous and discontinuous predictions and can be of independent interest. We then boost the regret to $\Otil(\sqrt T)$, near-optimal even in non-strategic online resource allocation.

Several promising research directions remain open. Our framework shows how monetary transfers can be used to achieve incentive-awareness in dynamic allocation problems with strategic agents. A natural next step is to extend these ideas to important non-monetary domains---such as organ matching, school admissions, and food bank distribution---where transfers are unavailable or undesirable. Developing analogous incentive-aware mechanisms for such settings would further broaden the scope of robust dynamic mechanism design.

\bibliography{references}

\onecolumn
\appendix
\renewcommand{\appendixpagename}{\centering \LARGE Technical Appendices}
\appendixpage

\startcontents[section]
\printcontents[section]{l}{1}{\setcounter{tocdepth}{2}}

\section{Proof of Main Theorems}\label{sec:appendix main}
\subsection{Guarantee of \mech}\label{sec:main theorem FTRL}
\begin{theorem}[Guarantee of \mech; Restatement of \Cref{thm:main theorem FTRL}]\label{thm:main theorem FTRL formal}
In the \algname framework (\Cref{alg:mech}), let the dual-update subroutine be \ftrl (\Cref{alg:lambda FTRL}) with (assume without loss of generality that $T$ is a perfect cube, since rounding errors only affect the regret by a negligible lower-order term)
\begin{equation*}
L=T^{2/3},~ \mE_\ell=\left [(\ell-1) \frac T L+1,\ell \frac TL\right ],~\eta_\ell=\frac{\lVert \bm \rho^{-1}\rVert_2}{\sqrt{2d}}\left (\sum_{\ell'=1}^{\ell} \lvert \mE_\ell\rvert^2\right )^{-1/2},~\Psi(\bm \lambda)=\frac 12 \lVert \bm \lambda\rVert_2^2.
\end{equation*}
Under \Cref{def:impatient,assump:smooth consumptions}, \mech induces an agents' PBE $\bm \pi^\ast$, with low social welfare regret
\begin{equation*}\begin{aligned}
\mR_T(\bm \pi^\ast,\mech)&\le T^{2/3}\left (4K^2 \epsilon_c+5\log T^{1/3}+\log_{\gamma^{-1}} \big (1+4(1+\lVert \bm \rho^{-1}\rVert_1) K T^{4/3}\big )+\sqrt{2d}\right )\lVert \bm \rho^{-1}\rVert_1\\
&=\Otil \left (T^{2/3} (K^2+\sqrt d )\right )\quad \text{ (when treating $\epsilon_c,\gamma,\lVert \bm \rho^{-1}\rVert$ as constants)},
\end{aligned}\end{equation*}
zero constraint violation $\mB_T(\bm \pi^\ast,\mech)=0$, and a small expected number of ``large misreports'' $\E\left [\sum_{t=1}^T \1\left [\max_{i\in [K]} \lvert u_{t,i}-v_{t,i}\rvert\ge T^{-1/3}\right ]\right ]\le T^{2/3} \left (K \log_{\gamma^{-1}} (1+4(1+\lVert \bm \rho^{-1}\rVert_1) K T^{4/3})+1\right )=\Otil(KT^{2/3})$.
\end{theorem}
\begin{proof}
Recall from \Cref{eq:tilde i definition main} that $\tilde i_t^\ast:=\argmax_{i\in \{0\} \cup [K]} \left (v_{t,i}-\bm \lambda_t^\trans \bm c_{t,i}\right )$ for all $t\in [T]$, and recall from \Cref{eq:regret decomposition informal} that $\mT_v$ is the stopping time defined as $\mT_v:=\min\{t\in [T]\mid \sum_{{\tau}=1}^{t} \bm c_{{\tau},i_{\tau}} + \bm 1\not \le T\bm \rho\}\cup \{T+1\}$.
Thus in all rounds $t\le \mT_v$, \Cref{line:safety} never rejects the suggested allocation $i_t$. This allows us to decompose the regret $\mR_T$:
\begin{equation}\begin{aligned}\label{eq:regret decomposition}
\mR_T=\E\left [\sum_{t=1}^T (v_{t,i_t^\ast}-v_{t,i_t})\right ] \le \E\Bigg [\underbrace{\sum_{t=1}^{\mT_v} (v_{t,\tilde i_t^\ast}-v_{t,i_t})}_{\primalreg}+\underbrace{\sum_{t=1}^T v_{t,i_t^\ast}-\sum_{t=1}^{\mT_v} v_{t,\tilde i_t^\ast}}_{\dualreg}\Bigg ].
\end{aligned}\end{equation}

From \Cref{thm:InterEpoch guarantee} presented later in \Cref{sec:appendix PrimalAlloc}, there exists a PBE $\bm \pi^\ast$ such that
\begin{equation*}\begin{aligned}
\E[\primalreg]\le \sum_{\ell=1}^L (N_\ell+3),
\end{aligned}\end{equation*}
where $N_\ell$ and $M_\ell$ are two quantities frequently used throughout the proof, defined as follows:
\begin{equation}\begin{aligned}\label{eq:N and M}
N_\ell&=1+K\log_{\gamma^{-1}} (1+4(1+\lVert \bm \rho^{-1}\rVert_1)K\lvert \mE_\ell\rvert^4)+4K^2 \epsilon_c +5\log \lvert \mE_\ell\rvert,\\
M_\ell&=\ell \log_{\gamma^{-1}} \left (1+4(1+\lVert \bm \rho^{-1}\rVert_1)K\lvert \mE_\ell\rvert^3\cdot 24 \ell dT\right )+4 \ell \epsilon_c \lVert \bm \rho^{-1}\rVert_1+4\left (\log(dT) + \sum_{j=1}^d \log \frac{d K^2 \epsilon_c T}{\rho_j \epsilon}\right ).
\end{aligned}\end{equation}
When focusing on the polynomial dependencies alone, we have $N_\ell=\Otil(K^2)$ and $M_\ell=\Otil(\ell)$.

From \Cref{thm:DualUpd guarantee FTRL} in \Cref{sec:appendix DualUpd}, when setting $\Psi(\bm \lambda)=\frac 12 \lVert \bm \lambda\rVert_2^2$ and $\eta_\ell=\frac{\lVert \bm \rho^{-1}\rVert_2}{\sqrt{2d}}\left (\sum_{\ell'=1}^{\ell} \lvert \mE_\ell\rvert^2\right )^{-1/2}$ for all $\ell\in [L]$, the $\E[\dualreg]$ term under the same $\bm \pi^\ast$ is bounded by
\begin{equation*}
\E[\dualreg]\le \sqrt{2d}\lVert \bm \rho^{-1}\rVert_2 \cdot \sqrt{\sum_{\ell=1}^L \lvert \mE_\ell\rvert^2}+ \left (1+\max_{1\le \ell\le L} \lvert \mE_\ell\rvert+\sum_{\ell=1}^L (N_\ell+3) \right )\lVert \bm \rho^{-1}\rVert_1.
\end{equation*}

Setting $L=T^{2/3}$ and $\lvert \mE_\ell\rvert=T^{1/3}$ for all $\ell\in [L]$ ensures $\sqrt{\sum_{\ell=1}^L \lvert \mE_\ell\rvert^2}=\sqrt{T^{2/3}\times T^{2/3}}=T^{2/3}$. Therefore
\begin{equation*}\begin{aligned}
\mR_T(\bm \pi^\ast,\mech)\le \sum_{\ell=1}^L (N_\ell+3)+\sqrt{2d}\lVert \bm \rho^{-1}\rVert_2 \cdot T^{2/3}+ \left (1+T^{1/3}+\sum_{\ell=1}^L (N_\ell+3)\right ) \lVert \bm \rho^{-1}\rVert_1.
\end{aligned}\end{equation*}
By definition of $N_\ell$ in \Cref{eq:N and M}, we derive the claimed regret bound of $\mR_T(\bm \pi^\ast,\mech)$  (where we used $\lVert \bm \rho^{-1}\rVert_2\le \lVert \bm \rho^{-1}\rVert_1$).
The constraint violation bound $\mB_T(\bm \pi^\ast,\mech)=0$ follows from \Cref{line:safety} of \Cref{alg:mech}, and the large misreports bound follows from \Cref{lem:large misreport} (where $T^{-1/3}=\frac{1}{T/L}=\frac{1}{\lvert \mE_\ell\rvert}$).
\end{proof}

\subsection{Main Theorem of \mechO}
\begin{theorem}[Main Theorem of \mechO; Restatement of \Cref{thm:main theorem O-FTRL-FP}]\label{thm:main theorem O-FTRL-FP formal}
In the \algname framework (\Cref{alg:mech}), let the dual-update subroutine be \oftrlfp (\Cref{alg:lambda O-FTRL-FP}) with (assume without loss of generality that $T=\sum_{\ell=1}^L \ell$ for some $L$, i.e., $L=\frac{-1+\sqrt{1+8T}}{2}\in \mathbb Z$, since rounding errors are lower-order)
\begin{equation*}
L=\frac{-1+\sqrt{1+8T}}{2},~ \mE_\ell=\left [\sum_{\ell'=1}^{\ell-1} \ell'+1,\sum_{\ell'=1}^{\ell} \ell' \right ],~\eta_\ell=\frac{\lVert \bm \rho^{-1}\rVert_2}{2d}\left (\sum_{\ell'=1}^{\ell} \lvert \mE_\ell\rvert^2\right )^{-1/2},~\Psi(\bm \lambda)=\frac 12 \lVert \bm \lambda\rVert_2^2.
\end{equation*}
Under \Cref{def:impatient,assump:smooth consumptions}, \mechO induces an agents' PBE $\bm \pi^\ast$, with low social welfare regret
\begin{equation*}\begin{aligned}
\mR_T(\bm \pi^\ast,\mechO)
&\le 2 \sqrt{T} \left (3d + 12 \log(dTL) + 12 \sum_{j=1}^d \log \frac{dK^2 \epsilon_c T}{T\rho_j}+18\right )\lVert \bm \rho^{-1}\rVert_1+d\\
&\quad +\sum_{\ell=2}^L \left (\frac{4 N_\ell^2}{\ell}+\frac{16 d M_\ell^2}{\ell (\ell-1)^2}\right ) + \left (3+L+\sum_{\ell=1}^L (N_\ell+3)\right )(1+\lVert \bm \rho^{-1}\rVert_1)\\
&=\Otil\left (\sqrt T \big (d+K^2\big )\right )\quad \text{ (when treating $\epsilon_c,\gamma,\lVert \bm \rho^{-1}\rVert$ as constants)},
\end{aligned}\end{equation*}
zero constraint violation $\mB_T(\bm \pi^\ast,\mechO)=0$, and a small expected number of ``large misreports'' $\E\left [\sum_{t=1}^T \1\left [\max_{i\in [K]} \lvert u_{t,i}-v_{t,i}\rvert\ge t^{-1/2}\right ]\right ]\le T^{1/2} \left (K \log_{\gamma^{-1}} (1+4(1+\lVert \bm \rho^{-1}\rVert_1) K T^2)+1\right )=\Otil(K \sqrt T)$.
\end{theorem}
\begin{proof}
The proof follows the same structure as the previous one, but with a more challenging treatment of $\E[\dualreg]$.
Specifically, we still decompose $\mR_T$ into \primalreg and \dualreg as in \Cref{eq:regret decomposition}, and we still use \Cref{thm:InterEpoch guarantee} from \Cref{sec:appendix PrimalAlloc} for the \primalreg term. It ensures the existence of a PBE $\bm \pi^\ast$ such that $\E[\primalreg]\le \sum_{\ell=1}^L (N_\ell+3)$ with $N_\ell$ defined in \Cref{eq:N and M}.
For the \dualreg term, \Cref{thm:DualUpd guarantee O-FTRL-FP} in \Cref{sec:appendix DualUpd} proves that when $\Psi(\bm \lambda)=\frac 12 \lVert \bm \lambda\rVert_2^2$ and $\eta_\ell=\frac{\lVert \bm \rho^{-1}\rVert_2}{2d}(\sum_{\ell'=1}^\ell \lvert \mE_{\ell'}\rvert)^{-1/2}$ for all $\ell$,
\begin{equation*}\begin{aligned}
\E[\dualreg]&\le 2 \sqrt{T} \left (3d + 12 \log(dTL) + 12 \sum_{j=1}^d \log \frac{dK^2 \epsilon_c T}{T\rho_j}+18\right )\lVert \bm \rho^{-1}\rVert_1+d \lvert \mE_1\rvert^{1.5} \\
&\quad +\sum_{\ell=2}^L \frac{N_\ell^2/d+\lvert \mE_\ell\rvert^2 M_\ell^2/(\sum_{\ell'=1}^{\ell-1} \lvert \mE_{\ell'}\rvert)^2}{(\sum_{\ell'=1}^\ell \lvert \mE_{\ell'}\rvert)^{1/2}} \lVert \bm \rho^{-1}\rVert_1 + \left (3+\max_{\ell\in [L]} \lvert \mE_\ell\rvert+\sum_{\ell=1}^L (N_\ell+3)\right )\lVert \bm \rho^{-1}\rVert_1,
\end{aligned}\end{equation*}
where $N_\ell$ and $M_\ell$ are defined in \Cref{eq:N and M} for all $\ell\in [L]$. Therefore, under our linear epoching scheme where $\lvert \mE_\ell\rvert=\ell$ for all $\ell$, we have $\lvert \mE_1\rvert^{1.5}=1$, $(\sum_{\ell'=1}^\ell \lvert \mE_{\ell'}\rvert)^{-1/2}\ge 2/\ell$, $\lvert \mE_\ell\rvert / (\sum_{\ell'=1}^{\ell-1}\lvert \mE_{\ell'}\rvert)=2/(\ell-1)$, and $\max_{\ell\in [L]} \lvert \mE_\ell\rvert=L$.
Putting the primal and dual bounds together, we therefore get
\begin{equation*}\begin{aligned}
&\quad \mR_T(\bm \pi^\ast,\mechO)\le \E[\primalreg]+\E[\dualreg]\\
&\le 2 \sqrt{T} \left (3d + 12 \log(dTL) + 12 \sum_{j=1}^d \log \frac{dK^2 \epsilon_c T}{T\rho_j}+18\right )\lVert \bm \rho^{-1}\rVert_1+d\\
&\quad +\sum_{\ell=2}^L \left (\frac{4 N_\ell^2}{\ell}+\frac{16 d M_\ell^2}{\ell (\ell-1)^2}\right ) + \left (3+L+\sum_{\ell=1}^L (N_\ell+3)\right )(1+\lVert \bm \rho^{-1}\rVert_1)=\Otil\left (\sqrt T \big (d+K^2\big )\right ),
\end{aligned}\end{equation*}
where the last step uses $L=\Theta(\sqrt T)$, $N_\ell=\Otil(K^2)$, $M_\ell=\Otil(\ell)$, and $\sum_{\ell=1}^L \ell^{-1}=\Theta(\log L)$.
The constraint violation bound $\mB_T(\bm \pi^\ast,\mech)=0$ follows from \Cref{line:safety} of \Cref{alg:mech}, and the large misreports bound follows from \Cref{lem:large misreport}: for each epoch $\mE_\ell$, the rounds $t\in \mE_\ell$ satisfy $t^{-1/2}\ge \ell^{-1}$.
\end{proof}

\subsection{O-FTRL-FP Algorithm for Endogenous and Discontinuous Predictions}\label{sec:O-FTRL lemma}
\begin{definition}[\oftrlfp Algorithm; Generalized \Cref{alg:lambda O-FTRL-FP}]\label{def:O-FTRL-FP}
Consider a $R$-round online learning over convex and compact $\mathcal X\subseteq \mathbb R^d$ with (unknown) convex losses $\{f_r\colon \mathcal X \to \mathbb{R}\}_{r\in [R]}$.
We have access to a \emph{prediction} $\tilde f_r\colon \mathcal X\times \mathcal X\to \mathbb R$ before each round $r$, such that should we play $\bm x_r\in \mathcal X$ in round $r$, the true loss $f_r(\cdot)$ would be close to $\tilde f_r(\bm x_r,\cdot)$.
For each round $r$, define the \textit{ideal update map} $\Phi_r\colon \mathcal X\to \mathcal X$ as:
\begin{equation}\label{eq:ideal update map}
\Phi_r(\bm y):=\argmin_{\bm x\in \mathcal X} \left( \sum_{\mathfrak r < r} f_{\mathfrak r}(\bm x) + \tilde f_r(\bm y, \bm x) + \frac{\Psi(\bm x)}{\eta_r} \right),
\end{equation}
where $\Psi\colon \mathcal X\to \mathbb R$ is a strongly-convex regularizer.
In each round, \oftrlfp determines its decision $\bm x_r\in \mathcal X$ via an \textit{approximate fixed-point oracle} for $\Phi_r$, such that (where $\delta_r$ is called the \textit{fixed-point residual})
\begin{equation}\label{eq:approx fixed point}
\left \lVert \bm x_r - \Phi_r(\bm x_r) \right \rVert \le \delta_r.
\end{equation}
\end{definition}

We remark that while our \oftrlfp is inspired by the \texttt{FTRL-FB} algorithm \citep{zimmert2022return}---which also uses fixed-point oracles to conduct implicit updates---we additionally resolve the extra difficulty of \textit{discontinuous predictions}. Consequently, an exact fixed point of $\Phi_r$ may not exist. Instead, as we only require a weaker gradient-stability condition (recall \Cref{def:gradient stability}) to ensure the existence of approximate fixed points.
We state the \oftrlfp guarantee in \Cref{lem:O-FTRL lemma}, which can be of independent interest.
\begin{theorem}[\oftrlfp Guarantee; Generalized \Cref{eq:O-FTRL-FP online learning regret}]\label{lem:O-FTRL lemma}
Consider a convex and compact set $\mathcal X\subseteq \mathbb R^d$ equipped with some norm $\lVert \cdot \rVert$. Let $\Psi\colon \mathcal X\to \mathbb R$ be $1$-strongly-convex w.r.t. $\lVert \cdot \rVert$ such that $\min_{\bm x\in \mathcal X} \Psi(\bm x)=0$.
Let $\{\tilde f_r\colon \mathcal X\times \mathcal X\to \mathbb R\}_{r\in [R]}$ be a sequence of prediction functions satisfying the following conditions:
\begin{enumerate}
\item \textbf{Measurability:} $\tilde f_r$ is $\mF_{r-1}$-measurable, where $\mF_r=\sigma(f_1,\ldots,f_r)$ is the filtration induced by losses;
\item \textbf{Regularity:} For any fixed $\bm y\in \mathcal X$, $\tilde f_r(\bm y,\cdot)\colon \mathcal X\to \mathbb R$ is convex, differentiable, and $L_f$-smooth; and
\item \textbf{Gradient-Stability:} Each prediction function $\tilde f_r$ is $(\epsilon_r,L_r)$-gradient-stable (\Cref{def:gradient stability}), i.e.,
\begin{equation*}
\sup_{\bm x\in \mathcal X}\left \lVert \nabla_2 \tilde f_r(\bm y_1,\bm x)-\nabla_2 \tilde f_r(\bm y_2,\bm x)\right \rVert_\ast\le L_r,\quad \forall \bm y_1,\bm y_2\in \mathcal X\text{ s.t. }\lVert \bm y_1-\bm y_2\rVert<\epsilon_r,
\end{equation*}
where $\nabla_2$ is the gradient with respect to the second argument, and $\lVert \cdot \rVert_\ast$ is the dual norm of $\lVert \cdot \rVert$.
\end{enumerate}
Then for any continuous, differentiable, convex, $L$-Lipschitz, and $L_f$-smooth loss functions $f_1,f_2,\ldots,f_R$ and any non-increasing learning rates $\eta_1\ge \eta_2\ge \cdots \ge \eta_R>0$, the \oftrlfp in \Cref{def:O-FTRL-FP} ensures:
\begin{itemize}
\item \textbf{Existence of Approximate Fixed-Points.} For every round $r\in [R]$, there exists a valid update $\bm x_r \in \mathcal X$ such that $\lVert \bm x_r-\Phi_r(\bm x_r)\rVert\le \eta_r L_r$ (i.e., satisfying \Cref{eq:approx fixed point} with a residual $\delta_r$ upper bounded by $\eta_r L_r$).
\item \textbf{Regret Bound.} The resulting sequence $\{\bm x_r\}_{r\in [R]}$ satisfies the following regret bound for any $\bm x^\ast \in \mathcal X$:
\begin{equation*}
\sum_{r=1}^R \left (f_r(\bm x_r)-f_r(\bm x^\ast)\right )\le \frac{\Psi(\bm x^\ast)}{\eta_R} + \frac 12\sum_{r=1}^R \eta_r \left \lVert \nabla f_r(\bm x_r)-\nabla_2 \tilde f_r(\bm x_r, \bm x_r)\right \rVert_\ast^2 + \sum_{r=1}^R \left (L\eta_r L_r+2L_f^2\eta_r^2 L_r^2\right ).
\end{equation*}
\end{itemize}
\end{theorem}
\begin{proof}
Fix a round $r\in [R]$. We first argue that the $\Phi_r$ in \Cref{eq:ideal update map} is well-defined: Since $\Psi(\bm x)$ is $1$-strongly convex and all loss functions are convex, the \ftrl objective $F_r(\bm x) := \sum_{\mathfrak r=1}^{r-1} f_{\mathfrak r}(\bm x) + \frac{1}{\eta_r} \Psi(\bm x)$ is $\eta_r^{-1}$-strongly convex w.r.t. $\lVert \cdot \rVert$.
For any fixed $\bm y \in \mathcal X$, since $\tilde f_r(\bm y,\cdot)$ is also convex, $\Phi_r(\bm y)=\argmin_{\bm x\in \mathcal X}(F_r(\cdot) + \tilde f_r(\bm y, \cdot))$ is the $\argmin$ of a strongly convex function, and thus well-defined and unique.

\paragraph{Step 1: Approximate-Continuity of $\Phi_r$.}
Fix $\bm y_1, \bm y_2 \in \mathcal X$ with $\lVert \bm y_1 - \bm y_2 \rVert < \epsilon_r$. Let $\bm x_1 = \Phi_r(\bm y_1)$ and $\bm x_2 = \Phi_r(\bm y_2)$. The first-order optimality conditions of the $\argmin$'s imply:
\begin{equation*}
\begin{aligned}
\langle \nabla F(\bm x_1) + \nabla_2 \tilde f_r(\bm y_1, \bm x_1), \bm x - \bm x_1 \rangle \ge 0, \quad \forall \bm x \in \mathcal X;\\
\langle \nabla F(\bm x_2) + \nabla_2 \tilde f_r(\bm y_2, \bm x_2), \bm x - \bm x_2 \rangle \ge 0, \quad \forall \bm x \in \mathcal X.\end{aligned}
\end{equation*}
Summing the optimality conditions at $\bm x = \bm x_2$ (for $\bm x_1$) and $\bm x = \bm x_1$ (for $\bm x_2$), and utilizing the $\eta_r^{-1}$-strong convexity of $F(\bm x) + \tilde f_r(\bm y, \bm x)$ \citep[Exercise 1.10]{bertsekas2003convex}, we obtain stability bounds:
\begin{equation*}\begin{aligned}
\eta_r^{-1} \lVert \bm x_1 - \bm x_2 \rVert^2
&\le \langle \nabla_2 \tilde f_r(\bm y_1, \bm x_2) - \nabla_2 \tilde f_r(\bm y_2, \bm x_2), \bm x_2 - \bm x_1 \rangle \\
&\le \lVert \bm x_1 - \bm x_2 \rVert \cdot \lVert \nabla_2 \tilde f_r(\bm y_1, \bm x_2) - \nabla_2 \tilde f_r(\bm y_2, \bm x_2) \rVert_\ast,
\end{aligned}\end{equation*}
where the last step uses the Cauchy-Schwarz inequality.
Applying the Gradient Stability condition (Condition 3), we have $\lVert \nabla_2 \tilde f_r(\bm y_1, \bm x_2) - \nabla_2 \tilde f_r(\bm y_2, \bm x_2) \rVert_\ast \le L_r$. Consequently:
\begin{equation}\label{eq:closeness of argmin}
\lVert \Phi_r(\bm y_1) - \Phi_r(\bm y_2) \rVert \le \eta_r L_r, \quad \forall \lVert \bm y_1 - \bm y_2 \rVert < \epsilon_r.
\end{equation}

\paragraph{Step 2: Continuous Approximation of $\Phi_r$.}
Since $\mathcal X$ is a compact subset of a finite-dimensional normed vector space, it admits a finite cover. Let $\{\mathcal B_i\}_{i=1}^M$ be a finite collection of open balls of radius $\frac{\epsilon_r}{2}$ centered at points $\{\tilde{\bm y}_i \in \mathcal X\}_{i=1}^M$ that cover $\mathcal X$. Let $\{\phi_i\}_{i=1}^M$ be a continuous partition of unity subordinate to this cover \citep[Theorem 36.1]{munkres2000topology}, satisfying $\sum_{i=1}^M \phi_i(\bm y) = 1$ and $\phi_i(\bm y) = 0$ if $\bm y \notin \mathcal B_i$.
Define $\tilde{\Phi_r}: \mathcal X \to \mathcal X$ as
\begin{equation*}
\tilde{\Phi_r}(\bm y) := \sum_{i=1}^M \phi_i(\bm y) \Phi_r(\tilde{\bm y}_i),\quad \forall \bm y\in \mathcal X,
\end{equation*}
which is a smoothened version of $\Phi_r$.
Since $\mathcal X$ is convex and $\Phi_r(\tilde{\bm y}_i) \in \mathcal X$, the convex combination $\tilde{\Phi_r}(\bm y)$ remains in $\mathcal X$. Moreover, $\tilde{\Phi_r}$ is continuous as it is a linear combination of continuous weight functions. By Brouwer's Fixed Point Theorem \citep[Theorem 55.6]{munkres2000topology}, there exists $\bm y^\ast \in \mathcal X$ such that $\tilde{\Phi_r}(\bm y^\ast) = \bm y^\ast$.

We show $\bm y^\ast$ is an approximate fixed point of $\Phi_r$. Note that for any $i$ where $\phi_i(\bm y^\ast) > 0$, we must have $\bm y^\ast \in \mathcal B_i$, implying $\lVert \bm y^\ast - \tilde{\bm y}_i \rVert < \frac{\epsilon_r}{2} + \frac{\epsilon_r}{2} = \epsilon_r$. By \Cref{eq:closeness of argmin}, $\lVert \Phi_r(\bm y^\ast) - \Phi_r(\tilde{\bm y}_i) \rVert \le \eta_r L_r$. Thus:
\begin{equation}\label{eq:approximate fixed point}
\lVert \bm y^\ast - \Phi_r(\bm y^\ast) \rVert
= \left\lVert \sum_{i=1}^M \phi_i(\bm y^\ast) (\Phi_r(\tilde{\bm y}_i) - \Phi_r(\bm y^\ast)) \right\rVert
\le \sum_{i=1}^M \phi_i(\bm y^\ast) \cdot \eta_r L_r = \eta_r L_r.
\end{equation}

\paragraph{Step 3: Regret Analysis.}
With $\bm x_r$ being the $\bm y^\ast$ derived in \Cref{eq:approximate fixed point}, let $\hat f_r(\cdot) := \tilde f_r(\bm x_r, \cdot)$. Since $\tilde f_r$ is $\mF_{r-1}$-measurable and $\bm x_r$ is derived deterministically from history, $\hat f_r$ serves as a valid prediction for the \oftrl algorithm \citep{rakhlin2013online}. Let $\bm x_r^\ast = \Phi_r(\bm x_r)$ be the exact minimizer given prediction $\bm x_r$.
Applying the standard \oftrl guarantee \citep[see, e.g.,][Theorem 7.39]{orabona2019modern} to the sequence $\bm x_r^\ast$:
\begin{equation*}
\sum_{r=1}^R (f_r(\bm x_r^\ast) - f_r(\bm x^\ast)) \le \frac{\Psi(\bm x^\ast)}{\eta_R} + \frac{1}{2} \sum_{r=1}^R \eta_r \lVert \nabla f_r(\bm x_r^\ast) - \nabla \hat f_r(\bm x_r^\ast) \rVert_\ast^2.
\end{equation*}

It only remains to bridge the gap between $\bm x_r$ and $\bm x_r^\ast$ w.r.t. $f_r$ and $\hat f_r$. We already showed $\lVert \bm x_r - \bm x_r^\ast \rVert \le \eta_r L_r$. Using Lipschitzness ($L$) and smoothness ($L_f$) of the relevant functions, $f_r(\bm x_r) - f_r(\bm x_r^\ast) \le L \lVert \bm x_r - \bm x_r^\ast \rVert \le L \eta_r L_r$, $\lVert \nabla f_r(\bm x_r) - \nabla f_r(\bm x_r^\ast) \rVert_\ast \le L_f \eta_r L_r$, and $\lVert \nabla \hat f_r(\bm x_r) - \nabla \hat f_r(\bm x_r^\ast) \rVert_\ast \le L_f \eta_r L_r$.
Substituting these into the \oftrl regret bound yields the guarantee of \oftrlfp.
\end{proof}

\section{\primalreg: Misallocations Due to Agents' Strategic Behavior}\label{sec:appendix PrimalAlloc}
\subsection{Direct Impact: Misreporting for Current-Epoch Allocations}\label{sec:appendix IntraEpoch}

\begin{lemma}[Truthful Myopic Agents]\label{lem:2nd price auction variant}
Suppose that agents are myopic. That is, consider a one-shot allocation with $K$ agents, where each agent $i \in [K]$ has a private value $v_i $ and public consumption $c_i=\bm \lambda_t^\trans \bm c_{t,i}$. Agents submit reports $u_i \in [0,1]$, and the planner allocates the item to one agent $i_t \in [K]$ at payment $p_{i_t}$. The utility of the selected agent is $v_{i_t} - p_{i_t}$, while all others have zero utility. Then the following mechanism
\begin{equation*}
i_t = \argmax_{i \in \{0\} \cup [K]} (u_i - c_i), \quad p_{i_t} = c_{i_t} + \max_{j \ne i_t} (u_j - c_j),
\end{equation*}
is Dominant-Strategy Incentive-Compatible (DSIC) and efficient. Specifically,
\begin{enumerate}
\item to maximize the expected utility $\E[(v_i-p_i) \cdot \1[i_t = i]]$, truthfully reporting $u_i = v_i$ is weakly dominant;
\item should all agents be truthful ($u_i = v_i$), the allocation maximizes the value-minus-consumption $v_{i_t} - c_{i_t}$.
\end{enumerate}
\end{lemma}
\begin{proof}
Let $\tilde{u}_i := u_i - c_i$ be each agent's report-minus-consumption, and let $\tilde{u}_i^\ast := v_i - c_i$ be the truthful report-minus-consumption.
Fix any agent $i \in [K]$ and suppose all other agents’ reports $\{u_j\}_{j \ne i}$ are fixed. We evaluate the utility $\mathcal{U}_i=(v_i-p_i) \1[i_t=i]$ that agent $i$ obtains under various reporting strategies:
\begin{enumerate}
\item Truthfully reporting $u_i = v_i$ so that $\tilde{u}_i = \tilde{u}_i^\ast$. If $\tilde{u}_i^\ast > \max_{j \ne i} \tilde{u}_j$, then $i_t = i$, and $\mathcal{U}_i = v_i - c_i - \max_{j \ne i} \tilde{u}_j = \tilde{u}_i^\ast - \max_{j \ne i} \tilde{u}_j$. Otherwise, i.e., when $\tilde{u}_i^\ast < \max_{j \ne i} \tilde{u}_j$, we have $i_t \ne i$ and $\mathcal{U}_i = 0$.
\item Over-reporting $u_i > v_i$ so that $\tilde{u}_i > \tilde{u}_i^\ast$. If either $\tilde u_i>\tilde u_i^\ast>\max_{j\ne i}\tilde u_j$ or $\max_{j\ne i} \tilde u_j>\tilde u_i>\tilde u_i^\ast$, agent $i$ secures the same utility as if they were truthful (in the former case $\mathcal U_i=\tilde u_i^\ast-\max_{j\ne i}\tilde u_j$, and in the latter case $\mathcal U_i=0$). But if $\tilde u_i>\max_{j\ne i} \tilde u_j>\tilde u_i^\ast$, $\mathcal U_i=\tilde u_i^\ast-\max_{j\ne i} \tilde u_j<0$, which is worse than being truthful.
\item Under-reporting $u_i < v_i$ so that $\tilde{u}_i < \tilde{u}_i^\ast$. Similarly, if $\tilde u_i^\ast>\tilde u_i>\max_{j\ne i}\tilde u_j$ or $\max_{j\ne i}\tilde u_j>\tilde u_i^\ast>\tilde u_i$, agent $i$ secures the same utility as if they were truthful. But when $\tilde u_i^\ast>\max_{j\ne i}\tilde u_j>\tilde u_i$, $\mathcal U_i=0$, which always never exceeds their utility should they reported truthfully: in which case they get $\tilde u_i^\ast-\max_{j\ne i} \tilde u_j\ge 0$.
\end{enumerate}

In all cases, deviating from truth-telling does not improve agent $i$'s utility, and may strictly reduce it. Hence, truthful reporting is a weakly dominant strategy, proving claim \textit{(i)}. When all agents report truthfully ($u_i = v_i$), the planner allocates the resource to $i_t = \arg\max_i (v_i - c_i)$. This gives claim \textit{(ii)}.
\end{proof}

\begin{theorem}[Direct Impact; Formal \Cref{thm:informal IntraEpoch}]\label{thm:IntraEpoch guarantee}
Fix an epoch $\mathcal{E}_\ell \subseteq [T]$ and variable dual $\bm{\lambda}_\ell \in \bm{\Lambda}$. Suppose that all agents adopt the nearsighted model of \Cref{eq:two agent models}, and the planner uses the dual-adjusted allocation and payment rule in \Cref{line:primal allocation}. Then truthtul reporting constitutes a Perfect Bayesian Equilibrium (PBE). Under this PBE, the planner always allocates to the agent maximizing the dual-adjusted value, namely in each round $t\in \mE_\ell$ the allocated agent is $\tilde{i}_t^\ast := \arg\max_{i \in [K]} \left( v_{t,i} - \bm{\lambda}_\ell^\top \bm{c}_{t,i} \right)$.
\end{theorem}
\begin{proof}
For any round $t\in \mE_\ell$, the planner's allocation and payment is ``history-independent,'' in the sense that they only depend on current reports $\bm{u}_t^h$, current consumptions $\bm{c}_t$, and dual variable $\bm{\lambda}_\ell$.
We now consider a fixed agent $i\in [K]$. Since all opponents $j\ne i$ are reporting $u_{t,j}=v_{t,j}$, their reports are also ``history-independent''. Therefore, agent $i$'s expected utility under any strategy only depends on the current round but not the history, and the unilateral deviation to a ``history-dependent'' strategy is unprofitable.

Hence we only need to consider agent $i$'s unilateral deviation to history-independent strategies. This isolates each round $t\in \mE_\ell$, allowing us to apply \Cref{lem:2nd price auction variant}: Given dual $\bm{\lambda}_\ell$, truthful reporting maximizes an agent's current-round expected utility regardless of opponents' actions. Therefore agent $i$'s unilateral deviation from truth-telling is unprofitable, and truthful reporting is a PBE.
Under this PBE, the planner allocates to the agent with maximal $v_{t,i} - \bm{\lambda}_\ell^\trans \bm{c}_{t,i}$, i.e., $\tilde{i}_t^\ast$. This completes the proof.
\end{proof}

\subsection{Indirect Impact: Misreporting for Future Dual Variables}\label{sec:appendix InterEpoch}

\begin{lemma}[Large Misreport Happens Rarely]\label{lem:large misreport}
For epoch $\ell\in [L]$ with fixed dual $\bm \lambda_\ell$, for $\gamma$-discounted agents (the ``farsighted'' model in \Cref{eq:two agent models}), there exists a PBE $\bm \pi$ for this epoch such that
\begin{equation}\label{eq:large misreport}
\Pr\left \{\sum_{t\in \mE_\ell} \1\left [\lvert u_{t,i}-v_{t,i}\rvert\ge \frac{1}{\lvert \mE_\ell\rvert}\right ]\le \log_{\gamma^{-1}} (1+4(1+\lVert \bm \rho^{-1}\rVert_1)K\lvert \mE_\ell\rvert^4)\right \}\ge 1-\frac{1}{\lvert \mE_\ell\rvert},\quad \forall i\in [K],
\end{equation}
where $\{u_{t,i}\}_{t\in \mE_\ell}$ are the reports made by agent $i$ under $\bm \pi$.
\end{lemma}
\begin{proof}
As in \Cref{thm:IntraEpoch guarantee}, we restrict our attention to the subspace of ``history-independent'' strategies where for each round $t\in \mE_\ell$, agent $i\in [K]$ can only decide their report $u_{t,i}$ based on current-epoch dual variable $\bm \lambda_\ell$, round number $t$, current-round value $v_{t,i}$, and current-round consumptions $\bm c_t$.

We claim there is a history-independent PBE $\bm \pi$. Indeed, consider a game where only history-independent strategies are allowed, and consider a PBE $\bm \pi$ therein. Using same arguments from \Cref{thm:IntraEpoch guarantee}, unilaterally deviating to a history-dependent strategy is unprofitable as all opponents' strategies $\bm \pi_{-i}$ and the mechanism are history-independent. Hence, $\bm \pi$ remains a PBE in the original game where history dependency is allowed.

To prove \Cref{eq:large misreport} for this PBE $\bm \pi$, consider the unilateral deviation of any agent $i\in [K]$ to the truth-telling policy, i.e., $\bm \pi^i:=\truth_i\circ \bm \pi_{-i}$. Since $\bm \pi$ is a PBE, $\bm \pi^i$ is no better than $\bm \pi$ for a $\gamma$-impatient agent. That is, for $s_\ell:=\min\{t\mid t\in \mE_\ell\}$ and any history $\mH_{s_\ell}$, we always have
\begin{equation}\begin{aligned}
0&\le \E_{\bm \pi}\left [\sum_{\tau=t}^T \gamma^{\tau} (v_{{\tau},i}-p_{{\tau},i}) \1[i_{\tau}=i]\middle \vert \mH_{s_\ell}\right ]-\E_{\bm \pi^i}\left [\sum_{\tau=t}^T \gamma^{\tau} (v_{{\tau},i}-p_{{\tau},i}) \1[i_{\tau}=i]\middle \vert \mH_{s_\ell}\right ]  \\
&=\underbrace{\E_{\bm \pi}\left [\sum_{{\tau} \in \mE_\ell} \gamma^{\tau} (v_{{\tau},i}-p_{{\tau},i}) \1[i_{\tau}=i]\middle \vert \mH_{s_\ell}\right ]-\E_{\bm \pi^i}\left [\sum_{{\tau} \in \mE_\ell} \gamma^{\tau} (v_{{\tau},i}-p_{{\tau},i}) \1[i_{\tau}=i]\middle \vert \mH_{s_\ell}\right ]}_{\text{Current-Epoch}}+ \\
&\quad \underbrace{\E_{\bm \pi}\left [\sum_{\ell'>\ell,{\tau}\in \mE_{\ell'}} \gamma^{\tau} (v_{{\tau},i}-p_{{\tau},i}) \1[i_{\tau}=i]\middle \vert \mH_{s_\ell}\right ]-\E_{\bm \pi^i}\left [\sum_{\ell'>\ell,{\tau}\in \mE_{\ell'}} \gamma^{\tau} (v_{{\tau},i}-p_{{\tau},i}) \1[i_{\tau}=i]\middle \vert \mH_{s_\ell}\right ]}_{\text{Future-Epoch}}.\label{eq:large misreports V difference}
\end{aligned}\end{equation}

For the \text{Future-Epoch} term, fix any $\ell'>\ell$ and $\tau\in \mE_{\ell'}$. Since the values $\bm v_\tau$, reports $\bm u_\tau$, and consumptions $\bm c_\tau$ are all bounded by $[0,1]$, and that $\bm \lambda_{\ell'}\in \bm \Lambda=\bigotimes_{j=1}^d [0,\rho_j^{-1}]$ (which infers $\lVert \bm \lambda_{\ell'}\rVert_1\le \lVert \bm \rho^{-1}\rVert_1$), we have
\begin{equation*}\begin{aligned}
\scalemath{0.95}{p_{\tau,i_\tau}=\bm \lambda_{\ell'}^\trans \bm c_{\tau,i_\tau}+\max_{j\ne i_\tau}(u_{\tau,j}-\bm \lambda_{\ell'}^\trans \bm c_{\tau,j})\in \left [-2\lVert \bm \lambda_{\ell'}\rVert_1\cdot \max_{j\in \{0\} \cup [K]}\lVert \bm c_{\tau,j}\rVert_\infty,1+2\lVert \bm \lambda_{\ell'}\rVert_1\cdot \max_{j\in \{0\} \cup [K]}\lVert \bm c_{\tau,j}\rVert_\infty\right ],}
\end{aligned}\end{equation*}
for all $\ell'>\ell,\tau\in \mE_{\ell'}$. Since $v_{t,i}\in [0,1]$, $v_{\tau,i}-p_{\tau,i}\in [-1-2\lVert \bm \rho^{-1}\rVert_1,1+2\lVert \bm \rho^{-1}\rVert_1]$ for all $i$, and thus
\begin{equation*}
\text{Future-Epoch}\le \sum_{\ell'=\ell+1}^L \sum_{{\tau}\in \mE_{\ell'}} \gamma^{{\tau}}\cdot 2(1+2\lVert \bm \rho^{-1}\rVert_1) \le 2(1+2\lVert \bm \rho^{-1}\rVert_1) \frac{\gamma^{s_{\ell+1}}}{1-\gamma},
\end{equation*}
where the second inequality uses the fact that $\sum_{\tau=s_{\ell+1}}^T \gamma^\tau \le \frac{\gamma^{s_{\ell+1}}}{1-\gamma}$.

We now focus on Current-Epoch. Since both $\bm \pi$ and $\bm \pi^i$ are history-independent, the conditional on $\mH_{s_\ell}$ is redundant.
For any round $t\in \mE_\ell$, we control the difference between $\bm \pi$ and $\bm \pi^i$ using the exploration rounds:
\begin{itemize}
\item Conditional on the round $t\in \mE_\ell$ being an exploration round for this specific agent $i$ (which happens with probability $\frac{1}{\lvert \mE_\ell\rvert K}$), the expected gain of reporting $u_{t,i}$ rather than $v_{t,i}$ (i.e., under $\bm \pi$ versus under $\bm \pi^i$) is
\begin{equation}\label{eq:exploration misreport loss}
\E_{p\in \Unif([0,1])}\left [(v_{t,i}-p) \1[u_{t,i}\ge p]-(v_{t,i}-p) \1[v_{t,i}\ge p]\right ]=-\frac 12(u_{t,i}-v_{t,i})^2.
\end{equation}
\item If round $t$ is an exploration round but not for agent $i$, then reporting $u_{t,i}$ and $v_{t,i}$ both give $0$ gain.
\item Finally, suppose that round $t\in \mE_\ell$ is not an exploration round. Notice that \textit{i)} both $\bm \pi$ and $\bm \pi^{i}$ are history-independent, and \textit{ii)} if round $t\in \mE_\ell$ is not an exploration round, then the round-$t$ game in Lines \ref{line:call subroutine} to \ref{line:end of epoch game} in \Cref{alg:mech} coincides with the one studied in \Cref{lem:2nd price auction variant}. Therefore, agent $i$'s gain of reporting $u_{t,i}$ is no larger than that of reporting $v_{t,i}$, i.e., the expectation difference is always non-positive.
\end{itemize}

Let $\mathcal M_{\ell,i}:=\left \{t\in \mE_\ell\mid \lvert u_{t,i}-v_{t,i}\rvert\ge \frac{1}{\lvert \mE_\ell}\rvert\right \}$ be the set of large misreports from agent $i$ (regardless of whether $t$ turns out to be an exploration round, since reports happen before it). Let the last round in epoch $\mE_\ell$ be $e_\ell:=\max\{t\mid t\in \mE_\ell\}$. Since $\sum_{t\in \mathcal M_{\ell,i}}\gamma^t\ge \sum_{t=e_\ell-\lvert \mathcal M_{\ell,i}\rvert+1}^{e_\ell} \gamma^t$, we have
\begin{equation*}\begin{aligned}
\text{Current-Epoch}&\le -\E\left [\sum_{t\in \mathcal M_{\ell,i}} \gamma^t \frac{1}{\lvert \mE_\ell\rvert \cdot K}\frac 12(u_{t,i}-v_{t,i})^2\right ]\le -\E\left [\frac{\gamma^{e_\ell-\lvert \mathcal M_{\ell,i}\rvert+1} (1-\gamma^{\lvert \mathcal M_{\ell,i}\rvert})}{1-\gamma} \frac{1}{\lvert \mE_\ell\rvert\cdot K} \frac{1}{2\lvert \mE_\ell\rvert^2}\right ].
\end{aligned}\end{equation*}

In order for $\bm \pi^i$ to be inferior when compared with $\bm \pi$, i.e., \Cref{eq:large misreports V difference} holds. Hence,
\begin{equation}\begin{aligned}
2(1+2\lVert \bm \rho^{-1}\rVert_1)\frac{\gamma^{s_{\ell+1}}}{1-\gamma}&\ge \E\left [\frac{\gamma^{e_\ell-\lvert \mathcal M_{\ell,i}\rvert+1} (1-\gamma^{\lvert \mathcal M_{\ell,i}\rvert})}{1-\gamma} \frac{1}{\lvert \mE_\ell\rvert\cdot K} \frac{1}{2\lvert \mE_\ell\rvert^2}\right ] \\
&\ge \Pr\{\lvert \mathcal M_{\ell,i}\rvert\ge c\}\frac{\gamma^{s_{\ell+1}}}{1-\gamma} \frac{\gamma^{-c}-1}{2K\lvert \mE_\ell\rvert^3},\quad \forall c>0, \label{eq:large misreports takeaway}
\end{aligned}\end{equation}
where the last step uses the fact that $s_{\ell+1}=e_{\ell}+1$.
Picking $c$ such that $2(1+2\lVert \bm \rho^{-1}\rVert_1)\frac{2K\lvert \mE_\ell\rvert^3}{\gamma^{-c}-1}=\frac{1}{\lvert \mE_\ell\rvert}$, we have
\begin{equation*}
\Pr\left \{\lvert \mathcal M_{\ell,i}\rvert\ge \log_{\gamma^{-1}} (1+4(1+\lVert \bm \rho^{-1}\rVert_1)K\lvert \mE_\ell\rvert^4)\right \}\le \frac{1}{\lvert \mE_\ell\rvert}.
\end{equation*}
This completes the proof.
\end{proof}

\begin{lemma}[Misallocation Happens Rarely]\label{lem:misallocation}
For epoch $\ell\in [L]$ with dual $\bm \lambda_\ell\in \bm \Lambda$, we have
\begin{equation*}\begin{aligned}
&\Pr\Bigg \{\sum_{t\in \mE_\ell} \1\left [\argmax_{i\in \{0\} \cup [K]} (u_{t,i}-\bm \lambda_\ell^\trans \bm c_{t,i})\ne \argmax_{i\in \{0\} \cup [K]} (v_{t,i}-\bm \lambda_\ell^\trans \bm c_{t,i})\right ] \1\left [\lvert u_{t,i}-v_{t,i}\rvert\le \frac{1}{\lvert \mE_\ell\rvert},\forall i\in [K]\right ]\\
&\qquad \le 4K^2 \epsilon_c+4\log \lvert \mE_\ell\rvert\Bigg \}\ge 1-\frac{2}{\lvert \mE_\ell\rvert}.
\end{aligned}\end{equation*}
\end{lemma}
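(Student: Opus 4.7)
The plan is a three-step reduction. In Step 1 I would show that the event inside the indicator is pointwise dominated by a purely value-cost event that no longer depends on agents' reports. Writing $\tilde u_{t,i} := u_{t,i}-\bm\lambda_\ell^\trans \bm c_{t,i}$ and $\tilde v_{t,i} := v_{t,i}-\bm\lambda_\ell^\trans \bm c_{t,i}$, the second factor $\mathbbm{1}[|u_{t,i}-v_{t,i}|\le 1/|\mE_\ell|\;\forall i]$ forces $|\tilde u_{t,i}-\tilde v_{t,i}|\le 1/|\mE_\ell|$ since the cost terms cancel. Letting $i^\ast := \argmax_i \tilde v_{t,i}$ and $j := \argmax_i \tilde u_{t,i}$, the first factor requires $i^\ast \ne j$; combining the maximality inequalities $\tilde v_{t,i^\ast}\ge \tilde v_{t,j}$ and $\tilde u_{t,j}\ge \tilde u_{t,i^\ast}$ with the closeness of $\tilde u$ to $\tilde v$ yields $\tilde v_{t,i^\ast}-\tilde v_{t,j}\le 2/|\mE_\ell|$. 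Hence the indicator is bounded above by $Y_t := \mathbbm{1}[\exists\, i \ne j \in \{0\}\cup[K]\colon |\tilde v_{t,i}-\tilde v_{t,j}|\le 2/|\mE_\ell|]$, which depends only on $(\bm v_t,\bm c_t)$ and is therefore \textit{i.i.d.} across $t \in \mE_\ell$.

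Step 2 invokes \Cref{assump:smooth costs} to bound $\Pr\{Y_t=1\}$. The projected cost $\bm\lambda_\ell^\trans \bm c_{t,i}$ has density at most $\epsilon_c$ everywhere, and since $v_{t,i}$ is independent of $\bm c_{t,i}$ within each agent, a convolution argument (integrate the density of $\bm\lambda_\ell^\trans \bm c_{t,i}$ against that of $v_{t,i}$) shows that $\tilde v_{t,i}$ also admits a density bounded by $\epsilon_c$. For $i\ne j$ both in $[K]$, cross-agent independence pushes the bound through to $\tilde v_{t,i}-\tilde v_{t,j}$ (again by convolution); for the boundary pair with $j=0$, where $\tilde v_{t,0}\equiv 0$, the same $\epsilon_c$ bound applies directly to $\tilde v_{t,i}$. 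Thus $\Pr\{|\tilde v_{t,i}-\tilde v_{t,j}|\le 2/|\mE_\ell|\}\le 4\epsilon_c/|\mE_\ell|$ per pair, and a union bound over the $\binom{K+1}{2}$ unordered pairs yields $\Pr\{Y_t=1\}\le 2K(K+1)\epsilon_c/|\mE_\ell|$.

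In Step 3, since $\{Y_t\}_{t\in\mE_\ell}$ are \textit{i.i.d.} Bernoullis with sum expectation $\mu\le 2K(K+1)\epsilon_c$, a multiplicative Chernoff bound applied at the threshold $4K^2\epsilon_c\|\bm\lambda_\ell\|_1+4\log|\mE_\ell|$ controls the tail by roughly $\exp(-\tfrac{4}{3}\log|\mE_\ell|)\le 2/|\mE_\ell|$, once the Chernoff regime $\delta\ge 1$ is verified. Composing with Step 1's pointwise domination then bounds the lemma's indicator sum by $\sum_t Y_t$, yielding the claimed concentration.

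The main obstacle lies in matching the stated threshold. The natural bound on $\mu$ from Step 2 is $O(K^2\epsilon_c)$ with no dependence on $\|\bm\lambda_\ell\|_1$, so the factor $\|\bm\lambda_\ell\|_1$ appears to play a secondary role: it either serves as slack to ensure the multiplicative Chernoff regime $\delta\ge 1$ holds uniformly over $\bm\lambda_\ell\in\bm\Lambda$, or absorbs an implicit $\bm\lambda_\ell$-dependence in the density bound of \Cref{assump:smooth costs} that becomes nontrivial as $\bm\lambda_\ell\to \bm 0$ (where the projected cost degenerates toward a point mass and the assumption must be read with care). Tracking the absolute constants through the convolution and concentration steps, and handling this near-degenerate regime, is where the details require the most bookkeeping.
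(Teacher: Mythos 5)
Your proposal follows essentially the paper's own route. Step 1 (pointwise domination of the indicator by the pairwise-closeness event in the \emph{true} dual-adjusted values $\tilde v_{t,i}$, yielding $|\tilde v_{t,i}-\tilde v_{t,j}|\le 2/|\mE_\ell|$) is exactly the paper's reduction preceding \Cref{eq:misalloc prob when similar}. Step 2 (per-pair probability bound from \Cref{assump:smooth costs} followed by a union bound over the $\binom{K+1}{2}$ pairs in $\{0\}\cup[K]$, with the $j=0$ pair handled separately) also matches the paper. Step 3 is a minor variant: you reduce to the i.i.d.\ upper bound $Y_t$ and apply a multiplicative Chernoff bound, while the paper works directly on the history-dependent indicator $X_t$ and invokes the multiplicative Azuma--Hoeffding inequality (\Cref{lem:mult azuma}) on the martingale difference $X_t-\E[X_t\mid\mF_{t-1}]$, using the per-round estimate to bound $\E[X_t\mid\mF_{t-1}]$. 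Either route works; the Azuma--Hoeffding version avoids the intermediate domination by an i.i.d.\ sequence, but gives nothing sharper here.

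Your closing paragraph, flagging the $\|\bm\lambda_\ell\|_1$ factor in the threshold as mysterious, is correct and is in fact the one place where your derivation cannot be made to match the statement. The per-pair bound that \Cref{assump:smooth costs} delivers is $\frac{4\epsilon_c}{|\mE_\ell|}$, with \emph{no} dependence on $\bm\lambda_\ell$; summed over $\binom{K+1}{2}\le K^2$ pairs this gives a per-round probability $\le \frac{4K^2\epsilon_c}{|\mE_\ell|}$, not $\frac{2K^2\epsilon_c\|\bm\lambda_\ell\|_1}{|\mE_\ell|}$ as written in \Cref{eq:misalloc prob when similar}. The paper's own proof has the identical loose end: the display immediately after \Cref{eq:misalloc prob when similar} derives the $\bm\lambda_\ell$-free per-pair bound and then a factor of $\|\bm\lambda_\ell\|_1$ appears in the summed estimate with no supporting step. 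For $\|\bm\lambda_\ell\|_1<1$ the stated threshold is then smaller than the argument supports. This appears to be a bookkeeping slip rather than a missing idea on your part; downstream uses of the lemma replace $\|\bm\lambda_\ell\|_1$ by $\|\bm\rho^{-1}\|_1$ inside $N_\ell$, so the leading-order regret claims are unaffected once the per-round bound is corrected to its $\bm\lambda_\ell$-free form. (You are also right to be uneasy about \Cref{assump:smooth costs} near $\bm\lambda=\bm 0$: a density supported on an interval of length $\|\bm\lambda\|_1$ cannot remain bounded by a fixed $\epsilon_c$ as $\|\bm\lambda\|_1\to 0$, so that corner of $\bm\Lambda$ is not genuinely covered by the assumption.)
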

\begin{proof}
For any $t \in \mE_\ell$, we bound the probability that the allocation based on reported utilities differs from that based on true values, even when reports are close to truthful. Specifically, consider the event
\begin{equation*}
\argmax_{i\in \{0\} \cup [K]} (u_{t,i} - \bm \lambda_\ell^\trans \bm c_{t,i}) \ne \argmax_{i\in \{0\} \cup [K]} (v_{t,i} - \bm \lambda_\ell^\trans \bm c_{t,i})
\quad \text{and} \quad \lvert u_{t,i} - v_{t,i} \rvert \le \frac{1}{|\mE_\ell|},~\forall i\in [K].
\end{equation*}
Such a mismatch can only happen if there exists a pair of indices $i \ne j$ whose true dual-adjusted values are very close---within $\tfrac{2}{|\mE_\ell|}$---so that small deviations in reported utilities (bounded by $\tfrac{1}{|\mE_\ell|}$) are able to flip the argmax decision. We apply a Union Bound over all such pairs:
\begin{equation}\begin{aligned}
&\quad \Pr\left \{\left (\argmax_{i\in \{0\} \cup [K]} (u_{t,i}-\bm \lambda_\ell^\trans \bm c_{t,i})\ne \argmax_{i\in \{0\} \cup [K]} (v_{t,i}-\bm \lambda_\ell^\trans \bm c_{t,i})\right ) \wedge \left (\max_{i\in [K]} \lvert u_{t,i}-v_{t,i}\rvert\le \frac{1}{\lvert \mE_\ell\rvert}\right ) \right \}  \\
&\le \sum_{0\le i<j\le K}\Pr\left \{\lvert (v_{t,i}-\bm \lambda_\ell^\trans \bm c_{t,i})-(v_{t,j}-\bm \lambda_\ell^\trans \bm c_{t,j})\rvert \le \frac{2}{\lvert \mE_\ell\rvert}\right \} \le\frac{2K^2}{\lvert \mE_\ell\rvert}\epsilon_c, \label{eq:misalloc prob when similar}
\end{aligned}\end{equation}
where the second inequality comes from \Cref{assump:smooth consumptions}: Since it assumes that $\Law(\bm \lambda_\ell^\trans \bm c_{t,i})$ is uniformly upper bounded by $\epsilon_c$ where $\bm \lambda_\ell\in \bm \Lambda$, we know
\begin{equation*}
\Pr_{\bm c_{t,i}\sim \mC_i}\left \{-\frac{2}{\lvert \mE_\ell\rvert}\le \bm \lambda_\ell^\trans \bm c_{t,i}-(v_{t,i}-v_{t,j}+\bm \lambda_\ell^\trans \bm c_{t,j})\le \frac{2}{\lvert \mE_\ell\rvert}\right \}\le \frac{4}{\lvert \mE_\ell\rvert} \epsilon_c.
\end{equation*}

Now consider the martingale difference sequence $\{X_t-\E[X_t\mid \mF_{t-1}]\}_{t\in \mE_\ell}$ where
\begin{equation*}
X_t:=\1\left [\argmax_{i\in \{0\} \cup [K]} (u_{t,i}-\bm \lambda_\ell^\trans \bm c_{t,i})\ne \argmax_{i\in \{0\} \cup [K]} (v_{t,i}-\bm \lambda_\ell^\trans \bm c_{t,i})\right ] \1\left [\lvert u_{t,i}-v_{t,i}\rvert\le \frac{1}{\lvert \mE_\ell}\right ],
\end{equation*}
and $(\mF_{t})_t$ is the natural filtration defined as $\mF_t=\sigma(X_1,X_2,\ldots,X_t)$.
Apply the multiplicative Azuma-Hoeffding inequality \citep[Lemma 10]{koufogiannakis2014nearly} with $Y_t=\E[X_t\mid \mF_{t-1}]$, $\epsilon=\frac 12$, and $A=2\log \lvert \mE_\ell\rvert$. Since $X_t\in [0,1]$ a.s. and $\E[X_t-Y_t\mid \mF_{t-1}]=\E[X_t-\E[X_t\mid \mF_{t-1}]\mid \mF_{t-1}]=0$,
\begin{equation*}
\Pr\left \{\frac 12\sum_{t\in \mE_\ell} X_t\ge \sum_{t\in \mE_\ell} \E[X_t\mid \mF_{t-1}]+2\log \lvert \mE_\ell\rvert\right \}\le \exp(-\log \lvert \mE_\ell\rvert).
\end{equation*}

From \Cref{eq:misalloc prob when similar}, we know $\E[X_t\mid \mF_{t-1}]=\Pr\{X_t\mid \mF_{t-1}\}\le \frac{2K^2}{\lvert \mE_\ell\rvert}\epsilon_c \lVert \bm \lambda_\ell\rVert_1$. Therefore, we have
\begin{equation*}
\Pr\left \{\sum_{t\in \mE_\ell} X_t\le 4K^2 \epsilon_c+4\log \lvert \mE_\ell\rvert\right \}\ge 1-\frac{1}{\lvert \mE_\ell\rvert}.
\end{equation*}

Plugging back the definition of $X_t$ completes the proof.
\end{proof}

\begin{theorem}[\primalreg Guarantee; Formal \Cref{thm:informal IAPD main theorem}]\label{thm:InterEpoch guarantee}
Under \algname (\Cref{alg:mech}), there is a PBE of agents' strategies $\bm \pi^\ast$, such that the \primalreg term is bounded as (where $N_\ell$ is defined in \Cref{eq:N and M})
\begin{equation*}\begin{aligned}
\E[\primalreg]=\E\left [\sum_{t=1}^{\mT_v} (v_{t,\tilde i_t^\ast}-v_{t,i_t})\right ]\le \sum_{\ell=1}^L (N_\ell+3).
\end{aligned}\end{equation*}
\end{theorem}
\begin{proof}
For each epoch $\ell\in [L]$ and dual variable $\bm \lambda_\ell\in \Lambda$, from \Cref{lem:large misreport} to every epoch $\ell\in [L]$, there exists a $\bm \pi_{\ell,\bm \lambda_\ell}$ for the subgame of this epoch.
By definition of $\mT_v$, the safety constraint is never violated before that and thus \Cref{line:safety} has no effect before that.
Furthermore, since \Cref{alg:mech}'s allocations and payments within every epoch $\ell\in [L]$ only directly depend on the current dual variable $\bm \lambda_\ell$ but not anything else from the past, using the same arguments as in \Cref{thm:IntraEpoch guarantee}, there is a PBE $\bm \pi^\ast$ for the whole game under mechanism \Cref{alg:mech} that matches $(\bm \pi_{\ell,\bm \lambda_\ell})_{\ell\in [L]}$ up to $\mT_v$.

We now analyze this $\bm \pi^\ast$.
For every epoch $t\in \mE_\ell$, it can be either \textit{(i)} an exploration round, which happens w.p. $\frac{1}{\lvert \mE_\ell\rvert}$ independently, \textit{(ii)} a standard round with large misreports: $\exists i\in [K]$ such that $\lvert u_{t,i}-v_{t,i}\rvert\ge \frac{1}{\lvert \mE_\ell\rvert}$, or \textit{(iii)} a standard round with only small misreports. For \textit{(ii)}, we have \Cref{lem:large misreport}; for \textit{(iii)}, we have \Cref{lem:misallocation}. For \textit{(i)}, we apply the Chernoff-Bernstein inequality \citep[Proposition 2.14]{wainwright2019high}:
\begin{equation*}
\Pr\left \{\sum_{t\in \mE_\ell} \1[t\text{ exploration round}]>1+c\right \}\le \exp\left (-\frac{c^2}{2+2c/3}\right ),\quad \forall c>0.
\end{equation*}

Setting $c=\log \lvert \mE_\ell\rvert$ so that the RHS is no more than $\frac{1}{\lvert \mE_\ell\rvert}$, we get
\begin{equation}\label{eq:exploration round Chernoff}
\Pr\left \{\sum_{t\in \mE_\ell} \1[t\text{ exploration round}]\le 1+\log \lvert \mE_\ell\rvert\right \}\ge 1-\frac{1}{\lvert \mE_\ell\rvert}.
\end{equation}

Now we put the aforementioned three cases together:
\begin{equation}\begin{aligned}
&\quad \E[\primalreg]\overset{(a)}{\le} \E\left [\sum_{t=1}^{\mT_v} \1[\tilde i_t^\ast\ne i_t]\right ] \\
&\overset{(b)}{\le} \sum_{\ell=1}^L \E\Bigg [\sum_{t\in \mE_\ell}\Bigg (\1[t\text{ exploration round}]+\1\left [\exists i\in [K],\lvert u_{t,i}-v_{t,i}\rvert\ge \frac{1}{\lvert \mE_\ell}\right ]\\&\qquad +\1\left [\argmax_{i\in \{0\} \cup [K]} (u_{t,i}-\bm \lambda_\ell^\trans \bm c_{t,i})\ne \argmax_{i\in \{0\} \cup [K]} (v_{t,i}-\bm \lambda_\ell^\trans \bm c_{t,i})\wedge \lvert u_{t,i}-v_{t,i}\rvert\le \frac{1}{\lvert \mE_\ell\rvert},\forall i\in [K]\right ]\Bigg )\Bigg ] \\
&\overset{(c)}{\le} \sum_{\ell=1}^L \left (1+\log \lvert \mE_\ell\rvert+K \log_{\gamma^{-1}} (1+4(1+\lVert \bm \rho^{-1}\rVert_1)K\lvert \mE_\ell\rvert^4)+4K^2 \epsilon_c+4\log \lvert \mE_\ell\rvert+\frac{3\lvert \mE_\ell\rvert}{\lvert \mE_\ell\rvert}\right ) \label{eq:InterEpoch takeaway}\\
&=\sum_{\ell=1}^L (N_\ell+3),\quad N_\ell:= 1+K\log_{\gamma^{-1}} (1+4(1+\lVert \bm \rho^{-1}\rVert_1)K\lvert \mE_\ell\rvert^4)+4K^2 \epsilon_c +5\log \lvert \mE_\ell\rvert, 
\end{aligned}\end{equation}
where (a) uses $v_{t,i}\in [0,1]$ for all $t$ and $i$; (b) uses the above discussions of \textit{(i)}, \textit{(ii)}, and \textit{(iii)}; (c) applies \Cref{line:exploration} to \textit{(i)}, \Cref{lem:large misreport} to \textit{(ii)}, \Cref{lem:misallocation} to \textit{(iii)}, and the trivial bound that $\sum_{t\in \mE_\ell} \1[\tilde i_t^\ast\ne i_t]\le \lvert \mE_\ell\rvert$ if any of these conclusions in \Cref{line:exploration}, \Cref{lem:large misreport}, or \Cref{lem:misallocation} do not hold (each of them holds with probability at least $1-\frac{1}{\lvert \mE_\ell\rvert}$, and thus a Union Bound controls the overall failure probability by $\frac{3}{\lvert \mE_\ell\rvert}$).
\end{proof}

\section{\dualreg: Inaccurate Dual Variables Due to Incomplete Information}\label{sec:appendix DualUpd}
\subsection{\dualreg and Online Learning Regret}\label{sec:DualUpd to regret}
Recall that $\mT_v:=\min\{t\in [T]\mid \sum_{{\tau}=1}^{t} \bm c_{{\tau},i_{\tau}} + \bm 1\not \le T\bm \rho\}\cup \{T+1\}$ is be the (random) first round where the feasibility can possibly be violated, and recall that $\mathcal L_v$ is the epoch where $\mT_v$ belongs to. For a pre-determined epoching scheme $\{\mE_\ell\}_{\ell\in [L]}$ (as we do in both \mech and \mechO), since $\mT_v$ is a stopping time, $\mathcal L_v$ is also a stopping time. We formally define the \emph{online learning regret} $\mR_L^{\text{OL}}$ as (recall \Cref{eq:online learning regret informal}):
\begin{equation}\label{eq:online learning regret}
\mR_L^{\text{OL}}:=\E\left[\sup_{\bm \lambda^\ast \in \bm \Lambda} \sum_{\ell=1}^{\mathcal L_v} \left (\sum_{t\in \mE_\ell} (\bm \rho-\bm c_{t,i_t})\right )^\trans (\bm \lambda_\ell-\bm \lambda^\ast) \right].
\end{equation}
We remark that most online learning algorithms enjoy the so-called ``anytime guarantee,'' meaning that their regret guarantees also hold w.r.t. stopping times, and thus \Cref{eq:online learning regret} does not bring extra difficulties.

\begin{theorem}[\dualreg as Online Learning Regret; Restatement of \Cref{lem:DualUpd to regret informal}]\label{lem:DualUpd to regret}
Under \algname (\Cref{alg:mech}), when fixing $[T]=\bigcup_{\ell=1}^L \mE_\ell$ and letting $\bm \Lambda=\prod_{j=1}^d[0,\rho_j^{-1}]:=\{\bm \lambda\in \mathbb R_{\ge 0}^d\mid \lambda_j\le \rho_j^{-1},\forall j\in [d]\}$,
\begin{equation*}\begin{aligned}
\E[\dualreg]=\E\left [\sum_{t=1}^T v_{t,i_t^\ast}-\sum_{t=1}^{\mT_v} v_{t,\tilde i_t^\ast}\right ]\le \mR_L^{\text{OL}} + \left (1+\max_{1\le \ell\le L} \lvert \mE_\ell\rvert + \sum_{\ell=1}^L (N_\ell+3)\right ) \lVert \bm \rho^{-1}\rVert_1,
\end{aligned}\end{equation*}
where $\mR_L^{\text{OL}}$ is defined in \Cref{eq:online learning regret} and $N_\ell$ is defined in \Cref{eq:N and M}.
\end{theorem}
\begin{proof}
While the reduction to online learning regret is standard in online resource allocation \citep[see, e.g.,][]{devanur2023online}, we face two unique ingredients of epoching (inducing the $\max_\ell \lvert \mE_\ell\rvert$ term) and strategic manipulation (inducing the $\sum_\ell (N_\ell+3)$ term). The proof largely adopts that of \citet[Theorem 1]{balseiro2020dual}, and we also incorporate some martingale arguments of \citet{castiglioni2022online}.
Slightly abusing the notations, for any round $t$ belonging to an epoch $\mE_\ell$, we use $\bm \lambda_t$ and $\bm \lambda_\ell$ interchangably for the dual variable. Let $(\mF_t)_{t\ge 0}$ be the filtration specified as $\mF_t=\sigma(\bm \lambda_1,\ldots,\bm \lambda_t,\bm v_1,\ldots,\bm v_t,\bm c_1,\ldots,\bm c_t)$.
For any $t\in [T]$, let $v_t^\ast(\bm \lambda):=\max_{i\in \{0\}\cup [K]} (v_{t,i}+\bm \lambda^\trans (\bm \rho-\bm c_{t,i}))$ be the dual value with Lagrangian $\bm \lambda\in \bm \Lambda$; it is convex.
Let $v^\ast(\bm \lambda)=\E_{\bm v\sim \mV,\bm c\sim \mC} [\max_{i\in \{0\}\cup [K]}(v_{i}+\bm \lambda^\trans (\bm \rho-\bm c_{i}))]$. It is an expectation of a convex function and also convex.

\paragraph{Step 1: Lower bound the values collected by $\{\tilde i_t^\ast\}_{t\in [\mT_v]}$.}
By definition of $\tilde i_t^\ast$, $v_{t,\tilde i_t^\ast}=v_t^\ast(\bm \lambda_t)+\bm \lambda_t^\trans \bm c_{t,\tilde i_t^\ast}-\bm \lambda_t^\trans \bm \rho$ for all $t\le \mT_v$.
Since $\bm \lambda_t$ is $\mF_{t-1}$-measurable but $\bm v_t$ and $\bm c_t$ are sampled from $\mV$ and $\mC$ independently to $\mF_{t-1}$,
\begin{equation*}
\E\left [v_{t,\tilde i_t^\ast}\middle \vert \mF_{t-1}\right ]=\E\left [v_t^\ast(\bm \lambda_t)+\bm \lambda_t^\trans \bm c_{t,\tilde i_t^\ast}-\bm \lambda_t^\trans \bm \rho\middle \vert \mF_{t-1}\right ]=v^\ast(\bm \lambda_t)-\bm \lambda_t^\trans \bm \rho+\E\left [\bm \lambda_t^\trans \bm c_{t,\tilde i_t^\ast}\middle \vert \mF_{t-1}\right ].
\end{equation*}
Put this equality in another way, the stochastic process $(X_t)_{t\ge 1}$ adapted to $(\mF_t)_{t\ge 0}$ defined as
\begin{equation*}
X_t:=v_{t,\tilde i_t^\ast} + \bm \lambda_t^\trans (\bm \rho-\bm c_{t,\tilde i_t^\ast})-v^\ast(\bm \lambda_t),\quad \forall t\le \mT_v,
\end{equation*}
ensures $\E[X_t\mid \mF_{t-1}]=0$ and is thus a martingale difference sequence. Since $\mT_v\le T+1$ almost surely by definition, Optional Stopping Time theorem \citep[Theorem 10.10]{williams1991probability} gives
\begin{equation*}
\E\left [\sum_{t=1}^{\mT_v} X_t\right ]=0\Longrightarrow \E\left [\sum_{t=1}^{\mT_v} v_{t,\tilde i_t^\ast}\right ]=\E\left [\sum_{t=1}^{\mT_v} v^\ast(\bm \lambda_t)-\sum_{t=1}^{\mT_v} \bm \lambda_t^\trans (\bm \rho-\bm c_{t,\tilde i_t^\ast})\right ].
\end{equation*}

Utilizing the convexity of $v^\ast$ gives
\begin{equation}\begin{aligned}
\E\left [\sum_{t=1}^{\mT_v} v_{t,\tilde i_t^\ast}\right ]&=\E\left [\mT_v\cdot \frac{1}{\mT_v}\sum_{t=1}^{\mT_v} v^\ast(\bm \lambda_t)-\sum_{t=1}^{\mT_v} \bm \lambda_t^\trans (\bm \rho-\bm c_{t,\tilde i_t^\ast})\right ]\\
&\overset{(a)}{\ge} \E\left [\mT_v\cdot v^\ast\left (\frac{1}{\mT_v}\sum_{t=1}^{\mT_v} \bm \lambda_t\right )-\sum_{t=1}^{\mT_v} \bm \lambda_t^\trans (\bm \rho-\bm c_{t,\tilde i_t^\ast})\right ] \\
&\overset{(b)}{\ge} \E[\mT_v] \inf_{\bm \lambda\in \bm \Lambda} v^\ast(\bm \lambda)-\E\left [\sum_{t=1}^{\mT_v} \bm \lambda_t^\trans (\bm \rho-\bm c_{t,\tilde i_t^\ast})\right ], \label{eq:sum of v for tilde i}
\end{aligned}\end{equation}
where (a) uses Jensen's inequality and (b) uses $\bm \lambda=\frac{1}{\mT_v}\sum_{\tau=1}^{\mT_v} \bm \lambda_\tau\in \bm \Lambda$ (as $\bm \lambda_\tau\in \bm \Lambda$ for all $\tau$ and $\bm \Lambda$ is convex).

\paragraph{Step 2: Upper bound the offline optimal social welfare $\sum_{t=1}^T v_{t,i_t^\ast}$.}
We now work on the other term in $\E[\dualreg]$, namely $\E[\sum_{t=1}^T v_{t,i_t^\ast}]$.
For any fixed $\bm \lambda\in \bm \Lambda$, $v_t^\ast(\bm \lambda)\ge v_{t,i_t^\ast}+\bm \lambda^\trans (\bm \rho-\bm c_{t,i_t^\ast})$.
Consider the stochastic process $(Y_t)_{t=1}^T$ adapted to $(\mF_t)_{t=0}^T$ where $\mF_t=\sigma(Y_1,Y_2,\ldots,Y_t)$ is the natural filtration:
\begin{equation*}
Y_t:=(T-t)v^\ast(\bm \lambda)+\sum_{\tau=1}^t (v_{\tau,i_\tau^\ast}+\bm \lambda^\trans (\bm \rho-\bm c_{\tau,i_\tau^\ast})),
\end{equation*}

Since $\E[v_t^\ast(\bm \lambda)\mid \mF_{t-1}]=v^\ast(\bm \lambda)$ due to the i.i.d. nature of $\bm v_t$ and $\bm c_t$, $(Y_t)_{t=1}^T$ is a super-martingale:
\begin{equation*}
\E[Y_t\mid \mF_{t-1}]-Y_{t-1}=\E[v_{t,i_t^\ast}+\bm \lambda^\trans(\bm \rho-\bm c_{t,i_t^\ast})\mid \mF_{t-1}]-v^\ast(\bm \lambda)\le 0.
\end{equation*}
Since $\mT_v\le (T+1)$ a.s., the Optional Stopping Time theorem \citep[Theorem 10.10]{williams1991probability} gives
\begin{equation}\label{eq:supermartingale UB part in duality}
Tv^\ast(\bm \lambda)=\E[Y_0]\ge \E[Y_{\mT_v}]=\E\left [(T-\mT_v) v^\ast(\bm \lambda)+\sum_{t=1}^{\mT_v} (v_{t,i_t^\ast}+\bm \lambda^\trans (\bm \rho-\bm c_{t,i_t^\ast}))\right ].
\end{equation}

Since $\{i_t^\ast\}_{t=1}^T$ is a feasible allocation sequence, we must have $\sum_{t=1}^{\mT_v} \bm c_{t,i_t^\ast}\le T\bm \rho$, which gives
\begin{equation*}\begin{aligned}
\E[Y_{\mT_v}]&=\E\left [(T-\mT_v) v^\ast(\bm \lambda)+\sum_{t=1}^{\mT_v} (v_{t,i_t^\ast}+\bm \lambda^\trans (\bm \rho-\bm c_{t,i_t^\ast}))\right ]\\
&\ge \E\left [(T-\mT_v) v^\ast(\bm \lambda)+\sum_{t=1}^{\mT_v} v_{t,i_t^\ast}+\mT_v \bm \lambda^\trans \bm \rho-T \bm \lambda^\trans \bm \rho\right ]\\
&=\E\left [\sum_{t=1}^{\mT_v} v_{t,i_t^\ast}+(T-\mT_v) (v^\ast(\bm \lambda)-\bm \lambda^\trans \bm \rho)\right ]\overset{(a)}{\ge} \E\left [\sum_{t=1}^{\mT_v} v_{t,i_t^\ast}\right ],
\end{aligned}\end{equation*}
where (a) is because $0$ is always a feasible action and thus $v^\ast(\bm \lambda)\ge \bm \lambda^\trans \bm \rho$, i.e.,
\begin{equation*}
v^\ast(\bm \lambda)=\E_{\bm v\sim \mV,\bm c\sim \mC} \left [\max_{i\in \{0\}\cup [K]}(v_{i}+\bm \lambda^\trans (\bm \rho-\bm c_{i}))\right ]\ge \E_{\bm v\sim \mV,\bm c\sim \mC} \left [(0+\bm \lambda^\trans (\bm \rho-\bm 0))\right ]=\bm \lambda^\trans \bm \rho.
\end{equation*}

Plugging this back into \Cref{eq:supermartingale UB part in duality}, we get
\begin{equation}\label{eq:sum of v for i ast}
\E\left [\sum_{t=1}^{\mT_v} v_{t,i_t^\ast}\right ]\le \E[Y_{\mT_v}]\le \E[Y_0]\le T v^\ast(\bm \lambda),\quad \forall \bm \lambda\in \bm \Lambda.
\end{equation}
This reveals that the hindsight optimal social welfare has an expectation bounded by $T \inf_{\bm \lambda\in \bm \Lambda}v^\ast(\bm \lambda)$.

Putting \Cref{eq:sum of v for tilde i,eq:sum of v for i ast} together, we get
\begin{equation}\begin{aligned}
\E[\dualreg]=\E\left [\sum_{t=1}^{T} v_{t,i_t^\ast}-\sum_{t=1}^{\mT_v} v_{t,\tilde i_t^\ast}\right ]
&\le T\inf_{\bm \lambda\in \bm \Lambda} v^\ast(\bm \lambda) - \E[\mT_v] \inf_{\bm \lambda\in \bm \Lambda} v^\ast(\bm \lambda) +\E\left [\sum_{t=1}^{\mT_v} \bm \lambda_t^\trans (\bm \rho-\bm c_{t,\tilde i_t^\ast})\right ] \\
&\le \E[T-\mT_v]+\E\left [\sum_{t=1}^{\mT_v} \bm \lambda_t^\trans (\bm \rho-\bm c_{t,\tilde i_t^\ast})\right ],\label{eq:dualvar bound after optional stopping}
\end{aligned}\end{equation}
where we used the upper bound that $\inf_{\bm \lambda\in \bm \Lambda}v^\ast(\bm \lambda)\le v^\ast(\bm 0)=\E[\max_{i}v_i]\le 1$.
Comparing \Cref{eq:dualvar bound after optional stopping} to our conclusion, it only remains to associate $\bm \lambda_t^\trans \bm c_{t,\tilde i_t^\ast}$ with $\bm \lambda_t^\trans \bm c_{t,i_t}$ and further control $\E[T-\mT_v]$.

\paragraph{Step 3: Relate $\bm \lambda_t^\trans (\bm \rho-\bm c_{t,\tilde i_t^\ast})$ to $\bm \lambda_t^\trans (\bm \rho-\bm c_{t,i_t})$.}
Using \Cref{eq:InterEpoch takeaway} from \Cref{thm:InterEpoch guarantee}, with probability $1-\frac{3}{\lvert \mE_\ell\rvert}$, the sequence $\{\tilde i_t^\ast\}_{t\in \mE_\ell}$ and $\{i_t\}_{t\in \mE_\ell}$ only differs by no more than $N_{\ell}$ (defined in \Cref{eq:N and M}).

That is, we have shown than $\E\left [\sum_{t=1}^{\mT_v} \1[\tilde i_t^\ast\ne i_t]\right ] \le \sum_{\ell=1}^L({N_\ell}+3)$ where the 3 comes from the $\frac{3}{\lvert \mE_\ell\rvert}$ failure probability and the fact that $\sum_{t\in \mE_\ell} \1[\tilde i_t^\ast\ne i_t]\le \lvert \mE_\ell\rvert$.
Combining it with the observation that
\begin{equation*}
\lvert \bm c_{t,i}^\trans \bm \lambda_t-\bm c_{t,j}^\trans \bm \lambda_t\rvert\le \lVert \bm \lambda_t\rVert_1\cdot \lVert \bm c_{t,i}-\bm c_{t,i}\rVert_\infty\le \bm \lVert \bm \rho^{-1}\rVert_1,\quad \forall i\ne j,
\end{equation*}
where we shall recall that $\bm c_{t,i},\bm c_{t,j}\in [0,1]^d$ and that $\bm \lambda_t\in \bm \Lambda=\bigotimes_{j=1}^d [0,\rho_j^{-1}]$, \Cref{eq:dualvar bound after optional stopping} gives
\begin{equation}\begin{aligned}
\E[\dualreg]&\le \E\left [\sum_{t=1}^{\mT_v} \bm \lambda_t^\trans (\bm \rho-\bm c_{t,\tilde i_t^\ast})\right ]+\E[T-\mT_v] \\
&\le \E\left [\sum_{t=1}^{\mT_v} (\bm \lambda_t^\trans (\bm \rho-\bm c_{t,i_t})+\lVert \bm \rho^{-1}\rVert_1 \cdot \1[\tilde i_t^\ast\ne i_t])\right ]+\E[T-\mT_v] \\
&\le \E\left [\sum_{t=1}^{\mT_v} \bm \lambda_t^\trans (\bm \rho-\bm c_{t,i_t})\right ]+\lVert \bm \rho^{-1}\rVert_1\sum_{\ell=1}^L (N_\ell+3)+\E[T-\mT_v]. \label{eq:DualUpd after similarity}
\end{aligned}\end{equation}

\paragraph{Step 4: Control $\E[T-\mT_v]$.}
Recall the definition that $\mT_v:=\min \{t\in [T]\mid \sum_{{\tau}=1}^{t} \bm c_{{\tau},i_{\tau}} + \bm 1\not \le T\bm \rho\}\cup \{T+1\}$.
If $\mT_v=T+1$, then $(T-\mT_v)$ is trivially bounded. Otherwise, suppose that the condition $\sum_{{\tau}=1}^{\mT_v} \bm c_{{\tau},i_{\tau}} + \bm 1\not \le T\bm \rho$ is violated for the $j\in [d]$-th coordinate (if there are multiple $j$'s, pick one arbitrarily). We have
\begin{equation}\label{eq:j violates rho}
\sum_{t=1}^{\mT_v} c_{t,i_t,j}+1>T\rho_j \Longrightarrow \sum_{t=1}^{\mT_v} (\rho_j-c_{t,i_t,j})<1-(T-\mT_v)\rho_j.
\end{equation} 

Let $\bm \lambda^\ast=\frac{1}{\rho_j} \bm e_j$ where $\bm e_j$ is the one-hot vector over coordinate $j$, we know $\bm \lambda^\ast\in \bm \Lambda$ and that
\begin{equation*}
\sum_{t=1}^{\mT_v} (\bm \rho-\bm c_{t,i_t})^\trans \bm \lambda^\ast =\sum_{t=1}^{\mT_v} \frac{\rho_j-c_{t,i_t,j}}{\rho_j} \overset{\text{\Cref{eq:j violates rho}}}{<}\frac{1-(T-\mT_v)\rho_j}{\rho_j}=\rho_j^{-1}-(T-\mT_v).
\end{equation*}
Rearranging gives $(T-\mT_v)\le \max_{j\in [d]} \rho_j^{-1}+\sup_{\bm \lambda^\ast \in \bm \Lambda} \left (\sum_{t=1}^{\mT_v} (\bm \rho-\bm c_{t,i_t})^\trans \bm \lambda^\ast\right )$.

\paragraph{Step 5: Final bound.}
Taking expectation and plugging it back to \Cref{eq:DualUpd after similarity}, we yield
\begin{equation*}\begin{aligned}
\E[\dualreg]&\le \E\left [\sum_{t=1}^{\mT_v} (\bm \rho-\bm c_{t,i_t})^\trans \bm \lambda_t\right ]+ \sum_{\ell=1}^L (N_\ell+3) \lVert \bm \rho^{-1}\rVert_1+\E\left [\max_{j\in [d]} \rho_j^{-1}+\sup_{\bm \lambda^\ast \in \bm \Lambda} \left (\sum_{t=1}^{\mT_v} (\bm \rho-\bm c_{t,i_t})^\trans \bm \lambda^\ast\right )\right ]\\
&\le \E\left [\sup_{\bm \lambda^\ast\in \bm \Lambda}\sum_{t=1}^{\mT_v} (\bm \rho-\bm c_{t,i_t})^\trans(\bm \lambda_t-\bm \lambda^\ast)\right ]+\sum_{\ell=1}^L (N_\ell+3) \lVert \bm \rho^{-1}\rVert_1+\lVert \bm \rho^{-1}\rVert_\infty.
\end{aligned}\end{equation*}

Recall $\mathcal L_v$ is the (random) epoch where $\mT_v$ lies in. Since in every round $t\in \mE_{\mathcal L_v}$ the single-round regret satisfies $\lvert (\bm \rho-\bm c_{t,i_t})^\trans (\bm \lambda_\ell-\bm \lambda^\ast)\rvert\le \lVert \bm \rho-\bm c_{t,i_t}\rVert_\infty \cdot \lVert \bm \lambda_\ell-\bm \lambda^\ast\rVert_1\le 2\lVert \bm \rho^{-1}\rVert_1$ (single-round regret can be negative, so we must control its absolute value),
and this epoch has length $\lvert \mE_{\mathcal L_v}\rvert\le \max_{\ell\in [L]} \lvert \mE_\ell\rvert$ a.s., we have
\begin{equation*}
\E[\dualreg]\le \E\left [\sup_{\bm \lambda^\ast\in \bm \Lambda} \sum_{\ell=1}^{\mathcal L_v} \left (\sum_{t\in \mE_\ell} (\bm \rho-\bm c_{t,i_t})\right )^\trans (\bm \lambda_\ell-\bm \lambda^\ast)\right ]+\left (1+\max_{1\le \ell\le L} \lvert \mE_\ell\rvert + \sum_{\ell=1}^L (N_\ell+3)\right ) \lVert \bm \rho^{-1}\rVert_1,
\end{equation*}
where we used the fact $\lVert \bm \rho^{-1}\rVert_\infty\le \lVert \bm \rho^{-1}\rVert_1$ as well.
\end{proof}

\subsection{\dualreg Guarantee for FTRL in \Cref{alg:lambda FTRL}}\label{sec:DualUpd guarantee FTRL}
\begin{theorem}[\dualreg Guarantee with FTRL; Restatement of \Cref{thm:DualUpd guarantee FTRL informal}]\label{thm:DualUpd guarantee FTRL}
When using \ftrl (\Cref{eq:lambda FTRL} of \Cref{alg:lambda FTRL}) to decide $\{\bm \lambda_\ell\}_{\ell\in [L]}$, the online learning regret is no more than
\begin{equation*}
\mR_L^{\text{OL}}:=\E\left [\sup_{\bm \lambda^\ast\in \bm \Lambda} \sum_{\ell=1}^{\mathcal L_v} \left (\sum_{t\in \mE_\ell} (\bm \rho-\bm c_{t,i_t})\right )^\trans (\bm \lambda_\ell-\bm \lambda^\ast)\right ]\le \sup_{\bm \lambda^\ast\in \bm \Lambda}\frac{\Psi(\bm \lambda^\ast)}{\eta_L}+d \sum_{\ell=1}^L \eta_\ell \lvert \mE_\ell\rvert^2.
\end{equation*}
Specifically, with $\Psi(\bm \lambda)=\frac 12 \lVert \bm \lambda\rVert_2^2$ and $\eta_\ell=\frac{\lVert \bm \rho^{-1}\rVert_2}{\sqrt{2d}}\big (\sum_{\ell'=1}^{\ell} \lvert \mE_{\ell'}\rvert^2\big )^{-1/2}$ for all $\ell\in [L]$, (where $N_\ell$ is in \Cref{eq:N and M})
\begin{equation*}
\E[\dualreg]\le \sqrt{2d}\lVert \bm \rho^{-1}\rVert_2 \cdot \sqrt{\sum_{\ell=1}^L \lvert \mE_\ell\rvert^2}+ \left (1+\max_{1\le \ell\le L} \lvert \mE_\ell\rvert+\sum_{\ell=1}^L (N_\ell+3)\right ) \lVert \bm \rho^{-1}\rVert_1.
\end{equation*}
\end{theorem}
\begin{proof}
Since $\mathcal L_v$ is a stopping time bounded by $L$ a.s., applying standard \ftrl guarantee \citep[see, e.g.,][Corollary 7.7]{orabona2019modern}  over $\bm \Lambda=\bigotimes_{j=1}^d [0,\rho_j^{-1}]$ with loss functions $F_\ell(\bm \lambda):=\sum_{t\in \mE_\ell} (\bm \rho-\bm c_{t,i_t})^\trans \bm \lambda$ gives
\begin{equation*}
\mR_L^{\text{OL}}=\E\left [\sup_{\bm \lambda^\ast\in \bm \Lambda} \sum_{\ell=1}^{\mathcal L_v} \left (\sum_{t\in \mE_\ell} (\bm \rho-\bm c_{t,i_t})\right )^\trans (\bm \lambda_\ell-\bm \lambda^\ast)\right ]\le \sup_{\bm \lambda^\ast\in \bm \Lambda} \frac{\Psi(\bm \lambda^\ast)}{\eta_L}+\sum_{\ell=1}^L \eta_\ell \E\left [\left \lVert \sum_{t\in \mE_\ell} (\bm \rho-\bm c_{t,i_t})\right \rVert_2^2\right ],
\end{equation*}
where we used $\nabla_{\bm \lambda} F_\ell(\bm \lambda)=\sum_{t\in \mE_\ell}(\bm \rho-\bm c_{t,i_t})$.
As $\bm \rho,\bm c_{t,i}\in [0,1]^d$, $\lVert \sum_{t\in \mE_\ell} (\bm \rho-\bm c_{t,i_t})\rVert_2^2\le d \lvert \mE_\ell\rvert^2$. Hence
\begin{equation*}
\mR_L^{\text{OL}}\le \sup_{\bm \lambda^\ast\in \bm \Lambda}\frac{\Psi(\bm \lambda^\ast)}{\eta_L}+d \sum_{\ell=1}^L \eta_\ell \lvert \mE_\ell\rvert^2.
\end{equation*}
This gives the first conclusion. When $\Psi(\bm \lambda)=\frac 12 \lVert \bm \lambda\rVert_2^2$, we know $\Psi(\bm \lambda^\ast)=\frac 12 \lVert \bm \lambda^\ast\rVert_2^2\le \frac 12 \lVert \bm \rho^{-1}\rVert_2^2$ for all $\bm \lambda^\ast\in \bm \Lambda=\bigotimes_{j=1}^d [0,\rho_j^{-1}]$. Further plugging in $\eta_\ell=\frac{\lVert \bm \rho^{-1}\rVert_2}{\sqrt{2d}}\big (\sum_{\ell'=1}^\ell \lvert \mE_{\ell'}\rvert^2\big )^{-1/2}$ gives
\begin{equation*}\begin{aligned}
\mR_L^{\text{OL}}&\le \frac{\sqrt{2d}}{\lVert \bm \rho^{-1}\rVert_2} \cdot \frac 12 \lVert \bm \rho^{-1}\rVert_2^2 \sqrt{\sum_{\ell=1}^L \lvert \mE_\ell\rvert^2}+\frac{\lVert \bm \rho^{-1}\rVert_2}{\sqrt{2d}}\cdot d \sum_{\ell=1}^L \frac{\lvert \mE_\ell\rvert^2}{\sqrt{\sum_{\ell'\le \ell} \lvert \mE_{\ell'}\rvert^2}}\\
&\overset{(a)}{\le} \sqrt{2d} \cdot \frac 12 \lVert \bm \rho^{-1}\rVert_2 \sqrt{\sum_{\ell=1}^L \lvert \mE_\ell\rvert^2}+\frac{\lVert \bm \rho^{-1}\rVert_2}{\sqrt{2d}} \cdot \frac{2d}{2} \sqrt{\sum_{\ell=1}^L \lvert \mE_\ell\rvert^2}\overset{(b)}{=}\sqrt{2d}\lVert \bm \rho^{-1}\rVert_2 \sqrt{\sum_{\ell=1}^L \lvert \mE_\ell\rvert^2},
\end{aligned}\end{equation*}
where (a) uses the folklore summation lemma that $\sum_{t=1}^T \frac{x_t}{\sqrt{\sum_{s=1}^t x_s}}\le 2 \sqrt{\sum_{t=1}^T x_t}$ for all $x_1,x_2,\ldots,x_T\in \mathbb R_{\ge 0}$ \citep[see, e.g.,][Lemma 4]{duchi2011adaptive} and (b) follows from rearranging the terms.

Plugging the online learning regret $\mR_L^{\text{OL}}$ into \Cref{lem:DualUpd to regret}, we therefore get
\begin{equation*}
\E[\dualreg]\le \sqrt{2d}\lVert \bm \rho^{-1}\rVert_2 \sqrt{\sum_{\ell=1}^L \lvert \mE_\ell\rvert^2}+ \left (1+\max_{1\le \ell\le L} \lvert \mE_\ell\rvert+\sum_{\ell=1}^L (N_\ell+3)\right ) \lVert \bm \rho^{-1}\rVert_1,
\end{equation*}
where $N_\ell$ is defined in \Cref{eq:N and M}. This finishes the proof.
\end{proof}

\subsection{\dualreg Guarantee for O-FTRL-FP in \Cref{eq:lambda O-FTRL-FP}}

\begin{theorem}[\dualreg Guarantee with O-FTRL-FP; Restatement of \Cref{thm:DualUpd guarantee O-FTRL-FP informal}]\label{thm:DualUpd guarantee O-FTRL-FP}
When using \oftrlfp (\Cref{eq:lambda O-FTRL-FP} of \Cref{alg:lambda O-FTRL-FP}) to decide $\{\bm \lambda_\ell\}_{\ell\in [L]}$, the online learning regret is no more than
\begin{equation*}\begin{aligned}
\mR_L^{\text{OL}}&\le \sup_{\bm \lambda^\ast\in \bm \Lambda} \frac{\Psi(\bm \lambda^\ast)}{\eta_L}+d \eta_1\lvert \mE_1\rvert^2+\sum_{\ell=2}^L \eta_\ell \frac{\lvert \mE_\ell\rvert^2}{\sum_{\ell'<\ell} \lvert \mE_{\ell'}\rvert} \left (4d^2 + 24d \log(dTL) + 24d \sum_{j=1}^d \log \frac{dK^2 \epsilon_c T}{\rho_j}\right ) \\
&\quad +\sum_{\ell=2}^L \eta_\ell \lvert \mE_\ell\rvert (16d+10)+\sum_{\ell=1}^L \eta_\ell\left (2N_\ell^2 + \frac{2d \lvert \mE_\ell\rvert^2 M_\ell^2}{(\sum_{\ell'=1}^{\ell-1} \lvert \mE_{\ell'}\rvert)^2}\right )+2\lVert \bm \rho^{-1}\rVert_1,
\end{aligned}\end{equation*}
where $N_\ell$ and $M_\ell$ are defined in \Cref{eq:N and M}.
For readability, we also annotate the order of every term in terms of $\Otil_{T}$, which only highlights the polynomial dependency on $T$ (including $L$, $\{\lvert \mE_\ell\rvert\}_{\ell\in [L]}$, and $\{\eta_\ell\}_{\ell\in [L]}$):
\begin{equation*}
\mR_L^{\text{OL}}=\Otil_T \left (\eta_L^{-1}+\eta_1 \lvert \mE_1\rvert^2+\sum_{\ell=1}^L \eta_\ell \frac{\lvert \mE_\ell\rvert^2}{\sum_{\ell'<\ell} \lvert \mE_{\ell'}\rvert}+\sum_{\ell=1}^L \eta_\ell \lvert \mE_\ell\rvert +\sum_{\ell=1}^L \eta_\ell \frac{\lvert \mE_\ell\rvert^2 \ell^2}{(\sum_{\ell'=1}^{\ell-1} \lvert \mE_{\ell'}\rvert)^2}+1\right ).
\end{equation*}
Specifically, with $\Psi(\bm \lambda)=\frac 12 \lVert \bm \lambda\rVert_2^2$ and $\eta_\ell=\frac{\lVert \bm \rho^{-1}\rVert_2}{2d}\big (\sum_{\ell'=1}^{\ell} \lvert \mE_{\ell'}\rvert\big )^{-1/2}$ for all $\ell\in [L]$, when $\lvert \mE_1\rvert\le \lvert \mE_2\rvert \le \cdots \lvert \mE_L\rvert$,
\begin{equation*}\begin{aligned}
\E[\dualreg]&\le 2 \sqrt{T} \left (3d + 12 \log(dTL) + 12 \sum_{j=1}^d \log \frac{dK^2 \epsilon_c T}{T\rho_j}+18\right )\lVert \bm \rho^{-1}\rVert_1+d \lvert \mE_1\rvert^{1.5} \\
&\quad +\sum_{\ell=2}^L \frac{N_\ell^2/d+\lvert \mE_\ell\rvert^2 M_\ell^2/(\sum_{\ell'=1}^{\ell-1} \lvert \mE_{\ell'}\rvert)^2}{(\sum_{\ell'=1}^\ell \lvert \mE_{\ell'}\rvert)^{1/2}} \lVert \bm \rho^{-1}\rVert_1 + \left (3+\max_{\ell\in [L]} \lvert \mE_\ell\rvert+\sum_{\ell=1}^L (N_\ell+3)\right )\lVert \bm \rho^{-1}\rVert_1.
\end{aligned}\end{equation*}
\end{theorem}
\begin{proof}
\textbf{Step 1: Establishing $(\epsilon, L)$-gradient-stability.}
In \Cref{lem:approximate continuity of predictions}, we prove for all $\ell\ge 2$ that
\begin{equation}\begin{aligned}
&\lVert \tilde{\bm g}_\ell(\bm \lambda_1)-\tilde{\bm g}_\ell(\bm \lambda_2)\rVert_2 \le 4\lvert \mE_{\ell}\rvert K^2 \epsilon_\ell \epsilon_c d + \frac{4\lvert \mE_\ell\rvert (\log \frac{1}{\delta_\ell}+\sum_{j=1}^d \log \frac{\sqrt d}{\rho_j \epsilon_\ell})}{\sum_{\ell'<\ell} \lvert \mE_{\ell'}\rvert}\sqrt d,  \\
&\quad \forall \bm \lambda_1,\bm \lambda_2\in \bm \Lambda\text{ s.t. }\lVert \bm \lambda_1-\bm \lambda_2\rVert_2\le \epsilon_\ell,\quad \text{with probability }1-\delta_\ell,\label{eq:approximate continuity of predictions}
\end{aligned}\end{equation}
for any fixed $\epsilon_\ell>0$ and $\delta_\ell\in (0,1)$.
With $\epsilon_\ell$ and $L_\epsilon$ defined as follows, \Cref{eq:approximate continuity of predictions} translates to a $(\epsilon_\ell,L_\ell)$-Gradient-Stability guarantee (\Cref{def:gradient stability}) of our predicted gradient $\tilde{\bm g}_\ell(\bm \lambda_\ell)$ (\Cref{eq:lambda O-FTRL-FP} in \Cref{alg:lambda O-FTRL-FP}):
\begin{equation}\label{eq:gradient stability in DualIneff}
\epsilon_\ell=\frac{\sqrt d}{K^2 \epsilon_c \sum_{\ell'<\ell} \lvert \mE_{\ell'}\rvert},\quad L_\ell=\frac{4 \lvert \mE_{\ell}\rvert \sqrt d}{\sum_{\ell'<\ell} \lvert \mE_{\ell'}\rvert}\left (d+\log \frac{1}{\delta_\ell}+\sum_{j=1}^d \log \frac{\sqrt d}{\rho_j \epsilon_\ell}\right ),\quad \forall \ell\ge 2.
\end{equation}
We also set $\epsilon_1=\lVert \bm \rho^{-1}\rVert_2$ and $L_1=0$ for consistency (since $\tilde{\bm g}_1(\bm \lambda)$ is always $\bm 0$, this happens w.p. 1).

Under this $\epsilon_\ell$ and $L_\ell$, we call the event defined in \Cref{eq:approximate continuity of predictions} $\mathcal G_\ell$. Since $\nabla_{\bm \lambda} F_\ell(\bm \lambda)=\sum_{t\in \mE_\ell}(\bm \rho-\bm c_{t,i_t})$, $\nabla_{\bm \lambda}(\tilde{\bm g}_\ell(\bm \lambda_\ell)^\trans \bm \lambda)=\tilde{\bm g}_\ell(\bm \lambda_\ell)$, the loss $F_\ell(\bm \lambda)$ is $\sqrt d \lvert \mE_\ell\rvert$-Lipschitz since $(\bm \rho-\bm c_{t,i_t})\in [0,1]^d$, the loss and predictions are $0$-smooth w.r.t. $\bm \lambda$ (both linear), and the dual norm of $\ell_2$-norm is itself, conditional on $\mathcal G_1,\ldots,\mathcal G_L$---which happens with probability at least $1-\sum_{\ell=1}^L \delta_\ell$ (the conditioning is valid because $\mathcal G_\ell$ only depends on a fixed $\frac{\epsilon_\ell}{2}$-net of $\bm \Lambda$ and historical reports and consumptions, and is thus measurable before the start of epoch $\mE_\ell$)---the \oftrlfp guarantee in \Cref{lem:O-FTRL lemma} gives (again using the fact that $\mathcal L_v\le L$ a.s.)
\begin{equation}\begin{aligned}
\E\left [\sup_{\bm \lambda^\ast\in \bm \Lambda} \sum_{\ell=1}^{\mathcal L_v} \left (\sum_{t\in \mE_\ell} (\bm \rho-\bm c_{t,i_t})\right )^\trans (\bm \lambda_\ell-\bm \lambda^\ast)\right ]&\le \underbrace{\sup_{\bm \lambda^\ast\in \bm \Lambda}\frac{\Psi(\bm \lambda^\ast)}{\eta_L}}_{\texttt{Diameter}}+\sum_{\ell=1}^L \eta_\ell \underbrace{\E\left [\left \lVert \sum_{t\in \mE_\ell} (\bm \rho-\bm c_{t,i_t})-\tilde{\bm g}_\ell(\bm \lambda_\ell)\right \rVert_2^2\right ]}_{\texttt{Stability}_\ell} \\
&\quad +\underbrace{\sum_{\ell=1}^L \sqrt{d} \lvert \mE_\ell\rvert \eta_\ell L_\ell+\left (\sum_{\ell=1}^L \delta_\ell\right ) 2T \lVert \bm \rho^{-1}\rVert_1}_{\texttt{Fixed Point Error}},\label{eq:online learning regret decomposuition in O-FTRL-FP}
\end{aligned}\end{equation}
where the last term considers the failure probability of (any of) $\mathcal G_1,\ldots,\mathcal G_L$, in which case we use the trivial bound that $\sum_{t=1}^{T}(\bm \rho-\bm c_{t,i_t})^\trans (\bm \lambda_t-\bm \lambda^\ast)\le T \cdot \left (\max_{t,i} \lVert \bm \rho-\bm c_{t,i}\rVert_\infty\right )\cdot \left (2\sup_{\bm \lambda\in \bm \Lambda} \lVert \bm \lambda\rVert_1\right )\le 2T \lVert \bm \rho^{-1}\rVert_1$.

\paragraph{Step 2: Control the \texttt{Stability} terms.}
For notational convenience, we add a superscript $u$ to the $\tilde{\bm g}_\ell(\bm \lambda)$ defined in \Cref{eq:lambda O-FTRL-FP} to highlight it is yielded from reports $\{\bm u_\tau\}_{\ell'<\ell,\tau\in \mE_{\ell'}}$. We then define $\tilde{\bm g}_\ell^{v}(\bm \lambda)$ using the true values $\{\bm v_\tau\}_{\ell'<\ell,\tau\in \mE_{\ell'}}$ and $\tilde{\bm g}_\ell^{\ast}(\bm \lambda)$ using the underlying true distributions $\mV = \{\mV_i\}_{i\in [K]}$ and $\mC = \{\mC_i\}_{i\in [K]}$:
\begin{equation}\begin{aligned}
\tilde{\bm g}_\ell^{{\color{magenta}u}}(\bm \lambda)&=\lvert \mE_\ell\rvert\cdot \frac{1}{\sum_{\ell'<\ell} \lvert \mE_{\ell'}\rvert} \sum_{\ell'<\ell,\tau \in \mE_{\ell'}} \left (\bm \rho-\bm c_{\tau,\tilde i_\tau^{{\color{magenta} u}}(\bm \lambda)}\right ), && \tilde i_\tau^{{\color{magenta} u}}(\bm \lambda)= \argmax_{i \in \{0\} \cup [K]} \left( {\color{magenta}u_{\tau,i}} - \bm \lambda^\trans \bm{c}_{\tau,i} \right); \\
\tilde{\bm g}_\ell^{{\color{magenta}v}}(\bm \lambda)&=\lvert \mE_\ell\rvert\cdot \frac{1}{\sum_{\ell'<\ell} \lvert \mE_{\ell'}\rvert} \sum_{\ell'<\ell,\tau \in \mE_{\ell'}} \left (\bm \rho-\bm c_{\tau,\tilde i_\tau^{{\color{magenta} v}}(\bm \lambda)}\right ),&& \tilde i_\tau^{{\color{magenta} v}}(\bm \lambda)= \argmax_{i \in \{0\} \cup [K]} \left( {\color{magenta}v_{\tau,i}} - \bm \lambda^\trans \bm{c}_{\tau,i} \right); \\
\tilde{\bm g}_\ell^{{\color{magenta}\ast}}(\bm \lambda)&=\lvert \mE_\ell\rvert\cdot \E_{\bm v_\ast\sim \mV,\bm c_\ast\sim \mC}\left [\bm \rho-\bm c_{\ast,\tilde i^{{\color{magenta} \ast}}(\bm \lambda)}\right ],&& \tilde i^{{\color{magenta} \ast}}(\bm \lambda)=\argmax_{i\in \{0\} \cup [K]}\left ({\color{magenta}v_{\ast,i}}-\bm \lambda^\trans \bm c_{\ast,i}\right ). \label{eq:formal loss distributions}
\end{aligned}\end{equation}

We now upper bound the $\texttt{Stability}_\ell=\E\left [\left \lVert \sum_{t\in \mE_\ell} (\bm \rho-\bm c_{t,i_t})-\tilde{\bm g}_\ell^{u}(\bm \lambda_\ell)\right \rVert_2^2\right ]$ term in \Cref{eq:online learning regret decomposuition in O-FTRL-FP} as
\begin{equation}\begin{aligned}
&\le 2\underbrace{\E\left [\left \lVert \sum_{t\in \mE_\ell} (\bm \rho-\bm c_{t,i_t})-\tilde{\bm g}_\ell^{\ast}(\bm \lambda_\ell)\right \rVert_2^2\right ]}_{\text{Primal Allocations}}+2\underbrace{\E\left [\lVert \tilde{\bm g}_\ell^{\ast}(\bm \lambda_\ell)-\tilde{\bm g}_\ell^{v}(\bm \lambda_\ell)\rVert_2^2\right ]}_{\text{Empirical Estimation}}+2\underbrace{\E\left [\lVert \tilde{\bm g}_\ell^{v}(\bm \lambda_\ell)-\tilde{\bm g}_\ell^{u}(\bm \lambda_\ell)\rVert_2^2\right ]}_{\text{Untruthful Reports}}.\label{eq:stability decomposition}
\end{aligned}\end{equation}

For the special case where $\ell=1$ (no historical data), we trivially bound $\texttt{Stability}_1\le d \lvert \mE_1\rvert^2$. Otherwise, in \Cref{lem:stability term 1,lem:stability term 2,lem:stability term 3}, we control these terms one by one: In \Cref{lem:stability term 1}, we prove
\begin{equation}\begin{aligned}
&\E\left [\left \lVert \sum_{t\in \mE_\ell} (\bm \rho-\bm c_{t,i_t})-\tilde{\bm g}_\ell^{\ast}(\bm \lambda_\ell)\right \rVert_2^2\right ]\le (d+3)\lvert \mE_\ell\rvert+N_\ell^2,\label{eq:good event of stability terms 1}
\end{aligned}\end{equation}
where $N_\ell$ is defined in \Cref{eq:N and M}.
In \Cref{lem:stability term 2,lem:stability term 3}, by covering arguments, we drive
\begin{equation}\begin{aligned}
&\quad \sup_{\bm \lambda\in \bm \Lambda} \left (\lVert \tilde{\bm g}_\ell^{\ast}(\bm \lambda)-\tilde{\bm g}_\ell^{v}(\bm \lambda)\rVert_2^2 + \lVert \tilde{\bm g}_\ell^{v}(\bm \lambda)-\tilde{\bm g}_\ell^{u}(\bm \lambda)\rVert_2^2\right ) \\
&\le 7d \lvert \mE_\ell\rvert^2 \cdot (K^2 \epsilon \epsilon_c)^2 + 10 d \lvert \mE_\ell\rvert^2 \cdot \frac{\log \frac{1}{\delta}+\sum_{j=1}^d \log\frac{d}{\rho_j \epsilon}}{\sum_{\ell'=1}^{\ell-1} \lvert \mE_{\ell'}\rvert}+\frac{d \lvert \mE_\ell\rvert^2}{(\sum_{\ell'=1}^{\ell-1} \lvert \mE_{\ell'}\rvert)^2} M_\ell^2,\label{eq:good event of stability terms 2 and 3}
\end{aligned}\end{equation}
where $M_\ell$ is also defined in \Cref{eq:N and M}.

\paragraph{Step 3: Plug \texttt{Stability} back to O-FTRL-FP guarantee.}
We plug \Cref{eq:good event of stability terms 1,eq:good event of stability terms 2 and 3} into the online learning regret bound derived in \Cref{eq:online learning regret decomposuition in O-FTRL-FP}. For every $\ell\in [L]$, \Cref{eq:good event of stability terms 2 and 3} happen with probability $1-6\delta$; in case it does not hold, we use the trivial bound that $\lVert \tilde{\bm g}_\ell^{\ast}(\bm \lambda)-\tilde{\bm g}_\ell^{v}(\bm \lambda)\rVert_2^2 + \lVert \tilde{\bm g}_\ell^{v}(\bm \lambda)-\tilde{\bm g}_\ell^{u}(\bm \lambda)\rVert_2^2\le 2d\lvert \mE_\ell\rvert^2$. Therefore, \Cref{eq:online learning regret decomposuition in O-FTRL-FP} translates to (where we also used the definitions of $\epsilon_\ell$ and $L_\ell$ from \Cref{eq:gradient stability in DualIneff})
\begin{equation*}\begin{aligned}
\mR_L^{\text{OL}} &\le \sup_{\bm \lambda^\ast\in \bm \Lambda} \frac{\Psi(\bm \lambda^\ast)}{\eta_L} + d \eta_1 \lvert \mE_1\rvert^2+ 2 \sum_{\ell=2}^L \eta_\ell \Bigg ((d+3) \lvert \mE_\ell\rvert+N_\ell^2 + 7d \lvert \mE_\ell\rvert^2 \cdot (K^2 \epsilon \epsilon_c)^2 \\
&\qquad \qquad \qquad \qquad \qquad + 10 d \lvert \mE_\ell\rvert^2 \cdot \frac{\log \frac{1}{\delta}+\sum_{j=1}^d \log\frac{d}{\rho_j \epsilon}}{\sum_{\ell'=1}^{\ell-1} \lvert \mE_{\ell'}\rvert} +\frac{d \lvert \mE_\ell\rvert^2}{(\sum_{\ell'=1}^{\ell-1} \lvert \mE_{\ell'}\rvert)^2} M_\ell^2 + 6\delta \cdot 2d\lvert \mE_\ell\rvert^2\Bigg )\\
&\quad +\sum_{\ell=2}^L \sqrt d \lvert \mE_\ell\rvert \eta_\ell \frac{4 \lvert \mE_{\ell}\rvert \sqrt d}{\sum_{\ell'<\ell} \lvert \mE_{\ell'}\rvert}\left (d+\log \frac{1}{\delta_\ell}+\sum_{j=1}^d \log \frac{\sqrt d}{\rho_j \epsilon_\ell}\right )+\left (\sum_{\ell=1}^L \delta_\ell\right ) 2T \lVert \bm \rho^{-1}\rVert_1,
\end{aligned}\end{equation*}
where $\{\delta_\ell\}_{\ell\in [L]}$, $\epsilon$, and $\delta$ are some parameters that we can tune.
Letting $\epsilon=\frac{1}{\sqrt T K^2 \epsilon_c}$, $\delta=\frac{1}{6dT}$, and $\delta_\ell=\frac{1}{TL}$ (we did not make every effort to minimize $\mR_L^{\text{OL}}$; this tuning only focuses on the polynomial dependencies on $T$ and $L$) and simplifying using the fact that $\sum_{\ell=1}^L \lvert \mE_\ell\rvert\le T$, we get
\begin{equation}\begin{aligned}
\mR_L^{\text{OL}}
&\le \sup_{\bm \lambda^\ast\in \bm \Lambda} \frac{\Psi(\bm \lambda^\ast)}{\eta_L}+d \eta_1\lvert \mE_1\rvert^2+\sum_{\ell=2}^L \eta_\ell \frac{\lvert \mE_\ell\rvert^2}{\sum_{\ell'<\ell} \lvert \mE_{\ell'}\rvert} \left (4d^2 + 24d \log(dTL) + 24d \sum_{j=1}^d \log \frac{dK^2 \epsilon_c T}{\rho_j}\right ) \\
&\quad +\sum_{\ell=2}^L \eta_\ell \lvert \mE_\ell\rvert (16d+10)+\sum_{\ell=2}^L \eta_\ell\left (2N_\ell^2 + \frac{2d \lvert \mE_\ell\rvert^2 M_\ell^2}{(\sum_{\ell'=1}^{\ell-1} \lvert \mE_{\ell'}\rvert)^2}\right )+2\lVert \bm \rho^{-1}\rVert_1. \label{eq:formal DualUpd in O-FTRL-FP}
\end{aligned}\end{equation}

\paragraph{Step 4: Plug in $\Psi$ and $\{\eta_\ell\}_{\ell\in [L]}$.}
With $\Psi(\bm \lambda)=\frac 12 \lVert \bm \lambda\rVert_2^2$ and $\eta_\ell=\frac{\lVert \bm \rho^{-1}\rVert_2}{2d}(\sum_{\ell'=1}^\ell \lvert \mE_{\ell'}\rvert)^{-1/2}$, under the assumption that $\lvert \mE_1\rvert\le \lvert \mE_2\rvert\le \cdots \le \lvert \mE_L\rvert$ (i.e., epoch lengths are non-decreasing), \Cref{eq:formal DualUpd in O-FTRL-FP} is simplified to
\begin{equation*}\begin{aligned}
\mR_L^{\text{OL}}&\le \lVert \bm \rho^{-1}\rVert_2 d \left (\sum_{\ell=1}^L \lvert \mE_\ell\rvert\right )^{1/2} + d \lvert \mE_1\rvert^{1.5} +\sum_{\ell=2}^L \frac{2N_\ell^2+2d \lvert \mE_\ell\rvert^2 M_\ell^2/(\sum_{\ell'=1}^{\ell-1} \lvert \mE_{\ell'}\rvert)^2}{2d (\sum_{\ell'=1}^\ell \lvert \mE_{\ell'}\rvert)^{1/2}}\lVert \bm \rho^{-1}\rVert_2 + 2\lVert \bm \rho^{-1}\rVert_1 \\
&\quad +\sum_{\ell=2}^L \frac{\lvert \mE_\ell\rvert}{(\sum_{\ell'=1}^\ell \lvert \mE_{\ell'}\rvert)^{1/2}} \left (2d + 12 \log(dTL) + 12 \sum_{j=1}^d \log \frac{dK^2 \epsilon_c T}{T\rho_j}+18\right ) \lVert \bm \rho^{-1}\rVert_2.
\end{aligned}\end{equation*}
Using the folklore summation lemma that $\sum_{t=1}^T \frac{x_t}{\sqrt{\sum_{s=1}^t x_s}}\le 2 \sqrt{\sum_{t=1}^T x_t}$ for all $x_1,x_2,\ldots,x_T\in \mathbb R_{\ge 0}$ \citep[Lemma 4]{duchi2011adaptive}, the last term is bounded by $2 \sqrt{\sum_{\ell=1}^L \lvert \mE_\ell\rvert} \big (4d^2 + 24d \log(dTL) + 24d \sum_{j=1}^d \log \frac{dK^2 \epsilon_c T}{T\rho_j}+16d+10\big )$.
To translate the online learning regret $\mR_L^{\text{OL}}$ to the $\E[\dualreg]$ guarantee, we use \Cref{lem:DualUpd to regret}:
\begin{equation*}\begin{aligned}
\E[\dualreg]&\le 2 \sqrt{T} \left (3d + 12 \log(dTL) + 12 \sum_{j=1}^d \log \frac{dK^2 \epsilon_c T}{T\rho_j}+18\right )\lVert \bm \rho^{-1}\rVert_1+d \lvert \mE_1\rvert^{1.5} \\
&\quad +\sum_{\ell=2}^L \frac{N_\ell^2/d+\lvert \mE_\ell\rvert^2 M_\ell^2/(\sum_{\ell'=1}^{\ell-1} \lvert \mE_{\ell'}\rvert)^2}{(\sum_{\ell'=1}^\ell \lvert \mE_{\ell'}\rvert)^{1/2}} \lVert \bm \rho^{-1}\rVert_1 + \left (3+\max_{\ell\in [L]} \lvert \mE_\ell\rvert+\sum_{\ell=1}^L (N_\ell+3)\right )\lVert \bm \rho^{-1}\rVert_1,
\end{aligned}\end{equation*}
where we used the facts that $\sum_{\ell=1}^L \lvert \mE_\ell\rvert=T$ and $\lVert \bm \rho^{-1}\rVert_2\le \lVert \bm \rho^{-1}\rVert_1$. This finishes the proof.
\end{proof}

\begin{lemma}[Gradient-Stability of Predictions]\label{lem:approximate continuity of predictions}
For any $2\le \ell\le L$, $\epsilon>0$, and $\delta\in (0,1)$, w.p. $1-\delta$,
\begin{equation*}\begin{aligned}
&\lVert \tilde{\bm g}_\ell(\bm \lambda_1)-\tilde{\bm g}_\ell(\bm \lambda_2)\rVert_2 \le 4\lvert \mE_{\ell}\rvert K^2 \epsilon \epsilon_c d + \frac{4\lvert \mE_\ell\rvert (\log \frac 1\delta+\sum_{j=1}^d \log \frac{\sqrt d}{\rho_j \epsilon})}{\sum_{\ell'<\ell} \lvert \mE_{\ell'}\rvert}\sqrt d
\end{aligned}\end{equation*}
holds for any $\bm \lambda_1,\bm \lambda_2\in \bm \Lambda$ s.t. $\lVert \bm \lambda_1-\bm \lambda_2\rVert_2\le \epsilon$, where $\tilde{\bm g}_\ell(\bm \lambda)$ is defined in \Cref{eq:lambda O-FTRL-FP} of \Cref{alg:lambda O-FTRL-FP}.
\end{lemma}
\begin{proof}
Take an $\frac{\epsilon}{2}$-net of $\bm \Lambda$ defined in \Cref{lem:covering argument}, which is presented immediately after this proof. Slightly abusing the notations, we denote this $\frac{\epsilon}{2}$-net by $\bm \Lambda_\epsilon$ (note that $\frac \epsilon 2$-nets are also $\epsilon$-nets since $\frac{\epsilon}{2}<\epsilon$). We have $\lvert \bm \Lambda_\epsilon\rvert\le \prod_{j=1}^d (2d/\rho_j \epsilon)$.
Fix a $\bm \lambda_\epsilon\in \bm \Lambda_\epsilon$ and consider the stochastic process $(X_\tau)_{\tau\ge 1}$ adapted to $(\mF_\tau)_{\tau\ge 0}$:
\begin{equation}\label{footnote:uniform smoothness}
X_\tau:=\1[\exists \bm \lambda\in \mathcal B_{\epsilon}(\bm \lambda_\epsilon)\text{ s.t. }\tilde i_\tau(\bm \lambda)\ne \tilde i_\tau(\bm \lambda_\epsilon)],\quad \forall \ell'<\ell,\tau\in \mE_{\ell'},
\end{equation}
where $\mathcal B_\epsilon(\bm \lambda_\epsilon):=\{\bm \lambda\in \bm \Lambda\mid \lVert \bm \lambda-\bm \lambda_\epsilon\rVert_1\le \epsilon\}$ and $(\mF_\tau)_{\tau\ge 0}$ is defined as $\mF_\tau=\sigma(\bigcup_{i\in \{0\}\cup [K]} \mH_{\tau+1,i})$, i.e., the smallest $\sigma$-algebra containing all revealed history up to the end of round $\tau$.
(We remark that ``$\exists \bm \lambda\in \mathcal B_\epsilon(\bm \lambda_\epsilon)$'' clause is pivotal because we cannot afford to take a Union Bound over all $\bm \lambda \in \mathcal B_\epsilon(\bm \lambda_\epsilon)$. The definition of $X_\tau$ in \Cref{footnote:uniform smoothness} ensures the similarity holds uniformly in the neighborhood of $\bm \lambda\in \mathcal B_\epsilon(\bm \lambda_\epsilon)$.)
As $X_\tau$ is $\mF_\tau$-measurable, Multiplicative Azuma-Hoeffding \citep[Lemma 10]{koufogiannakis2014nearly} gives
\begin{equation*}
\Pr\left \{\frac 12\sum_{\ell'<\ell,\tau\in \mE_{\ell'}} X_\tau\ge \sum_{\ell'<\ell,\tau\in \mE_{\ell'}} \E[X_\tau\mid \mF_{\tau-1}] +A \right \}\le \exp\left (-\frac A2\right ),\quad \forall A>0.
\end{equation*}

For any $\ell'<\ell$ and $\tau\in \mE_{\ell'}$, the conditional distribution $\bm u_\tau\mid \bigcup_{i\in \{0\}\cup [K]} \mH_{\tau,i}$ is $\mF_{\tau-1}$-measurable.
And furthermore, $\bm \lambda_\epsilon\in \bm \Lambda_\epsilon$ is fixed before the game and thus also $\mF_{\tau-1}$-measurable. \Cref{lem:covering argument} gives
\begin{equation*}\begin{aligned}
\E[X_\tau\mid \mF_{\tau-1}]&=\Pr\left \{\exists \bm \lambda\in \mathcal B_\epsilon(\bm \lambda_\epsilon)\text{ s.t. }\argmax_{i \in \{0\} \cup [K]} \left( u_{\tau,i} - \bm \lambda^\trans \bm{c}_{\tau,i} \right)\ne \argmax_{i \in \{0\} \cup [K]} \left( u_{\tau,i} - \bm \lambda_\epsilon^\trans \bm{c}_{\tau,i} \right)\right \}\le K^2 \epsilon \epsilon_c,
\end{aligned}\end{equation*}
for any $\ell'<\ell$ and $\tau\in \mE_{\ell'}$; where the $\Pr$ on the RHS is taken w.r.t. the randomness of generating $\bm u_\tau$ according to the conditional joint distribution $\bm u_\tau\mid \bigcup_{i\in \{0\}\cup [K]} \mH_{\tau,i}$ and the independent sampling of $\bm c_\tau\sim \mC$.

Hence, for any failure probability $\delta>0$ that we determine later, with probability $1-\delta$,
\begin{equation*}
\sum_{\ell'<\ell,\tau\in \mE_{\ell'}} \1[\exists \bm \lambda\in \mathcal B_\epsilon(\bm \lambda_\epsilon)\text{ s.t. }\tilde i_\tau(\bm \lambda)\ne \tilde i_\tau(\bm \lambda_\epsilon)]\le 2 \sum_{\ell'<\ell} \lvert \mE_{\ell'}\rvert\cdot K^2\epsilon \epsilon_c+4\log \frac{1}{\delta}.
\end{equation*}

Taking Union Bound over $\bm \lambda_\epsilon\in \bm \Lambda_\epsilon$, w.p. $1-\delta \prod_{j=1}^d(2d/\rho_j\epsilon)$, this event holds for all $\bm \lambda_\epsilon\in \bm \Lambda$ at the same time. For any $\bm \lambda_1,\bm \lambda_2\in \bm \Lambda$ s.t. $\lVert \bm \lambda_1-\bm \lambda_2\rVert_2\le \frac{\epsilon}{2\sqrt d}$ (thus $\lVert \bm \lambda_1-\bm \lambda_2\rVert_1\le \frac \epsilon 2$), take $\bm \lambda_\epsilon\in \bm \Lambda_\epsilon$ such that $\bm \lambda_1\in \mathcal B_{\epsilon/2}(\bm \lambda_\epsilon)$ (recall that $\bm \Lambda_\epsilon$ is in fact a $\frac{\epsilon}{2}$-net). We therefore have $\lVert \bm \lambda_2-\bm \lambda_\epsilon\rVert_1\le \epsilon$, which means $\bm \lambda_1,\bm \lambda_2\in \mathcal B_\epsilon(\bm \lambda_\epsilon)$. Thus
\begin{equation*}\begin{aligned}
&\left \lVert \sum_{\ell'<\ell,\tau\in \mE_{\ell'}}(\bm \rho-\bm c_{\tau,\tilde i_\tau(\bm \lambda_1)})-\sum_{\ell'<\ell,\tau\in \mE_{\ell'}}(\bm \rho-\bm c_{\tau,\tilde i_\tau(\bm \lambda_2)}) \right \rVert_2\le \left (2 \sum_{\ell'<\ell} \lvert \mE_{\ell'}\rvert\cdot K^2\epsilon \epsilon_c+4\log \frac{1}{\delta}\right )\sqrt d
\end{aligned}\end{equation*}
holds for all $\bm \lambda_1,\bm \lambda_2\in \bm \Lambda$ s.t. $\lVert \bm \lambda_1-\bm \lambda_2\rVert_2\le \frac{\epsilon}{2\sqrt d}$, with probability $1-\delta \prod_{j=1}^d(2d/\rho_j\epsilon)$. This ensures that
\begin{equation*}\begin{aligned}
\lVert \tilde{\bm g}_\ell(\bm \lambda_1)-\tilde{\bm g}_\ell(\bm \lambda_2)\rVert_2 \le \frac{\lvert \mE_\ell\rvert}{\sum_{\ell'<\ell} \lvert \mE_{\ell'}\rvert} \left (2 \sum_{\ell'<\ell} \lvert \mE_{\ell'}\rvert\cdot K^2\epsilon \epsilon_c+4\log \frac{1}{\delta}\right )\sqrt d=2\lvert \mE_{\ell}\rvert K^2 \epsilon \epsilon_c \sqrt d + \frac{4\lvert \mE_\ell\rvert \log \frac 1\delta}{\sum_{\ell'<\ell} \lvert \mE_{\ell'}\rvert}\sqrt d,
\end{aligned}\end{equation*}
holds for all $\bm \lambda_1,\bm \lambda_2\in \bm \Lambda$ s.t. $\lVert \bm \lambda_1-\bm \lambda_2\rVert_2\le \frac{\epsilon}{2\sqrt d}$, with probability $1-\delta \prod_{j=1}^d(2d/\rho_j\epsilon)$. Substituting $\epsilon'=\frac{\epsilon}{2\sqrt d}$ and $\delta'=\delta \prod_{j=1}^d(2d/\rho_j\epsilon)$ gives the conclusion.
\end{proof}

\begin{lemma}[Covering Dual Decision Set]\label{lem:covering argument}
For a fixed $\epsilon>0$, there exists $\bm \Lambda_\epsilon\subseteq \bm \Lambda=\bigotimes_{j=1}^d [0,\rho_j^{-1}]$ with size no more than $\prod_{j=1}^d (d/\rho_j \epsilon)$, such that for all $\bm \lambda\in \bm \Lambda$, there exists some $\bm \lambda_\epsilon\in \bm \Lambda_\epsilon$ such that $\lVert \bm \lambda-\bm \lambda_\epsilon\rVert_1\le \epsilon$.
Let $\mathcal B_\epsilon(\bm \lambda_\epsilon)=\{\bm \lambda\in \bm \Lambda\mid \lVert \bm \lambda-\bm \lambda_\epsilon\rVert_1\le \epsilon\}$ be the neighborhood covered by $\bm \lambda_\epsilon\in \bm \Lambda_\epsilon$. Under \Cref{assump:smooth consumptions}, for all $\bm \lambda_\epsilon\in \bm \Lambda_\epsilon$ and any distribution $\mathcal U\in \triangle([0,1]^K)$, we have
\begin{equation*}
\Pr_{\bm u\sim \mathcal U,\bm c\sim \mC}\left \{\exists \bm \lambda\in \mathcal B_\epsilon(\bm \lambda_\epsilon)\text{ s.t. }\argmax_{i\in \{0\}\cup [K]} (u_i-\bm \lambda^\trans \bm c_i)\ne \argmax_{i\in [K]} (u_i-\bm \lambda_\epsilon^\trans \bm c_i)\right \}\le K^2(\epsilon\cdot \epsilon_c).
\end{equation*}
\end{lemma}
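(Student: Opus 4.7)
For the first claim (existence of a small $\epsilon$-net), the plan is to build a standard product grid on $\bm{\Lambda} = \bigotimes_{j=1}^{d}[0,\rho_j^{-1}]$. Specifically, partition each interval $[0,\rho_j^{-1}]$ into $n_j := \lceil d/(\rho_j \epsilon)\rceil$ equal sub-intervals of width $\rho_j^{-1}/n_j \le \epsilon/d$ and collect the midpoints; let $\bm{\Lambda}_\epsilon$ be their Cartesian product. Then $|\bm{\Lambda}_\epsilon| \le \prod_j d/(\rho_j \epsilon)$, matching the claim, and for any $\bm{\lambda}\in\bm{\Lambda}$ the nearest grid point has per-coordinate error at most $\epsilon/(2d)$, giving total $\ell_1$ error $\le d\cdot \epsilon/(2d) = \epsilon/2 \le \epsilon$.

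For the second claim, I would fix an arbitrary $\bm{\lambda}_\epsilon \in \bm{\Lambda}_\epsilon$ and reduce the argmax-change event to a pairwise closeness event. The key observation is Hölder: for any $\bm{\lambda}\in \mathcal{B}_\epsilon(\bm{\lambda}_\epsilon)$ and any $i$, since $\bm{c}_i \in [0,1]^d$,
\begin{equation*}
\bigl|(\bm{\lambda}-\bm{\lambda}_\epsilon)^\trans \bm{c}_i\bigr| \le \lVert \bm{\lambda}-\bm{\lambda}_\epsilon\rVert_1 \cdot \lVert \bm{c}_i\rVert_\infty \le \epsilon.
\end{equation*}
Hence each score $u_i - \bm{\lambda}^\trans \bm{c}_i$ shifts by at most $\epsilon$ when moving from $\bm{\lambda}_\epsilon$ to any $\bm{\lambda}$ in the ball, so the argmax over $\{0\}\cup[K]$ can change only if some pair $i \ne j$ has scores at $\bm{\lambda}_\epsilon$ within $2\epsilon$ of each other. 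A union bound over the $\binom{K+1}{2}$ ordered-index pairs reduces everything to bounding
\begin{equation*}
\Pr\bigl\{ (u_i - \bm{\lambda}_\epsilon^\trans \bm{c}_i) - (u_j - \bm{\lambda}_\epsilon^\trans \bm{c}_j) \in [-2\epsilon, 2\epsilon]\bigr\}
\end{equation*}
for each pair.

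For each such pair, I would invoke \Cref{assump:smooth costs}: since $\bm{\lambda}_\epsilon \in \bm{\Lambda}$, the density of $\bm{\lambda}_\epsilon^\trans \bm{c}$ under $\bm{c}\sim \mathcal{C}_i$ is uniformly bounded by $\epsilon_c$. Conditioning on $\bm{u}$ and on $\bm{c}_{j'}$ for $j' \ne i$ (all independent of $\bm{c}_i$), the conditional probability that $\bm{\lambda}_\epsilon^\trans \bm{c}_i$ falls in an interval of length $4\epsilon$ is at most $4\epsilon \cdot \epsilon_c$. When $i=0$ (or $j=0$), $\bm{c}_0 \equiv \bm{0}$ is degenerate, and I would instead invoke smoothness on the other, non-zero index in the pair. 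Summing gives a total bound of order $K^2 \epsilon \epsilon_c$.

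The only delicate point is the conditioning argument: reports $\bm{u}\sim \mathcal{U}$ may be arbitrarily correlated with past history and with $\bm{c}$, but since $\mathcal{U}$ is an arbitrary distribution over $[0,1]^K$ and the cost vector $\bm{c}$ is drawn from the product $\bigotimes_i \mathcal{C}_i$ independently across $i$, I can always isolate one index $i$ and condition on $\bm{u}$ and $\{\bm{c}_{j'}\}_{j'\ne i}$ to invoke smoothness on the residual randomness $\bm{c}_i$. This is routine once laid out carefully; no deeper obstacle is expected, and minor constant slack is absorbed into the stated $K^2(\epsilon\cdot \epsilon_c)$ bound.
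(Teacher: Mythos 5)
Your proposal is essentially the paper's own proof: an $\ell_1$ product grid for the $\epsilon$-net, reduction of the argmax-change event to a pairwise score-closeness event, a union bound over index pairs, and invoking \Cref{assump:smooth costs} on the remaining cost randomness after conditioning (with the $i=0$ or $j=0$ degenerate case handled by applying smoothness to the other index). The only place you lose compared to the paper is a constant factor: the paper bounds the \emph{difference} $(\bm\lambda-\bm\lambda_\epsilon)^\trans(\bm c_i-\bm c_j)$ directly, which lies in $[-\epsilon,\epsilon]$ since $\lVert\bm c_i-\bm c_j\rVert_\infty\le 1$, and notes that the one-sided ordering constraint gives a window of length $\epsilon$; you bound each score-shift by $\epsilon$ separately and then use a two-sided window $[-2\epsilon,2\epsilon]$, giving $4\epsilon\epsilon_c$ per pair instead of $\epsilon\epsilon_c$. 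Such constant slack is harmless here (the paper is itself slightly loose in counting pairs), and the core argument is correct and matches.
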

\begin{proof}
The first claim follows from standard covering arguments over the bounded set $\bm \Lambda=\bigotimes_{j=1}^d [0,\rho_j^{-1}]$ \citep[\S5.1]{wainwright2019high}. For the second part, we make use of \Cref{assump:smooth consumptions}: For any fixed $i\ne j\in \{0\}\cup [K]$,
\begin{equation*}\begin{aligned}
&\quad \Pr_{\bm u\sim \mathcal U,\bm c\sim \mC}\left \{\exists \bm \lambda\in \mathcal B_\epsilon(\bm \lambda_\epsilon)\text{ s.t. }(u_i-\bm \lambda_\epsilon^\trans \bm c_i>u_j-\bm \lambda_\epsilon^\trans \bm c_j)\wedge (u_i-\bm \lambda^\trans \bm c_i<u_j-\bm \lambda^\trans \bm c_j)\right \}\\
&\overset{(a)}{\le} \Pr_{\bm u\sim \mathcal U,\bm c\sim \mC}\left \{0\le (u_i-\bm \lambda_\epsilon^\trans \bm c_i)-(u_j-\bm \lambda_\epsilon^\trans \bm c_j)\le \epsilon\right \}\overset{(b)}{\le} \epsilon\cdot \epsilon_c,
\end{aligned}\end{equation*}
where (a) uses $\lvert \langle \bm \lambda-\bm \lambda_\epsilon,\bm c_i-\bm c_j\rangle\rvert\le \lVert \bm \lambda-\bm \lambda_\epsilon\rVert_1\cdot \lVert \bm c_i-\bm c_j\rVert_\infty\le \epsilon$ for all $\bm \lambda\in \mathcal B_\epsilon(\bm \lambda_\epsilon)$, and (b) uses \Cref{assump:smooth consumptions} and the independence of $\bm c_i$ and $\bm c_j$: If either $i$ or $j$ equals $0$, then the inequality directly holds from applying \Cref{assump:smooth consumptions} to the other consumption distribution. Otherwise, i.e., suppose that both $i$ and $j$ are non-zero. For any two independent real-valued random variables $X\perp Y$ have their PDFs $f_X$ and $f_Y$ uniformly bounded by $\epsilon_c$, their difference $X-Y$ is also a random variable with PDF uniformly bounded by $\epsilon_c$:
\begin{equation*}
f_{X-Y}(z)=\int_{-\infty}^\infty f_X(z+y) f_Y(y)dy\le \epsilon_c\int_{-\infty}^\infty f_Y(y)dy\le \epsilon_c,\quad \forall z\in \mathbb R,
\end{equation*}
which means (b) is still true.
Applying Union Bound to the $K^2$ pairs of $(i,j)\in (\{0\}\cup [K])^2$ such that $i\ne j$ gives the second conclusion.
\end{proof}

\begin{lemma}[Primal Allocations]\label{lem:stability term 1}
For any epoch $\ell\in [L]$ where $\bm \lambda_\ell$ is determined by \Cref{alg:lambda O-FTRL-FP},
\begin{equation*}
\E\left [\left \lVert \sum_{t\in \mE_\ell} (\bm \rho-\bm c_{t,i_t})-\tilde{\bm g}_\ell^{\ast}(\bm \lambda_\ell)\right \rVert_2^2\right ]\le (d+3)\lvert \mE_\ell\rvert+N_\ell^2,
\end{equation*}
where $\tilde{\bm g}_\ell^\ast$ is defined in \Cref{eq:formal loss distributions} and $N_\ell$ is defined in \Cref{eq:N and M}.
\end{lemma}
\begin{proof}
We first control $\E [ \lVert \sum_{t\in \mE_\ell} (\bm \rho-\bm c_{t,\tilde i_t^\ast})-\tilde{\bm g}_\ell^{\ast}(\bm \lambda_\ell) \rVert_2^2 ]$, i.e., the squared $\ell_2$-error of $\lvert \mE_\ell\rvert$ random vectors from their mean, which is of order $\lvert \mE_\ell\rvert$ as they are i.i.d. We then relate it to $\E [ \lVert \sum_{t\in \mE_\ell} (\bm \rho-\bm c_{t,i_t})-\tilde{\bm g}_\ell^{\ast}(\bm \lambda_\ell) \rVert_2^2 ]$ by utilizing the similarity between $\{\tilde i_t^\ast\}_{t\in \mE_\ell}$ and $\{i_t\}_{t\in \mE_\ell}$ (\Cref{thm:InterEpoch guarantee}).

\paragraph{Step 1: Control $\E [ \lVert \sum_{t\in \mE_\ell} (\bm \rho-\bm c_{t,\tilde i_t^\ast})-\tilde{\bm g}_\ell^{\ast}(\bm \lambda_\ell) \rVert_2^2 ]$.}
Recall the definition of $\{\tilde i_t^\ast\}_{t\in [T]}=\argmax_{i\in \{0\} \cup [K]} (v_{t,i}-\bm \lambda_\ell^\trans \bm c_{t,i})$.
Since $\bm v_t$ and $\bm c_t$ are i.i.d. samples from $\mV$ and $\mC$, $\Law(\tilde i_t^\ast)=\Law(\tilde i^{\ast}(\bm \lambda_\ell))$ for all $t\in \mE_\ell$. Therefore,
\begin{equation*}
\E\left [\bm \rho-\bm c_{t,{\tilde i_t^\ast}}\right ]=\E_{\bm v_\ast\sim \mV,\bm c_\ast\sim \mC}\left [\bm \rho-\bm c_{\ast,\tilde i^{\ast}(\bm \lambda_\ell)}\right ]=\frac{1}{\lvert \mE_\ell\rvert}\tilde{\bm g}_\ell^\ast(\bm \lambda_\ell),\quad \forall t\in \mE_\ell,
\end{equation*}
where the last equation follows by definition of $\tilde{\bm g}_\ell^\ast(\bm \lambda_\ell)$.
Since for a $d$-dimensional random vector $X$, $\E[\lVert X-\E[X]\rVert_2^2]=\E[\sum_{i=1}^d (X_i-\E[X_i])^2]=\sum_{i=1}^d \text{Var}(X_i)=\text{Tr}(\text{Cov}(X))$ ($\text{Tr}$ is the trace and $\text{Cov}$ is covariance),
\begin{equation*}
\E\left [\left \lVert \sum_{t\in \mE_\ell} (\bm \rho-\bm c_{t,\tilde i_t^\ast})-\tilde{\bm g}_\ell^{\ast}(\bm \lambda_\ell)\right \rVert_2^2\right ]=\text{Tr}\left ( \lvert \mE_\ell\rvert \cdot \operatornamewithlimits{Cov}_{\bm v_\ast\sim \mV,\bm c_\ast\sim \mC}\left (\bm \rho-\bm c_{\ast,\tilde i^{\ast}(\bm \lambda_\ell)}\right )\right )\le \lvert \mE_\ell\rvert d,
\end{equation*}
using the fact that $\tilde i_t^\ast$'s are independent from each other and that $\bm \rho$ and $\bm c_{\ast,i}$ are all within $[0,1]^d$.

\paragraph{Step 2: Relate $\sum_{t\in \mE_\ell} (\bm \rho-\bm c_{t,\tilde i_t^\ast})$ to $\sum_{t\in \mE_\ell} (\bm \rho-\bm c_{t,i_t})$.}
Recall \Cref{eq:InterEpoch takeaway} from \Cref{thm:InterEpoch guarantee}:
\begin{equation}\label{eq:InterEpoch takeaway used in Stability Term 1}
\Pr\left \{\sum_{t\in \mE_\ell} \1[i_t\ne \tilde i_t^\ast]>N_\ell\right \}\le \frac{3}{\lvert \mE_\ell\rvert},
\end{equation}
where $N_\ell$ is defined in \Cref{eq:N and M}.
As $(\bm  \rho-\bm c_{t,i})\in [-1, 1]^d$, we have
\begin{equation*}\begin{aligned}
\E\left [\left \lVert \sum_{t\in \mE_\ell} (\bm \rho-\bm c_{t,i_t})-\tilde{\bm g}_\ell^{\ast}(\bm \lambda_\ell)\right \rVert_2^2\right ]&\le \E\left [\left \lVert \sum_{t\in \mE_\ell} (\bm \rho-\bm c_{t,\tilde i_t^\ast})-\tilde{\bm g}_\ell^{\ast}(\bm \lambda_\ell)\right \rVert_2^2\right ]+\E\left [d \left (\sum_{t\in \mE_\ell}\1[i_t\ne \tilde i_t^\ast]\right )^2\right ]\\
&\le \lvert \mE_\ell\rvert d+N_\ell^2+3\frac{\lvert \mE_\ell\rvert^2}{\lvert \mE_\ell\rvert},
\end{aligned}\end{equation*}
where the last term considers the failure probability of \Cref{eq:InterEpoch takeaway used in Stability Term 1}, in which case we use the trivial bound $(\sum_{t\in \mE_\ell}\1[i_t\ne \tilde i_t^\ast])^2\le \lvert \mE_\ell\rvert^2$.
Rearranging gives the desired conclusion.
\end{proof}

\begin{lemma}[Empirical Estimation]\label{lem:stability term 2}
For any $\ell\in [L]$, $\epsilon>0$, and $\delta\in (0,1)$, with probability $1-2\delta$,
\begin{equation*}\begin{aligned}
\lVert \tilde{\bm g}_\ell^{\ast}(\bm \lambda)-\tilde{\bm g}_\ell^{v}(\bm \lambda)\rVert_2^2 &\le 3d \lvert \mE_\ell\rvert^2 \cdot (K^2 \epsilon \epsilon_c)^2 + 6 d \lvert \mE_\ell\rvert^2 \cdot \frac{\log \frac{1}{\delta}+\sum_{j=1}^d \log\frac{d}{\rho_j \epsilon}}{\sum_{\ell'=1}^{\ell-1} \lvert \mE_{\ell'}\rvert}\\
&=\Otil\left (\lvert \mE_\ell\rvert^2 \epsilon^2 +\frac{\lvert \mE_\ell\rvert^2}{\sum_{\ell'=1}^{\ell-1} \lvert \mE_{\ell'}\rvert} \left (\log \frac 1\delta + \sum_{j=1}^d \log \frac{d}{\rho_j \epsilon}\right )\right ),\quad \forall \bm \lambda\in \bm \Lambda,
\end{aligned}\end{equation*}
where $\tilde{\bm g}_\ell^\ast$ and $\tilde{\bm g}_\ell^v$ are defined in \Cref{eq:formal loss distributions}.
\end{lemma}
\begin{proof}
In \Cref{lem:stability term 1}, we applied concentration inequalities to the realized duals $\bm \lambda_\ell$. We cannot do the same: the realized values $\bm v_\tau$---used to compute $\tilde{\bm g}_\ell^{v}(\bm \lambda_\ell)$---were drawn in the past, and $\bm \lambda_\ell$ itself is computed based on reports dependent on these values. Thus, ``conditional on $\tilde{\bm g}_\ell^{\ast}(\bm \lambda_\ell) \approx \tilde{\bm g}_\ell^{v}(\bm \lambda_\ell)$'' is invalid because it introduces a dependence on future information from the perspective of those past realizations.

We instead establish uniform concentration over all $\bm \lambda \in \bm \Lambda$ by discretization. Specifically, fix an $\epsilon$-net $\bm \Lambda_\epsilon \subseteq \bm \Lambda$ and we show that for every $\bm \lambda_\epsilon \in \bm \Lambda_\epsilon$, the approximation $\tilde{\bm g}_\ell^{\ast}(\bm \lambda_\epsilon) \approx \tilde{\bm g}_\ell^{v}(\bm \lambda_\epsilon)$ holds with high probability. We then extend this to all $\bm \lambda \in \bm \Lambda$ by considering the stochastic process $\{\1[\exists \bm \lambda\in \mathcal B_\epsilon(\bm \lambda_\epsilon)\text{ s.t. }\tilde i_\tau^v(\bm \lambda)\ne \tilde i_\tau^v(\bm \lambda_\epsilon)]\}_{\ell'<\ell,\tau\in \mE_{\ell'}}$ where $\mathcal B_\epsilon(\bm \lambda_\epsilon)$ is the $\epsilon$-radius ball centered at $\bm \lambda_\epsilon$. The proof goes in three steps.

\paragraph{Step 1: Cover $\bm \Lambda$ with an $\epsilon$-net.}
From \Cref{lem:covering argument}, for any $\epsilon > 0$, there exists an $\epsilon$-net $\bm \Lambda_\epsilon \subseteq \bm \Lambda$ of size $\O((d/\epsilon)^d)$, such that every $\bm \lambda \in \bm \Lambda$ has some $\bm \lambda_\epsilon \in \bm \Lambda_\epsilon$ with $\lVert \bm \lambda - \bm \lambda_\epsilon \rVert_1 \le \epsilon$. We remark that our final guarantee does not have a $d$-exponent, because the dependency on $\lvert \bm \Lambda_\epsilon\rvert$ is logarithmic.

\paragraph{Step 2: Yield concentration for any fixed $\bm \lambda_\epsilon \in \bm \Lambda_\epsilon$.}
Fix $\bm \lambda_\epsilon \in \bm \Lambda_\epsilon$. Consider the vector random variables
\begin{equation*}
\bm x_\tau := (\bm \rho - \bm c_{\tau, \tilde i_\tau^{v}(\bm \lambda_\epsilon)}) - \E_{\bm v_\ast \sim \mV, \bm c_\ast \sim \mC}\left [\bm \rho - \bm c_{\ast, \tilde i^{\ast}(\bm \lambda_\epsilon)}\right ], \quad \forall \ell' < \ell, \tau \in \mE_{\ell'}.
\end{equation*}
Since $\bm x_\tau$ only depends on $\bm v_\tau$ and $\bm c_\tau$---which are i.i.d. samples from $\mV$ and $\mC$---these vectors are i.i.d., zero-mean. We further have $\lVert \bm x_\tau \rVert_2 \le \sqrt{d}$ almost surely, because $\bm \rho-\bm c_\tau\in [-1,1]^d$. Applying the vector Bernstein inequality \citep[Lemma 18]{kohler2017sub} gives:
\begin{equation*}
\Pr\left \{\lVert \tilde{\bm g}_\ell^{\ast}(\bm \lambda_\epsilon)-\tilde{\bm g}_\ell^{v}(\bm \lambda_\epsilon)\rVert_2\ge \lvert \mE_\ell\rvert c\right \}=\Pr\left \{\left \lVert \frac{\sum_{\ell'<\ell,\tau\in \mE_{\ell'}} \bm x_\tau}{\sum_{\ell'<\ell} \lvert \mE_{\ell'}\rvert}  \right \rVert_2\ge  c\right \}\le \exp\left (-\sum_{\ell'<\ell} \lvert \mE_{\ell'}\rvert \cdot \frac{c^2}{8d}+\frac 14\right ).
\end{equation*}
Taking a union bound over all $\bm \lambda_\epsilon \in \bm \Lambda_\epsilon$, we obtain that
\begin{equation*}\begin{aligned}
\Pr\left \{\max_{\bm \lambda_\epsilon\in \bm \Lambda_\epsilon} \lVert \tilde{\bm g}_\ell^{\ast}(\bm \lambda_\epsilon)-\tilde{\bm g}_\ell^{v}(\bm \lambda_\epsilon)\rVert_2^2\ge \lvert \mE_\ell\rvert^2 c^2\right \}\le \prod_{i=1}^d \frac{d}{\rho_j \epsilon} \cdot \exp \left (-2 \sum_{\ell'=1}^{\ell-1} \lvert \mE_{\ell'}\rvert \cdot \frac{c^2}{8d} + \frac 14 \right ),\quad \forall c>0.
\end{aligned}\end{equation*}

Therefore, for the given failure probability $\delta$, with probability at least $1 - \delta$,
\begin{equation}\label{eq:diff of F and Fv on lambda_eps}
\lVert \tilde{\bm g}_\ell^{\ast}(\bm \lambda_\epsilon) - \tilde{\bm g}_\ell^{v}(\bm \lambda_\epsilon) \rVert_2^2
\le \lvert \mE_\ell \rvert^2 \cdot 4d \cdot \frac{\log \frac{1}{\delta} + \sum_{j=1}^d \log \frac{d}{\rho_j \epsilon}}{\sum_{\ell'=1}^{\ell-1} \lvert \mE_{\ell'} \rvert},
\quad \forall \bm \lambda_\epsilon \in \bm \Lambda_\epsilon.
\end{equation}

\paragraph{Step 3: Extend the similarity to all $\bm \lambda \in \bm \Lambda$.}
We now fix a $\bm \lambda_\epsilon\in \bm \Lambda_\epsilon$ and derive a uniform concentration guarantee for all $\bm \lambda\in \mathcal B_\epsilon(\bm \lambda_\epsilon)$.
Using boundedness $\lVert \bm \rho - \bm c_{\tau,i} \rVert_2 \le \sqrt{d}$, we have:
\begin{equation*}\begin{aligned}
\lVert \tilde{\bm g}_\ell^{v}(\bm \lambda) - \tilde{\bm g}_\ell^{v}(\bm \lambda_\epsilon) \rVert_2^2
&\le d \left( \frac{\lvert \mE_\ell \rvert}{\sum_{\ell'<\ell} \lvert \mE_{\ell'} \rvert}
\sum_{\ell'<\ell, \tau \in \mE_{\ell'}} \1[\tilde i_\tau^{v}(\bm \lambda) \ne \tilde i_\tau^{v}(\bm \lambda_\epsilon)] \right)^2, \\
\lVert \tilde{\bm g}_\ell^{\ast}(\bm \lambda) - \tilde{\bm g}_\ell^{\ast}(\bm \lambda_\epsilon) \rVert_2^2
&\le d \left (\lvert \mE_\ell \rvert \cdot \Pr\{\tilde i^{\ast}(\bm \lambda) \ne \tilde i^{\ast}(\bm \lambda_\epsilon)\}\right )^2,\quad \forall \bm \lambda\in \mathcal B_\epsilon(\bm \lambda_\epsilon).
\end{aligned}\end{equation*}

From \Cref{lem:covering argument}, we know
\begin{equation}\begin{aligned}\label{eq:close dual variables give similar argmax}
\Pr_{\bm v\sim \mV,\bm c\sim \mC}\left \{\exists \bm \lambda\in \mathcal B_\epsilon(\bm \lambda_\epsilon)\text{ s.t. }\tilde i^{\ast}(\bm \lambda)\ne \tilde i^{\ast}(\bm \lambda_\epsilon)\right \}\le K^2(\epsilon\cdot \epsilon_c),\quad \forall \bm \lambda_\epsilon\in \bm \Lambda_\epsilon,
\end{aligned}\end{equation}
where the $\exists \bm \lambda\in \mathcal B_\epsilon(\bm \lambda_\epsilon)$ clause is important since it ensures ``uniform smoothness'' in the neighborhood of $\bm \lambda_\epsilon$. If we instead fix a $\bm \lambda$ and its corresponding $\bm \lambda_\epsilon$ and apply concentration to this specific $\bm \lambda$, we need to do an prohibitively expensive Union Bound afterwards; see also \Cref{footnote:uniform smoothness} in the proof of \Cref{lem:approximate continuity of predictions}.

Hence the error between $\tilde{\bm g}_\ell^\ast(\bm \lambda)$ and $\tilde{\bm g}_\ell^\ast(\bm \lambda_\epsilon)$ is bounded by
\begin{equation}\label{eq:diff of F on lambda_eps and lambda}
\max_{\bm \lambda\in \mathcal B_\epsilon(\bm \lambda_\epsilon)}\lVert \tilde{\bm g}_\ell^{\ast}(\bm \lambda)-\tilde{\bm g}_\ell^{\ast}(\bm \lambda_\epsilon)\rVert_2^2\le d \lvert \mE_\ell\rvert^2\cdot (K^2 \epsilon \epsilon_c)^2,\quad a.s.
\end{equation}

For the error between $\tilde{\bm g}_\ell^v(\bm \lambda)$ and $\tilde{\bm g}_\ell^v(\bm \lambda_\epsilon)$, consider a stochastic process $(X_\tau)_{\tau\ge 1}$ adapted to $(\mF_\tau)_{\tau\ge 0}$:
\begin{equation}\label{footnote:reports measurability in stability}
X_\tau:=\1[\exists \bm \lambda\in \mathcal B_\epsilon(\bm \lambda_\epsilon)\text{ s.t. }\tilde i_\tau^{v}(\bm \lambda)\ne \tilde i_\tau^{v}(\bm \lambda_\epsilon)],\quad \forall \ell'<\ell,\tau\in \mE_{\ell'},
\end{equation}
where $\mF_\tau=\sigma(\bigcup_{i\in \{0\}\cup [K]} \mH_{\tau+1,i})=\sigma(\bm v_1,\ldots,\bm v_\tau,\bm u_1,\ldots,\bm u_\tau,\bm c_1,\ldots,\bm c_\tau,i_1,\ldots,i_\tau,\bm p_1,\ldots,\bm p_\tau)$ is the smallest $\sigma$-algebra containing all generated history up to the end of round $\tau$. (In fact, the $X_\tau$'s here are i.i.d. since $\bm v_\tau\sim \mV$ and $\bm c_\tau\sim \mC$ are independent. However, when controlling $\lVert \tilde{\bm g}_\ell^u(\bm \lambda)-\tilde{\bm g}_\ell^u(\bm \lambda_\epsilon)\rVert_2^2$ in \Cref{lem:stability term 3}, since reports $\bm u_\tau$ can be history-dependent, to make \Cref{lem:covering argument} still applicable we need to make sure the conditional distribution of reports $\bm u_\tau\mid \mH_{\tau}$ is $\mF_{\tau-1}$-measurable.) Then $X_\tau$ is $\mF_\tau$-measurable, and we have
\begin{equation*}
\E[X_\tau\mid \mF_{\tau-1}]=\Pr_{\bm v\sim \mV,\bm c\sim \mC}\left \{\exists \bm \lambda\in \mathcal B_\epsilon(\bm \lambda_\epsilon)\text{ s.t. } \tilde i^{\ast}(\bm \lambda)\ne \tilde i^{\ast}(\bm \lambda_\epsilon)\right \}\le K^2(\epsilon\cdot \epsilon_c),
\end{equation*}
where the last step uses \Cref{eq:close dual variables give similar argmax}.
Applying Azuma-Hoeffding inequality \citep[Corollary 2.20]{wainwright2019high} to the martingale difference sequence $\{X_\tau-\E[X_\tau\mid \mF_{\tau-1}]\}_{\ell'<\ell,\tau\in \mE_{\ell'}}$, we know for any $c>0$,
\begin{equation*}
\Pr\left \{\frac{\sum_{\ell'<\ell,\tau\in \mE_{\ell'}} \1[\exists \bm \lambda\in \mathcal B_\epsilon(\bm \lambda_\epsilon)\text{ s.t. }\tilde i_\tau^{v}(\bm \lambda)\ne \tilde i_\tau^{v}(\bm \lambda_\epsilon)]}{\sum_{\ell'=1}^{\ell-1} \lvert \mE_{\ell'}\rvert}\ge K^2 \epsilon \epsilon_c + c\right \}\le \exp \left (-2c^2\sum_{\ell'=1}^{\ell-1} \lvert \mE_{\ell'}\rvert\right ).
\end{equation*}

Applying a Union Bound over all $\bm \lambda_\epsilon\in \bm \Lambda_\epsilon$ and recalling the expression of $\lVert \tilde{\bm g}_\ell^v(\bm \lambda)-\tilde{\bm g}_\ell^v(\bm \lambda_\epsilon)\rVert_2^2$,
\begin{equation*}\begin{aligned}
\scalemath{0.95}{\Pr\left \{\exists \bm \lambda_\epsilon\in \bm \Lambda_\epsilon,\bm \lambda\in \bm \Lambda\text{ s.t. }\lVert \tilde{\bm g}_\ell^v(\bm \lambda)-\tilde{\bm g}_\ell^v(\bm \lambda_\epsilon)\rVert_2^2\ge d\lvert \mE_\ell\rvert^2(K^2 \epsilon \epsilon_c + c)^2\right \}\le \prod_{i=1}^d \frac{d}{\rho_j \epsilon} \cdot \exp \left (-2c^2\sum_{\ell'=1}^{\ell-1} \lvert \mE_{\ell'}\rvert\right ).}
\end{aligned}\end{equation*}

Therefore, with probability at least $1-\delta$, we have
\begin{equation}\begin{aligned}
&\lVert \tilde{\bm g}_\ell^{v}(\bm \lambda)-\tilde{\bm g}_\ell^{v}(\bm \lambda_\epsilon)\rVert_2^2\le d \lvert \mE_\ell\rvert^2\left (K^2 \epsilon \epsilon_c + \frac{\sqrt{\log \frac{1}{\delta}+\sum_{j=1}^d \log\frac{d}{\rho_j \epsilon}}}{\sqrt{\sum_{\ell'=1}^{\ell-1} \lvert \mE_{\ell'}\rvert}}\right )^2, \quad \forall \bm \lambda_\epsilon\in \bm \Lambda_\epsilon,\bm \lambda\in \mathcal B_\epsilon(\bm \lambda_\epsilon). \label{eq:diff of Fv on lambda_eps and lambda}
\end{aligned}\end{equation}

\paragraph{Final Bound.}
Via Union Bound, w.p. $1-2\delta$, \Cref{eq:diff of F on lambda_eps and lambda,eq:diff of F and Fv on lambda_eps,eq:diff of Fv on lambda_eps and lambda} are true and
\begin{equation*}\begin{aligned}
\lVert \tilde{\bm g}_\ell^{\ast}(\bm \lambda)-\tilde{\bm g}_\ell^{v}(\bm \lambda)\rVert_2^2 &\le 3d \lvert \mE_\ell\rvert^2 \cdot (K^2 \epsilon \epsilon_c)^2 + 6 d \lvert \mE_\ell\rvert^2 \cdot \frac{\log \frac{1}{\delta}+\sum_{j=1}^d \log\frac{d}{\rho_j \epsilon}}{\sum_{\ell'=1}^{\ell-1} \lvert \mE_{\ell'}\rvert},\quad \forall \bm \lambda\in \bm \Lambda.
\end{aligned}\end{equation*}

This finishes the proof.
\end{proof}

\begin{lemma}[Untruthful Reports]\label{lem:stability term 3}
For any $\ell\in [L]$, $\epsilon>0$, and $\delta\in (0,1)$, with probability $1-\frac{2}{3dT}$,
\begin{equation*}\begin{aligned}
\lVert \tilde{\bm g}_\ell^{u}(\bm \lambda)-\tilde{\bm g}_\ell^{v}(\bm \lambda)\rVert_2^2\le \frac{d \lvert \mE_\ell\rvert^2}{(\sum_{\ell'<\ell} \lvert \mE_{\ell'} \rvert)^2} M_\ell^2 +4d\lvert \mE_\ell\rvert^2 \left ((K^2 \epsilon \epsilon_c)^2 + \frac{\log (dT)+\sum_{j=1}^d \log\frac{d}{\rho_j \epsilon}}{\sum_{\ell'=1}^{\ell-1} \lvert \mE_{\ell'}\rvert}\right ),\quad \forall \bm \lambda\in \bm \Lambda,
\end{aligned}\end{equation*}
where $\tilde{\bm g}_\ell^u$ and $\tilde{\bm g}_\ell^v$ are defined in \Cref{eq:formal loss distributions}, and $M_\ell$ is defined in \Cref{eq:N and M}.
\end{lemma}
\begin{proof}
Similar to the reason in \Cref{lem:stability term 2}, we cannot directly apply concentration inequalities to $\bm \lambda_\ell$ which is unmeasurable when the reports are generated. Therefore, we still consider an $\epsilon$-net $\bm \Lambda_\epsilon$ of $\bm \Lambda$, ensure that $\tilde{\bm g}_\ell^v(\bm \lambda_\epsilon)\approx \tilde{\bm g}_\ell^u(\bm \lambda_\epsilon)$ for all $\bm \lambda_\epsilon\in \bm \Lambda_\epsilon$, and then extend it to all $\bm \lambda\in \bm \Lambda$ via Chernoff-Hoeffding inequalities.

\paragraph{Step 1: Cover $\bm \Lambda$ with an $\epsilon$-net.}
From \Cref{lem:covering argument}, for any $\epsilon > 0$, there exists an $\epsilon$-net $\bm \Lambda_\epsilon \subseteq \bm \Lambda$ of size $\O((d/\epsilon)^d)$, such that every $\bm \lambda \in \bm \Lambda$ has some $\bm \lambda_\epsilon \in \bm \Lambda_\epsilon$ with $\lVert \bm \lambda - \bm \lambda_\epsilon \rVert_1 \le \epsilon$. We remark that our final guarantee does not have a $d$-exponent, because the dependency on $\lvert \bm \Lambda_\epsilon\rvert$ is logarithmic.

\paragraph{Step 2: Concentration for any fixed $\bm \lambda_\epsilon \in \bm \Lambda_\epsilon$.}
Fix $\bm \lambda_\epsilon\in \bm \Lambda_\epsilon$.
From \Cref{lem:large misreport}, which underpins the \textsc{InterEpoch} analysis in \Cref{thm:InterEpoch guarantee}, we know that for any previous epoch $\ell' < \ell$, the event $|u_{\tau,i} - v_{\tau,i}| \ge \frac{1}{|\mathcal{E}_{\ell'}|}$ occurs in only $\Otil(1)$ rounds (with $\tau \in \mathcal{E}_{\ell'}$) with probability at least $1 - \frac{1}{|\mathcal{E}_{\ell'}|}$. Following the approach in \Cref{lem:misallocation}, we now leverage the smoothness of consumptions in \Cref{assump:smooth consumptions} in combination with Azuma-Hoeffding inequalities to conclude that $\tilde{\bm g}_\ell^v(\bm \lambda_\epsilon) \approx \tilde{\bm g}_\ell^u(\bm \lambda_\epsilon)$.

Formally, to compare $\tilde{\bm g}_\ell^u(\bm \lambda_\epsilon)$ and $\tilde{\bm g}_\ell^v(\bm \lambda_\epsilon)$, we need only to control the number of previous rounds $\tau$ such that $\tilde i_\tau^{u}(\bm \lambda_\epsilon) \ne \tilde i_\tau^{v}(\bm \lambda_\epsilon)$. We decompose such events by whether large misreports happen:
\begin{equation}\begin{aligned}
&\sum_{\ell'<\ell,\,\tau \in \mE_{\ell'}} \1\left[\tilde i_\tau^{u}(\bm \lambda_\epsilon) \ne \tilde i_\tau^{v}(\bm \lambda_\epsilon)\right]
\le \sum_{\ell'<\ell,\,\tau \in \mE_{\ell'}} \1\left[\exists i \in [K] \text{ s.t. } \lvert u_{\tau,i} - v_{\tau,i} \rvert \ge \frac{1}{\lvert \mE_{\ell'} \rvert} \right]  \\
&\quad \qquad  + \sum_{\ell'<\ell,\,\tau \in \mE_{\ell'}} \1\left[\exists i \ne j \in [K] \text{ s.t. } (v_{\tau,i} - v_{\tau,j}) - \bm \lambda_\epsilon^\trans(\bm c_{\tau,i} - \bm c_{\tau,j}) \in \left[0, \frac{2}{\lvert \mE_{\ell'} \rvert}\right]\right], \label{eq:misalloc decomposition eps net}
\end{aligned}\end{equation}
where the second term plugs $\lvert u_{\tau,i} - v_{\tau,i} \rvert \ge \frac{1}{\lvert \mE_{\ell'} \rvert},\forall i\in [K]$ into definitions of $\tilde i_\tau^{u}(\bm \lambda_\epsilon)$ and $\tilde i_\tau^{v}(\bm \lambda_\epsilon)$.

For the first term, we use the following inequality which appeared as \Cref{eq:large misreports takeaway} in \Cref{lem:large misreport}:
\begin{equation*}
\Pr\left \{\sum_{\tau\in \mE_{\ell'}}\1\left [\lvert u_{\tau,i}-v_{\tau,i}\rvert\ge \frac{1}{\lvert \mE_{\ell'}\rvert}\right ]\ge c\right \}\le 2(1+2\lVert \bm \rho^{-1}\rVert_1)\cdot \frac{2K\lvert \mE_{\ell'}\rvert^3}{\gamma^{-c}-1},\quad \forall \ell'<\ell, c>0.
\end{equation*}

For any fixed failure probability $\delta\in (0,1)$, for every $\ell'<\ell$, picking $c$ so that the RHS is $\frac{\delta}{\ell}$ gives
\begin{equation*}
\Pr\left \{\sum_{\ell'<\ell,\tau\in \mE_{\ell'}}\1\left [\lvert u_{\tau,i}-v_{\tau,i}\rvert\ge \frac{1}{\lvert \mE_{\ell'}\rvert}\right ]\ge \ell \log_{\gamma^{-1}} \left (1+4(1+\lVert \bm \rho^{-1}\rVert_1)K\lvert \mE_\ell\rvert^3\cdot \ell \delta^{-1}\right )\right \}\le \delta.
\end{equation*}

For the second term, under \Cref{assump:smooth consumptions} that $\Law(\bm \lambda_\epsilon^\trans \bm c_{\tau,i})$ is uniformly bounded by $\epsilon_c$, $\forall i\in [K]$,
\begin{equation*}\begin{aligned}
\Pr\left \{(v_{\tau,i}-v_{\tau,j})-\bm \lambda_\epsilon^\trans(\bm c_{\tau,i}-\bm c_{\tau,j})\in \left [0,\frac{2}{\lvert \mE_{\ell'}\rvert}\right ] \right \}\le \frac{2}{\lvert \mE_{\ell'}\rvert}\epsilon_c \lVert \bm \lambda_\epsilon\rVert_1,~~ \forall i\ne j\in \{0\}\cup [K],\ell'<\ell,\tau\in \mE_{\ell'}.
\end{aligned}\end{equation*}

Although we are now standing at epoch $\ell$, the $\epsilon$-nets are fixed before the game (as it only depends on $\bm \Lambda$). Hence, the indicator $X_\tau:=\1[(v_{\tau,i}-v_{\tau,j})-\bm \lambda_\epsilon^\trans(\bm c_{\tau,i}-\bm c_{\tau,j})\in [0,\frac{2}{\lvert \mE_{\ell'}\rvert} ]]$ is indeed $\mF_{\tau}$-measurable back in the past when $\tau\in \mE_{\ell'}$ and $\ell'<\ell$, where $(\mF_\tau)_{\tau\ge 0}$ is the natural filtration $\mF_\tau=\sigma(X_1,\ldots,X_\tau)$. Applying multiplicative Azuma-Hoeffding inequality \citep[Lemma 10]{koufogiannakis2014nearly} to the martingale difference sequence $\{X_\tau-\E[X_\tau\mid \mF_{\tau-1}]\}_{\ell'<\ell,\tau\in \mE_{\ell'}}$ gives
\begin{equation*}\begin{aligned}
\Pr\left \{\frac 12\sum_{\ell'<\ell,\tau\in \mE_{\ell'}} X_\tau\ge \sum_{\ell'<\ell,\tau\in \mE_{\ell'}} \E[X_\tau\mid \mF_{\tau-1}]+2A\right \}\le \exp(-A),\quad \forall A\in \mathbb R.
\end{aligned}\end{equation*}

Since $X_\tau$ only involves $\bm v_\tau\sim \mV$ and $\bm c_\tau\sim \mC$ which are i.i.d., we know $\E[X_\tau\mid \mF_{\tau-1}]=\E[X_\tau]\le \frac{2}{\lvert \mE_{\ell'}\rvert}\epsilon_c \lVert \bm \lambda_\epsilon\rVert_1$.
Setting the RHS as $\frac{\delta}{\lvert \bm \Lambda_\epsilon\rvert K^2}$ and taking a Union Bound over all $\bm \lambda_\epsilon \in \bm \Lambda_\epsilon$, we have
\begin{equation*}\begin{aligned}
&\Pr\Bigg \{\max_{\bm \lambda_\epsilon\in \bm \Lambda_\epsilon}\sum_{\ell'<\ell,\tau\in \mE_{\ell'}} \1\left [(v_{\tau,i}-v_{\tau,j})-\bm \lambda_\epsilon^\trans(\bm c_{\tau,i}-\bm c_{\tau,j})\in \left [0,\frac{2}{\lvert \mE_{\ell'}\rvert} \right ]\right ]\\
& \qquad \ge 4 \ell \epsilon_c \lVert \bm \rho^{-1}\rVert_1+4\left (\log \frac 1\delta + \sum_{j=1}^d \log \frac{d}{\rho_j \epsilon}\right )\Bigg \}\le \frac{\delta}{K^2},\quad \forall i\ne j\in \{0\} \cup [K].
\end{aligned}\end{equation*}

Using another Union Bound over all $i\ne j\in \{0\}\cup [K]$ and plugging it back into \Cref{eq:misalloc decomposition eps net},
\begin{equation*}\begin{aligned}
&\quad \max_{\bm \lambda_\epsilon\in \bm \Lambda_\epsilon} \sum_{\ell'<\ell,\tau\in \mE_{\ell'}} \1[\tilde i_\tau^{u}(\bm \lambda_\epsilon)\ne \tilde i_\tau^{v}(\bm \lambda_\epsilon)]\\
&\le \ell \log_{\gamma^{-1}} \left (1+4(1+\lVert \bm \rho^{-1}\rVert_1)K\lvert \mE_\ell\rvert^3\cdot 4 \ell \delta^{-1}\right )+4 \ell \epsilon_c \lVert \bm \rho^{-1}\rVert_1+4\left (\log \frac 1\delta + \sum_{j=1}^d \log \frac{d}{\rho_j \epsilon}\right ),
\end{aligned}\end{equation*}
with probability at least $1-2\delta$.
Therefore, with probability at least $1-2\delta$, we have
\begin{equation*}\begin{aligned}
&\quad \max_{\bm \lambda_\epsilon\in \bm \Lambda_\epsilon}\lVert \tilde{\bm g}_\ell^{u}(\bm \lambda_\epsilon) - \tilde{\bm g}_\ell^{v}(\bm \lambda_\epsilon) \rVert_2^2
\le d \left( \frac{\lvert \mE_\ell \rvert}{\sum_{\ell'<\ell} \lvert \mE_{\ell'} \rvert}
\sum_{\ell'<\ell, \tau \in \mE_{\ell'}} \1[\tilde i_\tau^{u}(\bm \lambda_\epsilon) \ne \tilde i_\tau^{v}(\bm \lambda_\epsilon)] \right)^2\\
&\le \frac{d\lvert \mE_{\ell}\rvert^2}{(\sum_{\ell'<\ell} \lvert \mE_{\ell'} \rvert)^2} \left (\ell \log_{\gamma^{-1}} \Bigg (1+4(1+\lVert \bm \rho^{-1}\rVert_1)K\lvert \mE_\ell\rvert^3\cdot 4 \ell \delta^{-1}\right )+4 \ell \epsilon_c \lVert \bm \rho^{-1}\rVert_1+4\left (\log \frac 1\delta + \sum_{j=1}^d \log \frac{d}{\rho_j \epsilon}\right )\Bigg )^2,
\end{aligned}\end{equation*}
where the inequality uses $\lVert \bm \rho-\bm c_{\tau,i}\rVert_2^2\le d$. Using the definition of $M_\ell$ in \Cref{eq:N and M}, when $\delta=\frac{1}{6dT}$ this is exactly $\frac{d \lvert \mE_\ell\rvert^2}{(\sum_{\ell'<\ell} \lvert \mE_{\ell'} \rvert)^2} M_\ell^2$.

\paragraph{Step 3: Extend the similarity to all $\bm \lambda \in \bm \Lambda$.}
After yielding the similarity that $\tilde{\bm g}_\ell^{u}(\bm \lambda_\epsilon) \approx \tilde{\bm g}_\ell^{v}(\bm \lambda_\epsilon)$ for all $\bm \lambda_\epsilon\in \bm \Lambda_\epsilon$, we extend it to all $\bm \lambda\in \bm \Lambda$ using the arguments already derived in \Cref{lem:stability term 2}. Recall from \Cref{eq:diff of Fv on lambda_eps and lambda} that we already proved that with probability $1-\delta$,
\begin{equation*}\begin{aligned}
\lVert \tilde{\bm g}_\ell^{v}(\bm \lambda)-\tilde{\bm g}_\ell^{v}(\bm \lambda_\epsilon)\rVert_2^2\le d \lvert \mE_\ell\rvert^2\left (K^2 \epsilon \epsilon_c + \frac{\sqrt{\log \frac{1}{\delta}+\sum_{j=1}^d \log\frac{d}{\rho_j \epsilon}}}{\sqrt{\sum_{\ell'=1}^{\ell-1} \lvert \mE_{\ell'}\rvert}}\right )^2, \quad \forall \bm \lambda_\epsilon\in \bm \Lambda_\epsilon,\bm \lambda\in \mathcal B_\epsilon(\bm \lambda_\epsilon).
\end{aligned}\end{equation*}

Using exactly the same arguments (see \Cref{footnote:reports measurability in stability} for the reason why \Cref{lem:covering argument} is still applicable when reports $\{\bm u_\tau\}_{\ell'<\ell,\tau\in \mE_{\ell'}}$ can be history-dependent), with probability $1-\delta$,
\begin{equation*}\begin{aligned}
\lVert \tilde{\bm g}_\ell^{u}(\bm \lambda)-\tilde{\bm g}_\ell^{u}(\bm \lambda_\epsilon)\rVert_2^2\le d \lvert \mE_\ell\rvert^2\left (K^2 \epsilon \epsilon_c + \frac{\sqrt{\log \frac{1}{\delta}+\sum_{j=1}^d \log\frac{d}{\rho_j \epsilon}}}{\sqrt{\sum_{\ell'=1}^{\ell-1} \lvert \mE_{\ell'}\rvert}}\right )^2, \quad \forall \bm \lambda_\epsilon\in \bm \Lambda_\epsilon,\bm \lambda\in \mathcal B_\epsilon(\bm \lambda_\epsilon).
\end{aligned}\end{equation*}

\paragraph{Final Bound.}
Putting the three inequalities together and taking Union Bound, when picking $\delta=\frac{1}{6dT}$,
\begin{equation*}\begin{aligned}
\lVert \tilde{\bm g}_\ell^{u}(\bm \lambda)-\tilde{\bm g}_\ell^{v}(\bm \lambda)\rVert_2^2\le \frac{d \lvert \mE_\ell\rvert^2}{(\sum_{\ell'<\ell} \lvert \mE_{\ell'} \rvert)^2} M_\ell^2 +4d\lvert \mE_\ell\rvert^2 \left ((K^2 \epsilon \epsilon_c)^2 + \frac{\log (dT)+\sum_{j=1}^d \log\frac{d}{\rho_j \epsilon}}{\sum_{\ell'=1}^{\ell-1} \lvert \mE_{\ell'}\rvert}\right ),\quad \forall \bm \lambda\in \bm \Lambda
\end{aligned}\end{equation*}
holds with probability $1-4\delta=1-\frac{2}{3dT}$, where $M_\ell$ is defined in \Cref{eq:N and M}.
\end{proof}
\section{Extensions: Multi-Unit Allocation}\label{sec:appendix extensions}
This section proves \Cref{thm:multi-unit,thm:multi-demand}, whose difference is the dependency on $N$: In single-unit demand cases, the allocation and payment rule in \Cref{eq:payment multi-unit multi-demand,eq:allocation multi-unit multi-demand} only needs to consider $K$ dual-adjusted reports, namely $\{u_{t,i}-\bm \lambda_t^\trans \bm c_{t,i}\}_{i\in [K]}$; in multi-unit demand cases, $K\times N$ dual-adjusted reports are compared. This is not merely about computational complexity, but rather the gradient stability (\Cref{def:gradient stability}): As sketched in the proof of \Cref{thm:multi-demand}, we need to ensure that under small local perturbations of $\bm \lambda$, the ranking between all feasible allocations remain unchanged w.h.p. Due to the larger set of feasible allocations---characterized as the top-$N$ between $KN$ reports as discussed below \Cref{eq:allocation multi-unit multi-demand}---this introduces an additional $N^2$ dependency.

To avoid repeated proofs, we denote by $\mathcal S$ the set of dual-adjusted reports the planner considers in each round. That is, in unit demand setups $\mathcal S=[K]$, and in multi-unit demand setups $\mathcal S=[K]\times [N]$. Let $S:=\lvert \mathcal S\rvert$. In addition to allocation and payment, $\mathcal S$ also appears in randomized exploration as the candidate set. Inheriting the regret decomposition in \Cref{eq:regret decomposition} from the single-unit analysis, we decompose regret $\mR_T$ as
\begin{equation*}
\mR_T=\E\left [\sum_{t=1}^T (v_{t,I_t^\ast}-v_{t,I_t})\right ]\le \E\Bigg [\underbrace{\sum_{t=1}^{\mT_v} (v_{t,\tilde I_t^\ast}-v_{t,I_t})}_{\primalreg}+\underbrace{\sum_{t=1}^T v_{t,I_t^\ast}-\sum_{t=1}^{\mT_v} v_{t,\tilde I_t^\ast}}_{\dualreg}\Bigg ],
\end{equation*}
where $\{I_t^\ast\}_{t\in [T]}$ is the offline optimum benchmark in \Cref{sec:multi-unit}, $\{\tilde I_t^\ast\}_{t\in [T]}$ is the dual-adjusted optimum $\tilde I_t^\ast:=\argmax_{I\in \mathcal I} (v_{t,I}-\bm \lambda_t^\trans \bm c_{t,I})$, and $\mT_v$ is the stopping time $\mT_v:=\min\{t\in [T]\mid \sum_{\tau=1}^t \bm c_{\tau,I_\tau}+N\bm 1\not \le T\bm \rho\}\cup \{T+1\}$.

\subsection{Multi-Unit \algname Guarantee}
\begin{theorem}[Multi-Unit \algname Guarantee]\label{thm:multi-unit IAPD}
With $S$ dual-adjusted reports in each round, the multi-unit \algname framework induces a PBE of agents' strategies $\bm \pi^\ast$, such that \primalreg is bounded by
\begin{equation*}
\E[\primalreg]\le \sum_{\ell=1}^L \left (4+4S^2 \epsilon_c+5\log \lvert \mE_\ell\rvert+\log_{\gamma^{-1}} (1+4N(1+\lVert \bm \rho^{-1}\rVert_1)S\lvert \mE_\ell\rvert^4)\right ).
\end{equation*}
Similar to $N_\ell$ in \Cref{eq:N and M}, we define the multi-unit analog $\tilde N_\ell$ as:
\begin{equation}\label{eq:N multi-unit}
\tilde N_\ell:=1+4S^2 \epsilon_c+5\log \lvert \mE_\ell\rvert+\log_{\gamma^{-1}}(1+4N (1+\lVert \bm \rho^{-1}\rVert_1) S \lvert \mE_\ell\rvert^4).
\end{equation}
Then we still write $\E[\primalreg]\le \sum_{\ell=1}^L (\tilde N_\ell+3)$ for the multi-unit \algname framework.
\end{theorem}
\begin{proof}
Following the proof of \Cref{thm:InterEpoch guarantee} in \Cref{sec:appendix PrimalAlloc} closely, we first establish the analog of \Cref{lem:2nd price auction variant}: For each round $t\in [T]$, should agents be myopic, they would find truthful reporting dominant under \Cref{eq:allocation multi-unit multi-demand,eq:payment multi-unit multi-demand}; this is immediate from replacing the second-price auction in the proof of \Cref{lem:2nd price auction variant} with the VCG pricing. We thus directly have \Cref{thm:IntraEpoch guarantee}: for any epoch, ``nearsighted'' agents optimizing the current-epoch expected utility find truthful reporting a PBE.

We utilize the exploration component to prove an analog of \Cref{lem:large misreport}.
Specifically, by comparing agent $i$'s expected utility under any PBE $\bm \pi$ and unilateral deviation $\bm \pi^i:=\truth_i\circ \bm \pi_{-i}$ as in \Cref{eq:large misreports V difference}, we arrive at a conclusion extremely similar to \Cref{eq:large misreports takeaway}. There are two difference: \textit{(i)} when bounding the \text{Current-Epoch} term in \Cref{eq:large misreports V difference}, since there are $S$ candidates pending exploration, for an agent $i$ to misreport some $u_{t,i,n}\ne v_{t,i,n}$ their expected loss bound is $-\frac{1}{2\lvert \mE_\ell\rvert S} (u_{t,i,n}-v_{t,i,n})^2$ instead of $-\frac{1}{2\lvert \mE_\ell\rvert K} (u_{t,i,n}-v_{t,i,n})^2$ (cf. \Cref{eq:exploration misreport loss}); \textit{(ii)} the payment magnitude is now $\lvert p_{\tau,i_\tau}\rvert\le (1+2 N \lVert \bm \rho\rVert_1)$ instead of $(1+2 \lVert \bm \rho\rVert_1)$ (compare \Cref{line:primal allocation} with \Cref{eq:payment multi-unit multi-demand}).
Taking these two differences into consideration, \Cref{eq:large misreports takeaway} now becomes
\begin{equation*}
2(1+2 N \lVert \bm \rho^{-1}\rVert_1)\frac{\gamma^{s_{\ell+1}}}{1-\gamma}\ge \Pr\{\lvert \mathcal M_{\ell,i}\rvert\ge c\}\frac{\gamma^{s_{\ell+1}}}{1-\gamma} \frac{\gamma^{-c}-1}{2S\lvert \mE_\ell\rvert^3},\quad \forall c>0,
\end{equation*}
where $\mathcal M_{\ell,i}=\left \{t\in \mE_\ell\mid \lvert u_{t,i}-v_{t,i}\rvert\ge \frac{1}{\lvert \mE_\ell}\rvert\right \}$ and $s_\ell$ and $e_\ell$ are the first and last round in epoch $\mE_\ell$, respectively. Therefore, we still have \Cref{lem:large misreport}, except that the bound is now $\log_{\gamma^{-1}} (1+4N(1+\lVert \bm \rho^{-1}\rVert_1) S \lvert \mE_\ell\rvert^4)$.

Moving on to \Cref{lem:misallocation}, we replicate the key inequality \Cref{eq:misalloc prob when similar}---which bounds the probability that a small deviation in report results in a different allocation---under the multi-unit \algname framework. That is, we consider the probability (recall $\mathcal S=[K]$ in unit demand setups and $\mathcal S=[K]\times [N]$ otherwise):
\begin{equation}\begin{aligned}\label{eq:primal flip multi-unit}
\Pr\left \{\left (\argmax_{I\in \mathcal I} (u_{t,I}-\bm \lambda_\ell^\trans \bm c_{t,I})\ne \argmax_{I\in \mathcal I} (v_{t,I}-\bm \lambda_\ell^\trans \bm c_{t,I})\right ) \wedge \left (\max_{s\in \mathcal S} \lvert u_{t,s}-v_{t,s}\rvert\le \frac{1}{\lvert \mE_\ell\rvert}\right ) \right \}.
\end{aligned}\end{equation}
Since the $\argmax_{I\in \mathcal I}$ in \Cref{eq:allocation multi-unit multi-demand} corresponds to the top-$N$ operator among $\{u_{t,s}-\bm \lambda^\trans \bm c_{t,s}\}_{s\in \mathcal S}$, by considering the $N$-th and $(N+1)$-th candidate among $\lvert \mathcal S\rvert$, this probability is further bounded by
\begin{equation*}\begin{aligned}
\text{\Cref{eq:primal flip multi-unit}}&\le \sum_{s\ne s'\in \mathcal S} \Pr \left \{0\le (u_s-\bm \lambda_\ell^\trans \bm c_s)-(u_{s'}-\bm \lambda_\ell^\trans \bm c_{s'})\le \frac{2}{\lvert \mE_\ell\rvert}\right \}\le \frac{2 S^2 \epsilon_c}{\lvert \mE_\ell\rvert},
\end{aligned}\end{equation*}
where we define $s\ne s'$ \emph{only} using the agent index (i.e., we regard $s=(i,n)$ and $s'=(i',n')$ as the same if $i=i'$; this is because for any agent $i\in [K]$ and any $n<n'\in [N]$, we always have $u_{i,n}\ge u_{i,n'}$ and $\bm c_{i,n}=\bm c_{i,n'}$, which means no flip between $(i,n)$ and $(i,n')$ happens under perturbation). The second inequality applies \Cref{assump:smooth consumptions}.
Applying the same Azuma-Hoeffiding inequality as in \Cref{lem:misallocation}, w.p. $1-\frac{2}{\lvert \mE_\ell\rvert}$ the number of misallocations in those rounds without large misreports is bounded by $4S^2 \epsilon_c \lVert \bm \lambda_\ell\rVert_1+4\log \lvert \mE_\ell\rvert$.
Following the remaining proof of \Cref{thm:InterEpoch guarantee}, we derive the claimed guarantee for multi-unit \algname.
\end{proof}

\subsection{Multi-Unit \ftrl and \oftrlfp Guarantees}
\begin{lemma}[Multi-Unit \dualreg and Online Learning Regret]\label{lem:DualUpd to regret multi-unit}
Under multi-unit \algname,
\begin{equation*}
\E[\dualreg]\le \mR_L^{\text{OL}}+N\left (1+\max_{1\le \ell \le L} \lvert \mE_\ell\rvert+\sum_{\ell=1}^L (\tilde N_\ell+3)\right )\lVert \bm \rho^{-1}\rVert_1,
\end{equation*}
with the online learning regret $\mR_L^{\text{OL}}$ defined as (where $\mathcal L_v$ is the epoch that the stopping time $\mathcal T_v$ belongs to)
\begin{equation*}
\mR_L^\text{OL}:=\E\left [\sup_{\bm \lambda^\ast\in \bm \Lambda} \sum_{\ell=1}^{\mathcal L_v} \left (\sum_{t\in \mE_\ell} (\bm \rho-\bm c_{t,I_t})\right )^\trans (\bm \lambda_\ell-\bm \lambda^\ast)\right ],\quad \bm \Lambda:=\bigotimes_{j=1}^d [0,\rho_j^{-1}].
\end{equation*}
\end{lemma}
\begin{proof}
The proof of \Cref{lem:DualUpd to regret} applies verbatim by replacing $i_t$ with $I_t$, $\tilde i_t^\ast$ with $\tilde I_t^\ast$, $N_\ell$ with $\tilde N_\ell$ (the \primalreg guarantees in \Cref{thm:InterEpoch guarantee,thm:multi-unit IAPD}), and the bound $\lVert \bm c_{t,i_t}\rVert_\infty\le 1$ with $\lVert \bm c_{t,I_t}\rVert_\infty\le N$.
\end{proof}
\begin{lemma}[Multi-Unit \ftrl Guarantee]\label{thm:DualUpd guarantee FTRL multi-unit}
When deciding the dual $\bm \lambda_\ell$ as in \Cref{eq:lambda FTRL multi-unit},
\begin{equation*}
\mR_L^{\text{OL}}\le \sup_{\bm \lambda^\ast\in \bm \Lambda} \frac{\Psi(\bm \lambda^\ast)}{\eta_L} + dN^2\sum_{\ell=1}^L \eta_\ell \lvert \mE_\ell\rvert^2.
\end{equation*}
\end{lemma}
\begin{proof}
Since $\mR_L^{\text{OL}}$ is standard online learning regret with $\mathcal L_v\le L$ a.s., the standard \ftrl analysis \citep[see, e.g.,][Corollary 7.7]{orabona2019modern} over $\bm \Lambda=\bigotimes_{j=1}^d [0,\rho_j^{-1}]$ with loss $F_\ell(\bm \lambda):=\sum_{t\in \mE_\ell} (\bm \rho-\bm c_{t,I_t})^\trans \bm \lambda$ gives
\begin{equation*}
\mR_L^{\text{OL}}=\E\left [\sup_{\bm \lambda^\ast\in \bm \Lambda} \sum_{\ell=1}^{\mathcal L_v} \left (\sum_{t\in \mE_\ell} (\bm \rho-\bm c_{t,I_t})\right )^\trans (\bm \lambda_\ell-\bm \lambda^\ast)\right ]\le \sup_{\bm \lambda^\ast\in \bm \Lambda} \frac{\Psi(\bm \lambda^\ast)}{\eta_L}+\sum_{\ell=1}^L \eta_\ell \E\left [\left \lVert \sum_{t\in \mE_\ell} (\bm \rho-\bm c_{t,I_t})\right \rVert_2^2\right ],
\end{equation*}
As $\bm \rho,\bm c_{t,I}\in [0,N]^d$, we have $\lVert \sum_{t\in \mE_\ell} (\bm \rho-\bm c_{t,I_t})\rVert_2^2\le d N^2 \lvert \mE_\ell\rvert^2$ for all $\ell\in [L]$. This gives the conclusion.
\end{proof}
\begin{lemma}[Multi-Unit \oftrlfp Guarantee]\label{thm:DualUpd guarantee O-FTRL-FP multi-unit}
When deciding the dual $\bm \lambda_\ell$ as in \Cref{eq:lambda O-FTRL-FP multi-unit},
\begin{equation*}\begin{aligned}
\mR_L^{\text{OL}}&\le \sup_{\bm \lambda^\ast\in \bm \Lambda} \frac{\Psi(\bm \lambda^\ast)}{\eta_L}+d \eta_1 N^2 \lvert \mE_1\rvert^2+\sum_{\ell=2}^L \eta_\ell \frac{N^2 \lvert \mE_\ell\rvert^2}{\sum_{\ell'<\ell} \lvert \mE_{\ell'}\rvert} \left (4d^2 + 24d \log(dTL) + 24d \sum_{j=1}^d \log \frac{dK^2 \epsilon_c T}{\rho_j}\right ) \\
&\quad +\sum_{\ell=2}^L \eta_\ell N \lvert \mE_\ell\rvert (16d+10)+\sum_{\ell=1}^L \eta_\ell N^2 \left (2N_\ell^2 + \frac{2d \lvert \mE_\ell\rvert^2 M_\ell^2}{(\sum_{\ell'=1}^{\ell-1} \lvert \mE_{\ell'}\rvert)^2}\right )+2\lVert \bm \rho^{-1}\rVert_1 N^2,
\end{aligned}\end{equation*}
where $\tilde M_\ell$ is the multi-unit analog of the $M_\ell$ (which is defined in \Cref{eq:N and M}):
\begin{equation}\label{eq:M multi-unit}
\tilde M_\ell:=\ell \log_{\gamma^{-1}} \left (1+4(1+\lVert \bm \rho^{-1}\rVert_1)S\lvert \mE_\ell\rvert^3\cdot 24 \ell dT\right )+4 \ell \epsilon_c \lVert \bm \rho^{-1}\rVert_1+4\left (\log(dT) + \sum_{j=1}^d \log \frac{d S^2 \epsilon_c T}{\rho_j \epsilon}\right ),
\end{equation}
and $\tilde N_\ell$ is the multi-unit analog of $N_\ell$ defined in \Cref{eq:N multi-unit}.
\end{lemma}
\begin{proof}
Following the proof of \Cref{thm:DualUpd guarantee O-FTRL-FP} step by step, we first establish the gradient-stability of the multi-unit \oftrlfp extremely similar to \Cref{lem:approximate continuity of predictions,lem:covering argument}: Take an $\frac{\epsilon}{2}$-net of $\bm \Lambda$ denoted by $\bm \Lambda_\epsilon$, and consider the stochastic process $(X_\tau)_{\tau\ge 1}$ adapted to $(\mathcal F_\tau)_{\tau\ge 0}$ with $\mathcal F_\tau=\sigma(\bigcup_{i\in \{0\}\cup [K]} \mH_{\tau+1,i})$:
\begin{equation*}
X_\tau=\1[\exists \bm \lambda\in \mathcal B_\epsilon(\bm \lambda_\epsilon)\text{ s.t. }\tilde I_\tau(\bm \lambda)\ne \tilde I_\tau(\bm \lambda_\epsilon)],\quad \forall \ell'<\ell,\tau\in \mE_{\ell'},
\end{equation*}
where $\mathcal B_\epsilon(\bm \lambda_\epsilon):=\{\bm \lambda\in \bm \Lambda\mid \lVert \bm \lambda-\bm \lambda_\epsilon\rVert_1\le \epsilon\}$, and we direct the readers to \Cref{footnote:uniform smoothness} for discussions on $(X_\tau)_{\tau\ge 1}$. To calculate $\E[X_t\mid \mathcal F_{t-1}]$, imitating \Cref{lem:covering argument}, we consider the probability that two allocations, namely $I\ne J\subseteq \mathcal I$, flips their order under a small perturbation of $\bm \lambda$. That is, we consider the probability
\begin{equation*}
\Pr_{\bm u\sim \mV,\bm c\sim \mC} \left \{\exists \bm \lambda\in \mathcal B_\epsilon(\bm \lambda_\epsilon)\text{ and }I\ne j\in \mathcal I\text{ s.t. }(u_I-\bm \lambda_\epsilon^\trans \bm c_I>u_J-\bm \lambda_\epsilon^\trans \bm c_J)\wedge (u_I-\bm \lambda^\trans \bm c_I<u_J-\bm \lambda^\trans \bm c_J)\right \}.
\end{equation*}
Using the same arguments when controlling \Cref{eq:primal flip multi-unit} in the proof of \Cref{thm:multi-unit IAPD}, this is no more than
\begin{equation}\begin{aligned}\label{eq:multi-unit flip prob}
&\quad \sum_{s\ne s'\in \mathcal S} \Pr_{\bm u\sim \mV,\bm c\sim \mC} \left \{\exists \bm \lambda\in \mathcal B_\epsilon(\bm \lambda_\epsilon)\text{ s.t. }(u_s-\bm \lambda_\epsilon^\trans \bm c_s>u_{s'}-\bm \lambda_\epsilon^\trans \bm c_{s'})\wedge (u_s-\bm \lambda^\trans \bm c_s<u_{s'}-\bm \lambda^\trans \bm c_{s'})\right \}\\
&\le \sum_{s\ne s'\in \mathcal S} \Pr_{\bm u\sim \mV,\bm c\sim \mC} \left \{0\le (u_s-\bm \lambda_\epsilon^\trans \bm c_s)-(u_{s'}-\bm \lambda_\epsilon^\trans \bm c_{s'})\le 2\epsilon\right \}\le S^2 \epsilon \epsilon_c,
\end{aligned}\end{equation}
where the first inequality uses $\lvert (\bm \lambda_\epsilon-\bm \lambda)^\trans (\bm c_s-\bm c_{s'})\rvert\le \lVert \bm \lambda-\bm \lambda_\epsilon\rVert_1\cdot \lVert \bm c_s-\bm c_{s'}\rVert_\infty\le \epsilon$, and the second inequality applies the same arguments as in \Cref{lem:covering argument} (where the key step invokes \Cref{assump:smooth consumptions}). We also remark that we compare $s=(i,n)$ and $s'=(i',n')$ only using the agent's index $i$; see \Cref{eq:primal flip multi-unit} for more discussions.

After deriving this multi-unit analog of \Cref{lem:covering argument}, we further apply multiplicative Azuma-Hoeffding and Union Bound as in \Cref{lem:approximate continuity of predictions}. This derives a gradient stability guarantee similar to \Cref{eq:gradient stability in DualIneff}, but with the $K^2$ in $\epsilon_\ell$ replaced by $S^2$ (due to the larger Union Bound) and also an extra $N$ in $L_\ell$ (capturing the larger magnitude of $\bm \rho$ and $\bm c$): for any fixed $\epsilon_\ell\in (0,1)$ and $\delta_\ell>0$, with probability $1-\delta_\ell$, the prediction $\tilde{\bm g}_\ell(\bm \lambda_\ell)^\trans \bm \lambda$ enjoys $(\epsilon_\ell,L_\ell)$-gradient stability (\Cref{def:gradient stability}, i.e., $\lVert \tilde{\bm g}_\ell(\bm \lambda_1)-\tilde{\bm g}_\ell(\bm \lambda_2)\rVert_2\le L_\ell$ for all $\lVert \bm \lambda_1-\bm \lambda_2\rVert_2\le \epsilon_\ell$) with
\begin{equation*}
\epsilon_\ell=\frac{\sqrt d}{S^2 \epsilon_c \sum_{\ell'<\ell} \lvert \mE_{\ell'}\rvert},\quad L_\ell=N \frac{4 \lvert \mE_{\ell}\rvert \sqrt d}{\sum_{\ell'<\ell} \lvert \mE_{\ell'}\rvert}\left (d+\log \frac{1}{\delta_\ell}+\sum_{j=1}^d \log \frac{\sqrt d}{\rho_j \epsilon_\ell}\right ),\quad \forall \ell\ge 2.
\end{equation*}

Therefore, applying the \oftrlfp guarantee in \Cref{lem:O-FTRL lemma} controls the online learning regret as
\begin{equation*}\begin{aligned}
\mR_L^{\text{OL}}\le \sup_{\bm \lambda^\ast\in \bm \Lambda}\frac{\Psi(\bm \lambda^\ast)}{\eta_L}+\sum_{\ell=1}^L \eta_\ell \E\left [\left \lVert \sum_{t\in \mE_\ell} (\bm \rho-\bm c_{t,i_t})-\tilde{\bm g}_\ell(\bm \lambda_\ell)\right \rVert_2^2\right ]+\sum_{\ell=1}^L \sqrt{d} \lvert \mE_\ell\rvert \eta_\ell L_\ell+\left (\sum_{\ell=1}^L \delta_\ell\right ) 2T \lVert \bm \rho^{-1}\rVert_1 N^2,
\end{aligned}\end{equation*}
which replicates \Cref{eq:online learning regret decomposuition in O-FTRL-FP}. We therefore face the three $\texttt{Stability}_\ell$ terms defined in \Cref{eq:stability decomposition}:
\begin{equation*}
2\underbrace{\E\left [\left \lVert \sum_{t\in \mE_\ell} (\bm \rho-\bm c_{t,i_t})-\tilde{\bm g}_\ell^{\ast}(\bm \lambda_\ell)\right \rVert_2^2\right ]}_{\text{Primal Allocations}}+2\underbrace{\E\left [\lVert \tilde{\bm g}_\ell^{\ast}(\bm \lambda_\ell)-\tilde{\bm g}_\ell^{v}(\bm \lambda_\ell)\rVert_2^2\right ]}_{\text{Empirical Estimation}}+2\underbrace{\E\left [\lVert \tilde{\bm g}_\ell^{v}(\bm \lambda_\ell)-\tilde{\bm g}_\ell^{u}(\bm \lambda_\ell)\rVert_2^2\right ]}_{\text{Untruthful Reports}},\quad \forall \ell\in [L],
\end{equation*}
where $\tilde{\bm g}_\ell^u$ is the prediction being used by \oftrlfp, $\tilde{\bm g}_\ell^v$ is defined using true values, and $\tilde{\bm g}_\ell^\ast$ is defined using true distributions. See \Cref{eq:formal loss distributions} for the formal definitions, where all $i\in [K]$ are replaced by $I\in \mathcal I$.

It only remains to extend \Cref{lem:stability term 1,lem:stability term 2,lem:stability term 3} to the multi-unit multi-demand setting. For the first \text{Primal Allocations} term, which involves the concentration of i.i.d. $(\bm \rho-\bm c_{t,\tilde i_t^\ast})$ w.r.t. its mean $\tilde{\bm g}_\ell^\ast(\bm \lambda_\ell)$ and the \algname guarantee of approximating $\{i_t\}_{t\in \mE_\ell}$ using $\{\tilde i_t^\ast\}_{t\in \mE_\ell}$, the conclusion of \Cref{lem:stability term 1} is only amplified by $N^2$ (due to the larger $\bm \rho$ and $\bm c$) and has all $N_\ell$'s replaced by $\tilde N_\ell$'s (defined in \Cref{eq:N multi-unit}):
\begin{equation*}
\E\left [\left \lVert \sum_{t\in \mE_\ell} (\bm \rho-\bm c_{t,i_t})-\tilde{\bm g}_\ell^{\ast}(\bm \lambda_\ell)\right \rVert_2^2\right ]\le (d+3) N^2 \lvert \mE_\ell\rvert+\tilde N_\ell^2 N^2.
\end{equation*}

For the \text{Empirical Estimation} term, following the proof of \Cref{lem:stability term 2}, we first apply vector Berstein to concentrate $\sum_{\tau} (\bm \rho-\bm c_{\tau,\tilde I_\tau^v(\bm \lambda_\epsilon)})$ around its mean for all $\bm \lambda_\epsilon\in \bm \Lambda_\epsilon$, which results in an extra $N^2$ in \Cref{eq:diff of F and Fv on lambda_eps} due to the larger $\bm \rho$ and $\bm c$. We then apply the multi-unit analog of \Cref{lem:covering argument} to extend this to all $\bm \lambda\in \bm \Lambda$, which results in \Cref{eq:diff of F on lambda_eps and lambda,eq:diff of Fv on lambda_eps and lambda} but with $K^2$ replaced by $S^2$ and also an extra $N^2$. Thus,
\begin{equation*}
\lVert \tilde{\bm g}_\ell^{\ast}(\bm \lambda)-\tilde{\bm g}_\ell^{v}(\bm \lambda)\rVert_2^2 \le 3d N^2 \lvert \mE_\ell\rvert^2 \cdot (S^2 \epsilon \epsilon_c)^2 + 6 d N^2 \lvert \mE_\ell\rvert^2 \cdot \frac{\log \frac{1}{\delta}+\sum_{j=1}^d \log\frac{d}{\rho_j \epsilon}}{\sum_{\ell'=1}^{\ell-1} \lvert \mE_{\ell'}\rvert},\quad \forall \bm \lambda\in \bm \Lambda
\end{equation*}
holds with probability $1-2\delta$ for all $\ell\in [L]$. Similarly, for the \text{Untruthful Reports} term, following the proof of \Cref{lem:stability term 3} but exploiting the multi-unit \algname arguments in \Cref{thm:multi-unit IAPD} and the multi-unit covering bounds of \Cref{eq:multi-unit flip prob}, all $K$'s are replaced by $S$ and we also bear an extra $N^2$. Henceforth, w.p. $1-\frac{2}{3 dT}$,
\begin{equation*}\begin{aligned}
\lVert \tilde{\bm g}_\ell^{u}(\bm \lambda)-\tilde{\bm g}_\ell^{v}(\bm \lambda)\rVert_2^2\le \frac{d N^2 \lvert \mE_\ell\rvert^2}{(\sum_{\ell'<\ell} \lvert \mE_{\ell'} \rvert)^2} \tilde M_\ell^2 +4d N^2 \lvert \mE_\ell\rvert^2 \left ((K^2 \epsilon \epsilon_c)^2 + \frac{\log (dT)+\sum_{j=1}^d \log\frac{d}{\rho_j \epsilon}}{\sum_{\ell'=1}^{\ell-1} \lvert \mE_{\ell'}\rvert}\right ),\quad \forall \bm \lambda\in \bm \Lambda,
\end{aligned}\end{equation*}
where $\tilde M_\ell$ is defined in \Cref{eq:M multi-unit} and serves as the multi-unit analog of $M_\ell$. It is also of order $\Otil(\ell)$.

Plugging these multi-unit \texttt{Stability} bounds back into the online learning regret bound \Cref{eq:online learning regret decomposuition in O-FTRL-FP}, and adopting the same $\delta_\ell$'s as in the proof of \Cref{thm:DualUpd guarantee O-FTRL-FP}, we get the claimed upper bound on $\mR_L^{\text{OL}}$.
\end{proof}

\subsection{Proofs of \Cref{thm:multi-demand,thm:multi-unit}}

\begin{proof}[Proof of \Cref{thm:multi-unit}]
Under the unit demand setting, since $v_{t,i,n}=u_{t,i,n}=0$ for all $n\ge 2$, we only needs to compare $N$ dual-adjusted reports in \Cref{eq:allocation multi-unit multi-demand,eq:payment multi-unit multi-demand}. Hence we have $S=K$. From \Cref{thm:multi-unit IAPD} and \Cref{lem:DualUpd to regret multi-unit}, $\mR_T\le \sum_{\ell=1}^L (\tilde N_\ell+3)+N(1+\max_{\ell\in [L]} \lvert \mE_\ell\rvert+\sum_{\ell=1}^L (\tilde N_\ell+3))\lVert \bm \rho^{-1}\rVert_1+\mR_L^{\text{OL}}$, where $\tilde N_\ell$ is defined in \Cref{eq:N multi-unit} and serves as a multi-unit analog of the $N_\ell$ defined in \Cref{eq:N and M}.

For the multi-unit \ftrl, from \Cref{thm:DualUpd guarantee FTRL multi-unit}, when $\Psi(\bm \lambda)=\frac 12 \lVert \bm \lambda\rVert_2^2$ and $\eta_\ell=\frac{\lVert \bm \rho^{-1}\rVert_2}{\sqrt{2d} N} (\sum_{\ell'=1}^\ell \lvert \mE_{\ell'}\rvert^2)^{-1/2}$, we have $\mR_L^{\text{OL}}\le \sqrt{2d}N \lVert \bm \rho^{-1}\rVert_2 \sqrt{\sum_{\ell=1}^L \lvert \mE_\ell\rvert^2} \lVert \bm \rho^{-1}\rVert_2$. Since $\tilde N_\ell=\Otil(S^2)=\Otil(K^2)$, under the same $L=T^{2/3}$ epoching scheme of \Cref{thm:main theorem FTRL formal}, multi-unit \mech ensures $\mR_T=\Otil\big ((K^2 + \sqrt d N) T^{2/3}\big )$.

For the multi-unit \oftrlfp, from \Cref{thm:DualUpd guarantee O-FTRL-FP multi-unit}, when $\Psi(\bm \lambda)=\frac 12 \lVert \bm \lambda\rVert_2^2$, $\eta_\ell=\frac{\lVert \bm \rho^{-1}\rVert_2}{\sqrt{2d} N}(\sum_{\ell'=1}^\ell \lvert \mE_{\ell'}\rvert)^{-1/2}$, we obtain the same $\mR_L^{\text{OL}}$ bound as in \Cref{thm:DualUpd guarantee O-FTRL-FP} except for an extra $N$. Since $\tilde N_\ell=\Otil(S^2)=\Otil(K^2)$ and $\tilde M_\ell=\Otil(\ell)$, under the same linear epoching scheme of \Cref{thm:main theorem O-FTRL-FP formal} (where $L=\Theta(\sqrt T)$ and $\lvert \mE_\ell\rvert=\ell$), multi-unit \mechO ensures $\mR_T=\Otil\big ((K^2+d N)\sqrt T\big )$.
\end{proof}

\begin{proof}[Proof of \Cref{thm:multi-demand}]
Under multi-unit multi-demand settings, from \Cref{eq:allocation multi-unit multi-demand,eq:payment multi-unit multi-demand}, we have $S=KN$. Thus the $\tilde N_\ell$ in \Cref{eq:N multi-unit} now becomes $\Otil(S^2)=\Otil(K^2N^2)$. Using the same arguments as in the proof of \Cref{thm:multi-unit} above, the multi-unit \mech mechanism gives regret $\mR_T=\Otil\big ((K^2N^2 + \sqrt d N) T^{2/3}\big )$, and the multi-unit \mechO mechanism gives regret $\mR_T=\Otil\big ((K^2 N^2+d N)\sqrt T\big )$.
\end{proof}

\section{Additional Details about Numerical Simulations}\label{sec:appendix numerical}
\paragraph{Additional Details of Continuous Q-Learning.}
To ensure boundedness, we normalize each agent $i$'s round-$t$ state as $s_{t,i}=(v_{t,i},\frac tT,\bar{\bm\lambda}_t)$, where $\bar{\bm\lambda}_t$ rescales each coordinate of $\bm\lambda_t$ by its upper bound $\rho_j^{-1}$ and clips it to $[0,1]$.
Agent $i$ approximates their Q-function $Q_i(s_{t,i},u_{t,i})$ by a two-hidden-layer ReLU network, whose parameter is denoted by $\theta_i$. Report $u_{t,i}$ is decided $\epsilon$-greedily according to their current $Q_i$ (for exploration-exploitation tradeoff): $u_{t,i}\sim \Unif[0,1]$ w.p. $\epsilon$, and otherwise it maximizes $Q_i(s_{t,i},\cdot)$ (done by grid search). The exploration probability $\epsilon$ is initialized at $1$ and decays geometrically at a rate of $0.99$ across episodes.

After the planner announcing $(i_t,\bm p_t)$, each agent collects a reward $r_{t,i}=(v_{t,i}-p_{t,i})\1[i_t=i]$, observes the next state $s_{t+1,i}$, and performs a Bellman update to Q-functions: Agent $i$ defines the loss of a $\theta_i$ (parameter of their neural network) as the Huber loss between the predicted $Q_{\theta_i}(s_{t,i},u_{t,i})$ and the realized Q value:
\begin{equation*}
\mathcal L_{t,i}(\theta_i)=\ell\left(Q_{\theta_i}(s_{t,i},u_{t,i}),\,r_{t,i}+\gamma \max_{u\in[0,1]} Q_{\theta_i}(s_{t+1,i},u)\right),
\end{equation*}
where $\gamma=0.9$ (see \Cref{def:impatient}) and $\ell(x,y)$ is the Huber loss: $\frac12(x-y)^2$ if $\lvert x-y\rvert\le 1$ and $(\lvert x-y\rvert-\frac 12)$ otherwise. Each agent updates $\theta_i$ using the Adam optimizer at a learning rate of $10^{-3}$ \citep{kingma2015adam}.

\begin{algorithm}[t!]
\caption{Non-Strategic Benchmark \texttt{DMD} \citep{balseiro2020dual} and \texttt{Simple} \citep{li2023simple}}\label{alg:DMD}
\begin{algorithmic}[1]
\For{each round $t=1,2,\ldots,T$}
\State Observe reports $\bm u_t$ and consumptions $\bm c_t$. Allocate $i_t=\argmax_{i\in\{0\}\cup[K]}\bigl(u_{t,i}-\bm\lambda_t^\trans\bm c_{t,i}\bigr)$.
\State If $\sum_{\tau\le t}\bm c_{\tau,i_\tau}\not\le T\bm\rho$, reject the allocation by setting $i_t=0$.
\State Update $\bm\lambda_{t+1}=\bigl[\bm\lambda_t-\eta_t(\bm\rho-\bm c_{t,i_t})\bigr]_+$, where $\eta_t\equiv 1\sqrt T$ in \texttt{DMD} and $\eta_t=1/\sqrt{t+1}$ in \texttt{Simple}.
\EndFor
\end{algorithmic}
\end{algorithm}

\paragraph{Non-Strategic Benchmark.}
The primal components of \texttt{DMD} \citep{balseiro2020dual} and \texttt{Simple} \citep{li2023simple} both maintain a dual variable $\bm\lambda_t\in\mathbb R_{\ge0}^d$, allocate to the agent maximizing the dual-adjusted report $u_{t,i}-\bm\lambda_t^\trans\bm c_{t,i}$, and reject any allocation violating constraints (same as \Cref{line:safety} of \Cref{alg:mech}).
On the dual side, \texttt{DMD} performs online mirror descent with the Euclidean regularizer $\Psi(\bm\lambda)=\frac12\lVert\bm\lambda\rVert_2^2$, which is equivalent to our \ftrl (\Cref{alg:lambda FTRL}) with learning rate $\eta_t\equiv T^{-1/2}$. \texttt{Simple}, whose main contribution is removing re-solving linear programs needed by \citet{agrawal2014dynamic} and \citet{kesselheim2018primal}, also performs subgradient steps to the duals. When applied to our i.i.d. setting, it also corresponds to \ftrl (\Cref{alg:lambda FTRL}) but instead with a decaying learning rate $\eta_t=\Theta(1/\sqrt t)$. We summarize their pseudo-codes in \Cref{alg:DMD}.

\paragraph{Environments in \Cref{sec:incentive-aware vs non-strategic}.}
All experiments have horizon $T=1000$ and use the same strategic-agent model. The baseline environment is online packing. The other environments modify one component of this baseline: availability, consumption structure, value-cost dependence, or the number of units and demands. In all settings, all agents learn their report strategies via Q-learning; consumptions cannot be manipulated.
\begin{enumerate}
\item In online packing, we take $K=5$ agents over $d=5$ resources. Values are drawn as $v_{t,i}\sim\Unif[0,1]$, and consumptions $c_{t,i,j}\sim\Unif[0,1]$ are drawn i.i.d.\ across rounds $t\in [T]$, agents $i\in[K]$, and dimensions $j\in [d]$. The per-round budget is homogeneous as $\bm\rho=0.25\bm 1$. This is our base setup, adapted from the online packing LP literature \citep{kesselheim2018primal}: Every allocation consumes on multiple dimensions, the per-round budget is small, and hence the planner must use duals to track multi-dimensional scarcity.
\item In random availability \citep{feldman2009online,manshadi2012online}, we keep $K=d=5$ but add stochastic feasible sets. Each agent/resource is independently available with probability $0.55$ in each round. If agent $i$ is available, then $v_{t,i}\sim\Unif[0.5,1]$ and $\bm c_{t,i}=\bm e_i$; otherwise $v_{t,i}=0$ and $\bm c_{t,i}=\bm 0$. We relax $\bm\rho=0.35\bm 1$. We still allow unavailable agents to strategically report, but such agents receive no value from misreporting.
\item In fairness constraint, we keep $K=5$ but reduce $d=2$. All values are i.i.d. $v_{t,i}\sim\Unif[0.5,1]$ (i.e., always available). Agent $1$ consumes $\bm c_{t,1}=(1,0)$, whereas agents $2,\ldots,5$ all consume $\bm c_{t,i}=(0,1)$. The budget $\bm\rho=0.35\bm 1$ remains homogeneous. The name is because persistently allocating to one side depletes its corresponding capacity, raises its dual, and pushes future allocations toward the other side.

\item Correlated value-cost inherits the online packing setting of $K=d=5$ and $\bm\rho=0.25\bm 1$ but generates values and consumptions differently. Values remain $v_{t,i}\sim\Unif[0,1]$ i.i.d. But for each dimension $j\in[d]$, w.p. $0.6$ the consumption $c_{t,i,j}$ equals the agent's value (and otherwise drawn from $\Unif[0,1]$). Thus each marginal consumption is still uniform over $[0,1]$, but higher-value agents tend to consume more. This is motivated by realistic settings where high-value requests require more resources \citep{jiang2020online}.
\item In multi-unit demand, we keep $K=5$ agents and $d=5$ resources, but allow up to $3$ identical units to be allocated in each round. We adopt the diminishing marginal-value model of \citet{robu2013online}: Each agent has $N=3$ marginal values, the first marginal value is $\min\{1,\mathrm{Exp}(1)\}$, the remaining ones are drawn uniformly below it and sorted decreasingly, and a uniformly random demand cap zeros out all marginal values beyond that cap. Since we allocate multiple units in each round, we amplify the per-round budget $\bm \rho$ accordingly to $\bm 1$ (as mentioned in \Cref{sec:extensions}). We omit the stochastic arrivals and departures of \citet{robu2013online} to match our fixed-horizon setting and test the multi-unit multi-demand \mech.
\end{enumerate}

\paragraph{Approximate \oftrlfp.}
For any candidate dual $\bm \lambda$, define its epoch-$\ell$ update map as (cf. \Cref{eq:lambda O-FTRL-FP})
\begin{equation*}
\Phi_\ell(\bm\lambda):=\argmin_{\bm\lambda'\in \bm \Lambda}\left (
\left (\sum_{\ell'<\ell}\sum_{\tau\in \mE_{\ell'}}(\bm \rho-\bm c_{\tau,i_\tau})+\tilde{\bm g}_\ell(\bm\lambda)\right )^\trans\bm\lambda'+\frac{\Psi(\bm\lambda')}{\eta_\ell}\right ),\quad \forall \bm \lambda\in \bm \Lambda.
\end{equation*}
\Cref{alg:lambda O-FTRL-FP} is equivalent to finding a $\bm \lambda_\ell\approx \Phi_\ell(\bm \lambda_\ell)$. We approximate this fixed-point problem via a damped iteration: Initializing from the previous dual $\bm \lambda_\ell^{(1)}\gets \bm \lambda_{\ell-1}$, we update $\bm \lambda_\ell^{(r+1)}\gets (1-\alpha) \bm \lambda_\ell^{(r)}+\alpha \Phi_\ell(\bm \lambda_\ell^{(r)})$ for each $r=1,2,\ldots$, where we set $\alpha=0.1$ and stop when either \textit{(i)} $\lVert \Phi_\ell(\bm\lambda_\ell^{(r)})-\bm\lambda_\ell^{(r)}\rVert_2\le 10^{-3}$, or \textit{(ii)} $r\ge 100$.

\end{document}